\documentclass[acmsmall,screen,nonacm]{acmart}
\pdfoutput=1
\settopmatter{printccs=false}


\makeatletter
\def\mdseries@tt{m}
\makeatother

\usepackage{amsbsy}
\usepackage{amscd}
\usepackage{amsfonts}
\usepackage{amsmath}
\usepackage{amsthm}
\usepackage{booktabs}  
\usepackage{fancybox}
\usepackage{latexsym}
\usepackage{microtype}
\usepackage{multirow}
\usepackage{stmaryrd}
\usepackage{tabularx}
\usepackage[labelformat=simple]{subcaption}

\usepackage{bussproofs} 
\usepackage{enumitem}   
\usepackage{framed}     
\usepackage{mathtools}  
\usepackage{mdframed}   
\usepackage{tikz-cd}    
\usepackage{wrapfig}

\usepackage[disabled]{aplcomments} 

\usetikzlibrary{shapes,arrows}

\definecolor{purple}{RGB}{153,127,186}
\definecolor{orange}{RGB}{248,154,36}
\colorlet{darkpurple}{purple!75!black}
\colorlet{darkorange}{orange!75!black}
\colorlet{lightpurple}{purple!25}
\colorlet{lightorange}{orange!25}

\newcommand{\holef}[1]{\begingroup\setlength{\fboxsep}{0pt}\colorbox{orange}{\strut\ensuremath{~#1~}}\endgroup}
\newcommand{\holeg}[1]{\begingroup\color{white}\setlength{\fboxsep}{0pt}\colorbox{purple}{\strut\ensuremath{~#1~}}\endgroup}

\newcommand{\xlo}{x_{\min}}
\newcommand{\xlen}{x_{\len}}

\tikzset{%
  point_g/.style = {draw = darkpurple, fill = purple, thick, rectangle, minimum height = 2.5em, minimum width = 2.5em, text = white},
  point_f/.style = {draw = darkorange, fill = orange, thick, rectangle, minimum height = 2.5em, minimum width = 2.5em},
  point_g_gb/.style = {draw = darkpurple, fill = lightpurple, thick, dashed, rectangle, minimum height = 2.5em, minimum width = 2.5em},
  point_f_gb/.style = {draw = darkorange, fill = lightorange, thick, dashed, rectangle, minimum height = 2.5em, minimum width = 2.5em},
}




\setlist{leftmargin=1em}

\captionsetup{labelfont=footnotesize,textfont=footnotesize,skip=0.25em,belowskip=-2ex}
\captionsetup[sub]{labelfont=scriptsize,textfont=scriptsize,skip=0.25em,belowskip=0.25ex}

\renewcommand{\implies}{\DOTSB\Rightarrow}


\newcommenter{JRK}{0.0,0.0,1.0}
\newcommenter{Alex}{0.0,1.0,0.0}
\newcommenter{glb}{1.0,0.0,0.0}

\newif\ifproofs
\proofsfalse


\newcommand{\bigeval}{\Downarrow}
\newcommand{\cpu}{\mathrm{cpu}}
\newcommand{\mem}{\mathrm{mem}}

\begingroup
\lccode`\~=`\|
\lowercase{\endgroup
\newcommand{\mkbrak}[3]{{\def~{\;\middle\vert\;}\mathcode`\|="8000 \ensuremath{\left#1 #3 \right#2} }}}

\newcommand{\tup}[1]{\mkbrak{\langle}{\rangle}{#1}}

\newcommand{\set}[1]{\mkbrak{\{}{\}}{#1}}

\DeclareMathOperator{\Div}{div}
\DeclareMathOperator{\Mod}{mod}
\DeclareMathOperator{\len}{len}
\DeclareMathOperator{\select}{select}
\DeclarePairedDelimiter{\ceil}{\lceil}{\rceil}
\DeclarePairedDelimiter{\floor}{\lfloor}{\rfloor}

\newcommand{\kwd}[1]{\operatorname{\mathbf{#1}}}


\newcommand{\svec}[1]{\overline{#1}}

\newsavebox{\valignbox}


\acmJournal{PACMPL}
\acmVolume{1}
\acmNumber{OOPSLA} 
\acmArticle{1}
\acmYear{2022}
\acmMonth{1}
\acmDOI{} 
\startPage{1}

\setcopyright{none}


\bibliographystyle{ACM-Reference-Format}

%

\author{Alex Reinking}
\email{alex_reinking@berkeley.edu}
\affiliation{%
  \institution{University of California, Berkeley}
  \country{United States}}

\author{Gilbert Louis Bernstein}
\email{gilbo@berkeley.edu}
\affiliation{%
  \institution{University of California, Berkeley}
  \country{United States}}

\author{Jonathan Ragan-Kelley}
\email{jrk@mit.edu}
\affiliation{%
  \institution{Massachusetts Institute of Technology}
  \country{United States}}

%

%
%

\begin{CCSXML}
<ccs2012>
<concept>
<concept_id>10011007.10011006.10011039.10011311</concept_id>
<concept_desc>Software and its engineering~Semantics</concept_desc>
<concept_significance>500</concept_significance>
</concept>
<concept>
<concept_id>10011007.10011006.10011050.10011017</concept_id>
<concept_desc>Software and its engineering~Domain specific languages</concept_desc>
<concept_significance>500</concept_significance>
</concept>
<concept>
<concept_id>10011007.10010940.10010992.10010993.10010994</concept_id>
<concept_desc>Software and its engineering~Functionality</concept_desc>
<concept_significance>300</concept_significance>
</concept>
<concept>
<concept_id>10011007.10010940.10010992.10010993.10010996</concept_id>
<concept_desc>Software and its engineering~Consistency</concept_desc>
<concept_significance>300</concept_significance>
</concept>
</ccs2012>
\end{CCSXML}

\ccsdesc[500]{Software and its engineering~Semantics}
\ccsdesc[500]{Software and its engineering~Domain specific languages}
\ccsdesc[300]{Software and its engineering~Functionality}
\ccsdesc[300]{Software and its engineering~Consistency}


\title{Formal Semantics for the Halide Language}

\begin{document}
\begin{abstract}
We present the first formalization and metatheory of language soundness for a user-schedulable language, the widely used array processing language Halide.
User-schedulable languages strike a balance between abstraction and control in high-performance computing by separating the specification of \emph{what} a program should compute from a schedule for \emph{how} to compute it.
In the process, they make a novel language soundness claim: the result of a program should always be the same, regardless of how it is scheduled.
This soundness guarantee is tricky to provide in the presence of schedules that introduce redundant recomputation and computation on uninitialized data, rather than simply reordering statements.
In addition, Halide ensures memory safety through a compile-time \emph{bounds inference} engine that determines safe sizes for every buffer and loop in the generated code, presenting a novel challenge: formalizing and analyzing a language specification that depends on the results of unreliable program synthesis algorithms.
Our formalization has revealed flaws and led to improvements in the practical Halide system, and we believe it provides a foundation for the design of new languages and tools that apply programmer-controlled scheduling to other domains.
\end{abstract}

\maketitle

\section{Introduction}
\label{sec:introduction}
Halide is a domain-specific language used widely in industry to build high-performance image and array processing pipelines for everything from YouTube, to every Android phone, to Adobe Photoshop~\cite{Halide:SIGGRAPH,Halide:CACM,HDRPlus,PixelVisualCore}.
As part of a new generation of \emph{user-schedulable} languages and compilers~\cite{TVM,TACO,GraphIt,Tiramisu,Taichi,TeML,Sequoia,FireIron,PolyMage},
its design separates the specification of what is to be computed, known as the \emph{algorithm}, from the specification of when and where those computations should be carried out and placed in memory, known as the \emph{schedule}, while allowing both to be supplied by the programmer.

The key value of user scheduling is that programmers are relieved from troubleshooting large classes of bugs that arise when optimizing programs for memory locality, parallelism, and vectorization because these transformations are available in the scheduling language, rather than being directly expressed in the algorithm.
This enables a schedule-centric workflow where the majority of effort is spent exploring different optimizations, not ensuring correctness after each attempt.
This allows performance engineers to optimize programs competitively with the best hand-tuned C, assembly, and CUDA implementations, but with dramatically less code and development time~\cite{Halide:CACM}.

Halide specifies algorithms in a purely functional dataflow language of infinite arrays that combines lazy and eager semantics.
The schedule then guides compilation to generate some particular eager, imperative implementation.
Halide schedules include classic loop \emph{re-ordering} transformations, but their most unique constructs ($\kwd{compute-at}$, $\kwd{store-at}$) induce non-local transformations that intentionally exploit redundant recomputation and computation on uninitialized data---transformations well outside that classic model.

So changing the schedule of a Halide program dramatically changes the computation, but when is this safe and sound?

The safety and soundness of languages with user-controlled scheduling has never been formally defined and analyzed, particularly in the presence of non-reordering transformations (\S\ref{sec:related_work}).
This paper presents the first formal definition and analysis of the core of Halide, and a general approach to the metatheory of similar languages.
We focus on proving a new safety and correctness guarantee unique to user-scheduled languages: regardless of what schedule is supplied, a given algorithm should always produce the same result and be memory-safe.

Formalizing the correctness of Halide is difficult for at least two reasons.
First, traditional formalisms for reasoning about the correctness of compiler transformations (especially loop transformations) tend to reduce to dependence analysis of imperative code.
This strategy is only applicable to re-ordering transformations, and so we must build our proofs and reasoning on a different structural basis (\S\ref{sec:overview}).
Second, Halide's design is built around a \emph{bounds inference} engine that assumes responsibility for synthesizing all loop and memory-buffer bounds.
Because the synthesis of finite bounds is undecidable in the general case with data-dependent accesses and non-affine expressions, compliant bounds inference engines must be allowed to fail. This opens the door to compliant but useless engines (which always fail).
We propose a solution to this conundrum, by specifying a reference algorithm to define minimum compliant quality (\S\ref{sec:overview-bi}).

Halide's success in industry has, for better or worse, locked it in to its early design decisions and has influenced the design of its peers.
At the same time, these systems have not historically been a subject of interest in formal programming languages.
An important consequence of our work has been to crisply define Halide's semantics and metatheory and to correct mistakes without dramatically overhauling the language (\S\ref{sec:conclusion-impact}).
We further expect this effort can help the next wave of user-schedulable languages to create even more elegant and useful systems without making the same compromises. 

This paper makes the following contributions:

\begin{itemize}
    \item We give the first complete semantics and metatheory defining \emph{sound} user-specified scheduling of a high-performance array processing language: that programs are unconditionally memory safe and that their output is not changed by scheduling decisions.
    \item We give the first precise description of the core of the practical Halide system: the algorithm language, scheduling operators, and bounds inference problem.
    \item We provide the first definition of Halide's bounds inference feature as a program synthesis problem, which was not previously understood as such.
    \item We apply our formalism to the practical Halide system, finding \& fixing several bugs and making design improvements in the process.
\end{itemize}

\section{An Example Halide Program}
\label{sec:example}
To introduce key concepts and build intuition for the formalism to follow, we'll consider a minimal example that showcases the challenges present in analyzing the scheduling language.
Our example algorithm consists of two \emph{``funcs''}, defined like so:
\begin{align*}
& \hspace{0em} \kwd{pipeline} f(): \\
& \hspace{1em}   \kwd{fun} g(x) = \ldots; \\
& \hspace{1em}   \kwd{fun} f(x) = g(x) + g(x+1);
\end{align*}
A \emph{func} is the basic unit of computation.
Halide funcs are defined on unbounded, $n$-dimensional, integer lattices, and are \emph{not} bounded, multidimensional arrays.
The body of $g$ is deliberately left undefined since it is irrelevant to the upcoming discussion.
The formula describing a func must be \emph{total} and \emph{non-recursive}, so that any window of the func has defined values.
To run the pipeline, the user supplies as \emph{input}%
\footnote{For simplicity of formalization, we omit special treatment of \emph{input} funcs, modeling these instead as procedural funcs with no dependencies.
However, the practical system does support input arrays (whose bounds must be checked for consistency when the program starts running).}
a desired window over which to compute the last func in the pipeline.
The program then returns an array (formalized as partial functions) containing the computed values and which might be larger than requested.
The computational model is therefore \emph{demand-driven}, unlike most contemporary array languages.
We will illustrate the evaluation of $f$ on the window $[0,6)$, which will in turn necessitate computing $g$ on at least the window $[0,7)$.

\begin{figure*}[t]
\begin{subfigure}[b]{0.49\textwidth}
\centering
\small
\begin{tikzpicture}[remember picture]
\node(1S){\emph{(default schedule)}} ;
\end{tikzpicture} \\
\begin{tikzpicture}[auto, thick, node distance=2.5em, >=latex]
    \draw
    	node at (0,0) [point_g] (g0) {$g(0)$}
    	node [point_g, right of=g0] (g1) {$g(1)$}
    	node [point_g, right of=g1] (g2) {$g(2)$}
    	node [point_g, right of=g2] (g3) {$g(3)$}
    	node [point_g, right of=g3] (g4) {$g(4)$}
    	node [point_g, right of=g4] (g5) {$g(5)$}
    	node [point_g, right of=g5] (g6) {$g(6)$}
    ;
    \draw
    	node at (0,-5em) [point_f] (f0) {$f(0)$}
    	node [point_f, right of=f0] (f1) {$f(1)$}
    	node [point_f, right of=f1] (f2) {$f(2)$}
    	node [point_f, right of=f2] (f3) {$f(3)$}
    	node [point_f, right of=f3] (f4) {$f(4)$}
    	node [point_f, right of=f4] (f5) {$f(5)$}
    ;
    \draw[->](f0) -- node {} (g0);
    \draw[->](f0) -- node {} (g1);
    \draw[->, draw=black!50](f1) -- node {} (g1);
    \draw[->, draw=black!50](f1) -- node {} (g2);
    \draw[->, draw=red!80!black]
      ([yshift=1em]g0.base) -- ([yshift=1em]g6.base);
    \draw[->, draw=red!80!black, dotted]
      ([yshift=1em]g6.base) -- ([yshift=1em]f0.base);
    \draw[->, draw=red!80!black]
      ([yshift=1em]f0.base) -- ([yshift=1em]f5.base);
\end{tikzpicture}
\caption{Points of $g$ are shown in purple, and points of $f$ are shown in orange. Dark arrows indicate dependencies; for clarity, only those for $f(0)$ and $f(1)$ are shown. Red arrows trace computation; all points of $g$ are computed first, and only then is $f$ computed.}
\label{fig:example-basic-diagram}
\end{subfigure}\hfill%
\begin{subfigure}[b]{0.49\textwidth}
\centering
\footnotesize
\begin{tikzpicture}[remember picture, auto, thick, node distance=2.5em, >=latex]
  \node(1IR){$
  \begin{aligned}
  & \kwd{pipeline} f(\xlo,\xlen): \\
  & \hspace{1em} \kwd{allocate} g(\holeg{?^{\mem}g_x}); \\
  & \hspace{1em} \kwd{label} g :
    \hspace{0em}   \kwd{label} s_0 : \\
  & \hspace{2em}     \kwd{for} x \kwd{in} ~\holeg{?^{\cpu}g_x} \kwd{do} \\
  & \hspace{3em}        g[x] \leftarrow \ldots; \\
  & \hspace{1em} \kwd{allocate} f(\holef{?^{\mem}f_x}); \\
  & \hspace{1em} \kwd{label} f :
    \hspace{0em}   \kwd{label} s_0 : \\
  & \hspace{2em}     \kwd{for} x \kwd{in} ~\holef{(\xlo, \xlen)} \kwd{do} \\
  & \hspace{3em}       f[x] \leftarrow g[x] + g[x+1];
  \end{aligned}$};
\end{tikzpicture}
\begin{tikzpicture}[remember picture, overlay, >=latex]
\path (1S) edge[->,out=5,in=150] node[midway,fill=white] {\emph{results in}} (1IR);
\end{tikzpicture}
\caption{Loop nest IR with default schedule. Note the \emph{holes} that will be filled in by bounds inference.}
\label{fig:example-basic-ir}
\end{subfigure}
\caption{Example program with default execution order.}
\label{fig:example-basic}
\end{figure*}
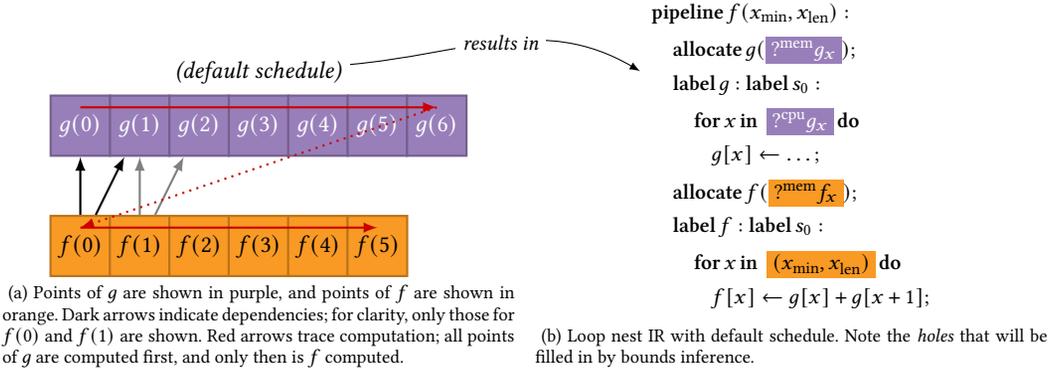
\begin{figure*}[t]
\begin{subfigure}[b]{0.49\textwidth}
\centering
\footnotesize
\begin{tikzpicture}[remember picture]
\node(2S){$
\begin{aligned}
& \kwd{split}(\tup{f,x},xo,xi,3) ;\\
& \kwd{compute-at}(g, \tup{f,xo}) ;\\
& \kwd{store-at}(g, \tup{f,xo}) ;
\end{aligned}
$};
\end{tikzpicture} \\
\begin{tikzpicture}[auto, thick, node distance=2.5em, >=latex]
    \draw
    	node at (0,0) [point_g] (g0) {$g(0)$}
    	node [point_g, right of=g0] (g1) {$g(1)$}
    	node [point_g, right of=g1] (g2) {$g(2)$}
    	node [point_g, right of=g2] (g3) {$g(3)$}
    	node [point_g, right of=g3] (g4) {$g(4)$}
    	node [point_g, right of=g4] (g5) {$g(5)$}
    	node [point_g, right of=g5] (g6) {$g(6)$}
    ;
    \draw
    	node at (0,-5em) [point_f] (f0) {$f(0)$}
    	node [point_f, right of=f0] (f1) {$f(1)$}
    	node [point_f, right of=f1] (f2) {$f(2)$}
    	node [point_f, right of=f2] (f3) {$f(3)$}
    	node [point_f, right of=f3] (f4) {$f(4)$}
    	node [point_f, right of=f4] (f5) {$f(5)$}
    ;
    \draw[very thick, black, dashed] ([xshift=-0.5pt, yshift=-0.5em]f2.south east) -- ([xshift=-0.5pt, yshift=0.5em]f2.north east);
    \draw[->, draw=black](f2) -- node {} (g3);
    \draw[->, draw=black](f3) -- node {} (g3);
    \draw[->, draw=red!80!black]
      ([yshift=1em]g0.base) -- ([yshift=1em]g3.base);
    \draw[->, draw=red!80!black, dotted]
      ([yshift=1em]g3.base) -- ([yshift=1em]f0.base);
    \draw[->, draw=red!80!black]
      ([yshift=1em]f0.base) -- ([yshift=1em]f2.base);
    \draw[->, draw=red!80!black, dotted]
      ([yshift=1em]f2.base) to [out=90,in=180] ([yshift=-0.5em]g3.base);
    \draw[->, draw=red!80!black]
      ([yshift=-0.5em]g3.base) -- ([yshift=-0.5em]g6.base);
    \draw[->, draw=red!80!black, dotted]
      ([yshift=-0.5em]g6.base) -- ([yshift=1em]f3.base);
    \draw[->, draw=red!80!black]
      ([yshift=1em]f3.base) -- ([yshift=1em]f5.base);
\end{tikzpicture}
\caption{This time the computation of $f$ is split by 3, and $g$ is \emph{interleaved} per-tile. Bounds inference causes $g(3)$ to be \emph{redundantly recomputed}. This allows each tile of $f$ to be trivially parallelized.}
\label{fig:example-recompute-diagram}
\end{subfigure}\hfill%
\begin{subfigure}[b]{0.49\textwidth}
\centering
\footnotesize
\begin{tikzpicture}[remember picture]
\node(2IR){$
\begin{aligned}
  & \kwd{pipeline} f(\xlo,\xlen): \\
  & \hspace{1em} \kwd{allocate} f(\holef{?^{\mem}f_x}); \\
  & \hspace{1em} \kwd{label} f :
    \hspace{0em}   \kwd{label} s_0 : \\
  & \hspace{2em}     \kwd{for} xo \kwd{in} ~\holef{(\xlo, (\xlen + 2)/3)} \kwd{do} \\
  & \hspace{3em}       \holeg{\text{compute $g$ as needed}} \\
  & \hspace{3em}       \kwd{for} xi \kwd{in} ~\holef{[0, 3)} \kwd{do} \\
  & \hspace{4em}         \kwd{let} x = \xlo + 3 xo + xi ~\kwd{in} \\
  & \hspace{4em}         \kwd{if} x < \xlo + \xlen \kwd{then} \\
  & \hspace{5em}           f[x] \leftarrow g[x] + g[x+1];
\end{aligned}
$};
\end{tikzpicture}
\begin{tikzpicture}[remember picture, overlay, >=latex]
\path (2S) edge[->,out=15,in=150] node[midway,fill=white] {\emph{results in}} (2IR);
\end{tikzpicture}
\caption{Loop nest IR with recomputing schedule. The loops for $g$ have been abbreviated for space, but are identical to those in Figure \ref{fig:example-basic-ir}.}
\label{fig:example-recompute-ir}
\end{subfigure}
\caption{Example program after tiling $f$ by 3. This schedule illustrates Halide's ability to introduce \emph{redundant recomputation} to achieve better producer-consumer locality.}
\label{fig:example-recompute}
\end{figure*}
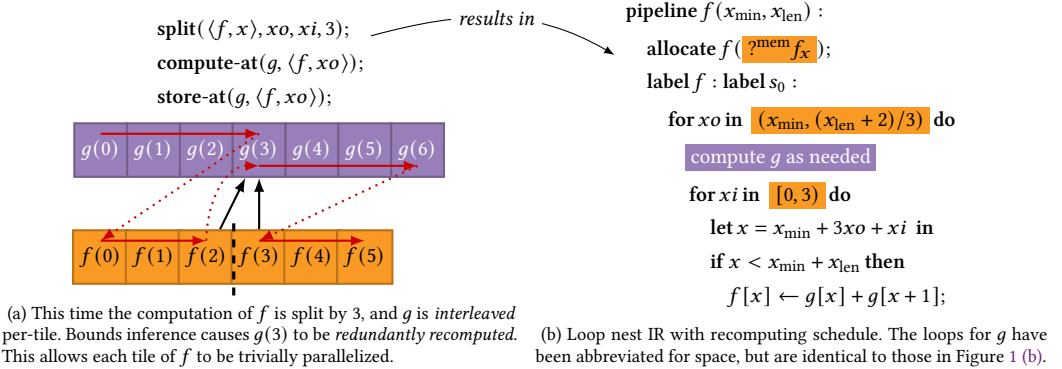
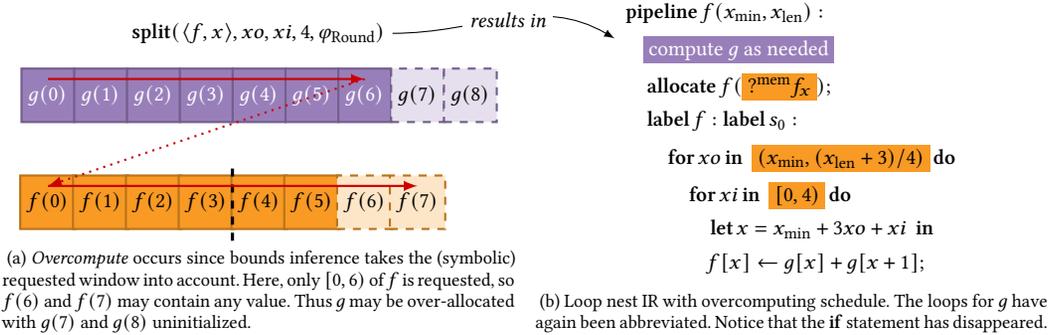
\begin{figure*}[t]
\begin{subfigure}[b]{0.49\textwidth}
\centering
\footnotesize
\begin{tikzpicture}[remember picture]
\node[inner ysep=1em](3S){$\kwd{split}(\tup{f,x},xo,xi,4,\varphi_\text{Round})$};
\end{tikzpicture}
\begin{tikzpicture}[auto, thick, node distance=2.5em, >=latex]
    \draw
    	node at (0,0) [point_g] (g0) {$g(0)$}
    	node [point_g, right of=g0] (g1) {$g(1)$}
    	node [point_g, right of=g1] (g2) {$g(2)$}
    	node [point_g, right of=g2] (g3) {$g(3)$}
    	node [point_g, right of=g3] (g4) {$g(4)$}
    	node [point_g, right of=g4] (g5) {$g(5)$}
    	node [point_g, right of=g5] (g6) {$g(6)$}
    	node [point_g_gb, right of=g6] (g7) {$g(7)$}
    	node [point_g_gb, right of=g7] (g8) {$g(8)$}
    ;
    \draw
    	node at (0,-5em) [point_f] (f0) {$f(0)$}
    	node [point_f, right of=f0] (f1) {$f(1)$}
    	node [point_f, right of=f1] (f2) {$f(2)$}
    	node [point_f, right of=f2] (f3) {$f(3)$}
    	node [point_f, right of=f3] (f4) {$f(4)$}
    	node [point_f, right of=f4] (f5) {$f(5)$}
    	node [point_f_gb, right of=f5] (f6) {$f(6)$}
    	node [point_f_gb, right of=f6] (f7) {$f(7)$}
    ;
    \draw[very thick, black, dashed] ([xshift=-1pt, yshift=-0.5em]f3.south east) -- ([xshift=-1pt, yshift=0.5em]f3.north east);
    \draw[->, draw=red!80!black]
      ([yshift=1em]g0.base) -- ([yshift=1em]g6.base);
    \draw[->, draw=red!80!black, dotted]
      ([yshift=1em]g6.base) -- ([yshift=1em]f0.base);
    \draw[->, draw=red!80!black]
      ([yshift=1em]f0.base) -- ([yshift=1em]f7.base);
\end{tikzpicture}
\caption{\emph{Overcompute} occurs since bounds inference takes the (symbolic) requested window into account. Here, only $[0,6)$ of $f$ is requested, so $f(6)$ and $f(7)$ may contain any value. Thus $g$ may be over-allocated with $g(7)$ and $g(8)$ uninitialized.}
\label{fig:example-overcompute-diagram}
\end{subfigure}\hfill%
\begin{subfigure}[b]{0.49\textwidth}
\centering
\footnotesize
\begin{tikzpicture}[remember picture]
\node(3IR){$
\begin{aligned}
  & \kwd{pipeline} f(\xlo,\xlen): \\
  & \hspace{1em} \holeg{\text{compute $g$ as needed}} \\
  & \hspace{1em} \kwd{allocate} f(\holef{?^{\mem}f_x}); \\
  & \hspace{1em} \kwd{label} f :
    \hspace{0em}   \kwd{label} s_0 : \\
  & \hspace{2em}     \kwd{for} xo \kwd{in} ~\holef{(\xlo, (\xlen + 3)/4)} \kwd{do} \\
  & \hspace{3em}       \kwd{for} xi \kwd{in} ~\holef{[0, 4)} \kwd{do} \\
  & \hspace{4em}         \kwd{let} x = \xlo + 3 xo + xi ~\kwd{in} \\
  & \hspace{4em}         f[x] \leftarrow g[x] + g[x+1];
\end{aligned}
$};
\end{tikzpicture}
\begin{tikzpicture}[remember picture, overlay, >=latex]
\path (3S) edge[->,out=0,in=150] node[midway,fill=white] {\emph{results in}} (3IR);
\end{tikzpicture}
\caption{Loop nest IR with overcomputing schedule. The loops for $g$ have again been abbreviated. Notice that the $\kwd{if}$ statement has disappeared.}
\label{fig:example-overcompute-ir}
\end{subfigure}
\caption{Example program after vectorizing $f$ and rounding up. This schedule illustrates Halide's ability to introduce \emph{overcompute} on \emph{uninitialized values} to trade-off between compute and storage efficiency.}
\label{fig:example-overcompute}
\end{figure*}

In order to recover an imperative implementation, we \emph{lower} the pipeline into a second, imperative, target language with C-like semantics.
While there exist sensible choices for loop iteration bounds and buffer sizes, our algorithm never specified these.
Therefore, this initial lowered program leaves symbolic \emph{holes} in the code (prefixed with `$?$').

To fill these holes, Halide performs \emph{bounds inference}, which we formalize as a program synthesis problem.
We assume a bounds inference oracle that returns expressions to fill every hole, satisfying derived memory safety and correctness conditions.
However, since this oracle is only required to meet safety and correctness conditions, there is no guarantee on the (parameterized) minimality of memory allocations or loop bounds.
We will discuss this complication in more detail shortly.

Now we will look at three different ways to schedule our pipeline.

First, the default schedule (Figure \ref{fig:example-basic}) computes all values required from each func before progressing to the next func, in order of their definition.
Note that $x_\textrm{min}$ and $x_\textrm{len}$ are variables specifying the output window $[x_\textrm{min}, x_\textrm{min}+x_\textrm{len})$.
Bounds inference could efficiently fill the hole $?^\textrm{mem} f_x$ with $[x_\textrm{min}, x_\textrm{min}+x_\textrm{len})$ and the holes $?^\textrm{cpu}g_x$ and $?^\textrm{mem}g_x$ with $[x_\textrm{min}, x_\textrm{min}+x_\textrm{len}+1)$.
An inefficient solution could fill $?^\textrm{mem}g_x$ with $[x_\textrm{min}, x_\textrm{min}+2\cdot x_\textrm{len})$, but $[x_\textrm{min}, x_\textrm{min}+x_\textrm{len}-1)$ is not allowed because the last access would write out of bounds.

For our second schedule of $f$ (Figure \ref{fig:example-recompute}), we will tile it so that we can parallelize it, computing $f$ in independent $3$-element-wide tiles.
This first scheduling directive says to $\kwd{split}(\tup{f, x}, xo, xi, 3)$ the computation of $f$ along dimension $x$ by a factor of $3$ into an outer iteration dimension $xo$ and inner iteration dimension $xi$.
Then, the second directive tells us $\kwd{compute-at}(g, \tup{f,xo})$, meaning to re-compute the necessary portion of $g$ at $f$, within iteration level $xo$, and then to $\kwd{store-at}(g, \tup{f,xo})$, similarly.
In terms of imperative code, this is simply a relocation of the loop nest computing $g$.
Bounds inference will now be able to infer much \emph{tighter} bounds on $g$, since it only needs to be computed on a per-tile basis.
For a given value of $xo$, only $4$ values of $g$ need to be computed and stored for use by the $xi$ loop.

Notice that the windows of $g$ required by adjacent tiles of $f$ overlap by one element.
In Figure \ref{fig:example-recompute-diagram} we can see that $g(3)$ is required by both tiles of $f$ because both $f(2)$ and $f(3)$ depend on it.
This ability to reduce synchronization and improve locality at the expense of redundant recomputation is at the heart of why Halide is able to generate high-performance code for modern micro-processors.
It is also one reason why only using re-ordering loop transformations is insufficient.

For our third and final schedule (Figure \ref{fig:example-overcompute}), we will tile the computation of $f$ in order to take advantage of fixed-width SIMD instructions (present on most CPUs today).
To do so, we call $\kwd{split}$ again, but now provide an alternate \emph{tail-strategy}: $\varphi_\text{Round}$.
Rather than introducing an if-guard, this strategy will cause $f$ to be \emph{unconditionally} evaluated in 4-wide tiles.
If the requested window is not a multiple of 4, it will be rounded up and extra points will be computed.

Perhaps counter-intuitively, this \emph{over-compute} strategy requires \emph{fewer} instructions in a vectorized implementation, since the entire loop tail can be computed with a single instruction, rather than a variable number of scalar operations (e.g. in a loop epilogue).
However, whereas the window of allocated, computed, and valid values all coincided before, those 3 windows now all uncouple.
For the requested window $[0,6)$, $f$ is allocated and computed on the window $[0,8)$, whereas $g$ is allocated on the window $[0,9)$ and computed on the window $[0,7)$.
Since neither $g(7)$ nor $g(8)$ are initialized, the values in $f(6)$ and $f(7)$ are themselves uninitialized.

The IR programs in Figures \ref{fig:example-basic-ir}, \ref{fig:example-recompute-ir}, and \ref{fig:example-overcompute-ir} show an important benefit of user-specified scheduling: in a traditional high performance language like C, a programmer would need to write loops and derive compute bounds by hand.
By instead factoring these rewrites into a small scheduling language, Halide programmers can efficiently explore the space of safe, equivalent programs.
In this paper, we explain how---formally---this promise to Halide programmers is justified.

\section{Overview \& Proof Structure}
\label{sec:overview}
Given an initial program $P_0$, and a schedule $S= s_1;\hdots; s_n$, let $P_i$ be the result of applying the scheduling primitive $s_i$ to $P_{i-1}$.
Intuitively, if we can show that our soundness property is \emph{invariant} under each possible primitive, then it must hold.

In loop-nest optimization this invariant was traditionally specified via dependence graphs: first over lexical statements, and then over whole iteration spaces of \emph{statement instances}.
For instance, \citet{Kennedy:textbook} formulate this invariant as the \emph{Fundamental Theorem of Dependence}, which states ``any reordering transformation that preserves every dependence in a program preserves the meaning of that program.''
Unfortunately, this approach is strictly limited to verifying the soundness of \emph{reordering transformations}, i.e. those transformations that permute the order in which statement instances occur, but never change those statements, duplicate them, or introduce new ones.

We resolve this problem in our proof structure by including the original \emph{provenance} of the programs $P_i$ in our sound\-ness invariant.
In Halide, this original reference program is the \emph{algorithm}, which is expressed in a functional, rather than imperative, language.
Since the functional algorithm does not specify order of execution, nor where and how values are stored in memory buffers, this soundness principle accommodates a greater range of transformations.

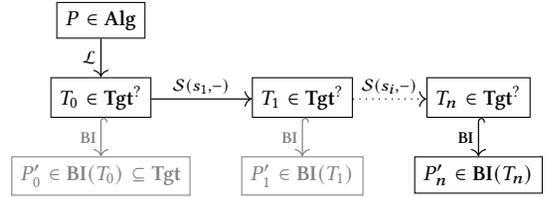
\begin{wrapfigure}{r}{0.51\textwidth}
\centering
\footnotesize
\begin{tikzcd}[cells={nodes={draw}}]
    P \in \kwd{Alg}       \arrow[d, "{\mathcal{L}}"'] & & \\
    T_0 \in \kwd{Tgt}^{?} \arrow[d, hookrightarrow, "{\kwd{BI}}"', gray]
                          \arrow[r, "{\mathcal{S}(s_1, -)}"] 
  & T_1 \in \kwd{Tgt}^{?} \arrow[d, hookrightarrow, "{\kwd{BI}}"', gray]
                          \arrow[r, dotted, "{\mathcal{S}(s_i, -)}"] 
  & T_n \in \kwd{Tgt}^{?} \arrow[d, hookrightarrow, "{\kwd{BI}}"'] \\
    |[gray]| P_0' \in \kwd{BI}(T_0) \subseteq \kwd{Tgt}
  & |[gray]| P_1' \in \kwd{BI}(T_1)
  & P_n' \in \kwd{BI}(T_n)
\end{tikzcd}
\caption{Compiling a program $P$ with schedule $S=s_1;\dots;s_n$ entails first lowering $P$ to an IR program $T_0$ and then applying each directive $s_i$ in turn. At the end, the bounds oracle $\kwd{BI}$ fills the holes in $T_n$ to produce a final program $P_n'$. Shown in gray are intermediate bounds oracle queries that are useful in the formal analysis, but are not part of compilation.}
\label{fig:overview-system}
\end{wrapfigure}
Finally, loop transformations on array code inevitably require complicated reasoning about various sets of bounds (e.g., for memory allocation) as transformations are performed.
In Halide, these complexities are managed by deferring bounds analyses until \emph{after} scheduling is performed, and by offloading those decisions to a \emph{bounds inference} engine, which we treat here as an oracle.
More generally, we expect that advances in program synthesis will only make the transformation of incomplete programs more common; this general approach should work in a variety of new language designs.

In order to handle the transformation of programs with holes, our soundness invariant must be stated on sets of programs (completions) rather than individual programs.
Thus, rather than state that each transformed program is consistent with the algorithm reference, we require that all completions are consistent---where defined.

\subsection{Basic Definitions}
Halide programs are specified via an algorithm program denoted $P \in\kwd{Alg}$, written in a functional language with big-step semantics, defined in \S\ref{sec:alg_lang}.
We immediately \emph{lower} this algorithm into an imperative target language program \emph{with holes} denoted $T\in\kwd{Tgt}^?$.
This language is defined in \S\ref{sec:target_lang}.
Lowering is specified via a function $T = \mathcal{L}(P)$, defined in \S\ref{sec:lowering}.

The lowered program is incomplete because it is missing various bounds.
Halide's bounds inference completes a program with holes $T\in\kwd{Tgt}^?$ into a program without holes $P' \in\kwd{Tgt}$.
We model the bounds-inference procedure as a non-deterministic oracle $\kwd{BI}$, which defines a set of completions $\kwd{BI}(T)$ via a syntactically derived synthesis problem, and returns one as specified in \S\ref{sec:bounds}.
Thus, $P' \in \kwd{BI}(T) \subseteq \kwd{Tgt}$.

The schedule for a Halide program is specified as a sequence of primitive scheduling directives $S= s_1;\dots; s_n$, defined in \S\ref{sec:scheduling}.
Scheduling proceeds by sequentially transforming a target program with holes $T_0 = \mathcal{L}(P)$ by each subsequent scheduling directive such that $T_{i+1} = \mathcal{S}(s_i, T_i)$.
Consequently, a set of completions $\kwd{BI}(T_i)$ is defined at each point in the scheduling process.
These intermediate completions are not used when simply compiling a program, but are essential to analyzing the behavior of a scheduling directive by relating the sets before and after the transformation.
The fully scheduled program is given as $P'_n \in \kwd{BI}(\mathcal{S}(S,\mathcal{L}(P)))$.
This structure is depicted in Figure~\ref{fig:overview-system}.

\subsection{Equivalence and Soundness of Programs}
Halide makes two fundamental promises to programmers: memory safety and equivalence under scheduling transformations.
Here is how we formulate those promises.

Programs are defined as functions of \emph{input parameters} and an \emph{output window}.
Thus, the same program can be run multiple times to compute different windows into a conceptually unbounded output array.

\begin{definition}[Input and Output]\label{def:meta-inputs}
Let $P\in\kwd{Alg}$ or $P\in\kwd{Tgt}$ be a program with $m$ parameters and an $n$ dimensional output func.
An \emph{input} $z$ to $P$ is an assignment to those $m$ parameters and an assignment of $n$ constant intervals defining an output window $R(z)$.
The \emph{output} of running $P$ on $z$ is a partial function $f = P(z)$ where $f(x)$ is defined on at least all $x\in R(z)$.
\end{definition}

For a variety of reasons, a program may produce \emph{more} than the requested output window $R(z)$.
A program may even allocate padding space and fill it with garbage values in order to align storage and/or computation.
For these reasons, we only define equivalence up to agreement on the specified output window:

\begin{definition}[Output equivalence]\label{def:output-equiv}
Let each of $P$ and $P'$ be either an algorithm or target language program, with common input $z$. We say they have equivalent outputs $P \simeq_z P'$, if for every point $x \in R(z)$, $P(z)(x)=P'(z)(x)$.
\end{definition}

This definition of equivalence is sufficient to compare two complete programs.
However, because our soundness invariant must be stated on incomplete programs $T \in \kwd{Tgt}^?$, we will define the \emph{confluence} of an algorithm with all completions of $T$.
We will also have to account for certain exceptional cases in which the output may actually not be equivalent.
Namely, if the original algorithm contains errors, then all bets are off, and if the completion of the program fails to satisfy bounds-constraints, then equivalence cannot be guaranteed.
This latter case should be concerning; we will address it shortly.

\begin{definition}[Algorithm confluence]\label{def:confluence}
Let $P\in\kwd{Alg}$ be a Halide algorithm and let $T\in\kwd{Tgt}^?$ be a target language program with holes.
We say that $T$ is \emph{confluent} with $P$ if for all $P' \in\kwd{BI}(T)$ and all inputs $z$, either $P(z)$ contains an \emph{error value} in $R(z)$, $P'(z)$ \emph{fails an assertion check}, or $P \simeq_z P'$.
Error values and assertion failures are detailed in \S\ref{sec:alg_lang-expressions} and \S\ref{sec:target_lang-semantics}, respectively.
\end{definition}

We are now able to state the two fundamental theorems about Halide.
In stating these theorems, we assume that algorithm language programs are \emph{valid}~(\S\ref{sec:alg_lang-rules}), as are schedules~(\S\ref{sec:scheduling}).

\begin{theorem}[Memory safety]\label{thm:meta-safety}
Let $P\in\kwd{Alg}$ be a valid program, $z$ an input, and $S$ a valid schedule.
Then, for all target language programs $P' \in \kwd{BI}(\mathcal{S}(S,\mathcal{L}(P)))$, the computation $P'(z)$ will not access any out of bounds memory (\S\ref{sec:target_lang-semantics}).
\end{theorem}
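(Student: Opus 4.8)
The plan is to reduce the theorem to the \emph{contract} of the bounds-inference oracle together with one bridging lemma that turns the oracle's \emph{syntactic} guarantee into the desired \emph{dynamic} property. Unfolding the definitions, any completion $P' \in \kwd{BI}(\mathcal{S}(S,\mathcal{L}(P)))$ is, by the specification of $\kwd{BI}$ (\S\ref{sec:bounds}), a hole-filling of the single fixed program $T_n = \mathcal{S}(S,\mathcal{L}(P))$ that satisfies every memory-safety constraint of the synthesis problem derived syntactically from $T_n$. The schedule and the lowering therefore enter only through $T_n$: safety of $P'$ depends solely on $T_n$ and the constraints read off from it, so---unlike the equivalence theorem---\emph{no per-directive invariant over the $s_i$ is needed here}. (If $\kwd{BI}(T_n)=\emptyset$ the statement is vacuous, so we may assume a completion exists.) Note too that ``memory safe'' means only that accesses stay \emph{in bounds}: reading the uninitialized padding produced by an overcompute schedule is still in bounds, so the safety constraints---not any value-definedness condition---are exactly what we need.

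The core is a bridging lemma I would establish in \S\ref{sec:bounds}: \emph{if a complete program $P'\in\kwd{Tgt}$ satisfies all memory-safety constraints generated from its own syntax, then for every input $z$ the execution $P'(z)$ makes no out-of-bounds access.} I would prove this by induction on the derivation of $P'(z)$ under the target semantics (\S\ref{sec:target_lang-semantics}), carrying an invariant that ties each runtime state to the constraint context reaching it: the current environment---the values bound to the input parameters and to all enclosing loop variables---is a valid instantiation of the index variables universally quantified in the constraints, and each currently-live buffer has exactly the size the oracle assigned it. Then at every read or write $a[e]$, the derived constraint for that access, instantiated at the current environment, says precisely that $e$ lies in the allocated region of $a$, discharging the obligation. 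A $\kwd{for}$ loop extends the environment by a loop-variable value drawn from its now hole-free bounds, re-establishing the invariant for the body; an $\kwd{allocate}$ installs a buffer of exactly the assumed size; and $\kwd{let}$ and $\kwd{if}$ preserve the invariant routinely.

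The main obstacle is showing that the generated synthesis problem is both \emph{sound} and \emph{complete} for the accesses that actually occur, in the presence of the non-reordering constructs. Because $\kwd{compute-at}$ and $\kwd{store-at}$ relocate an allocation inside enclosing loops, a buffer's live region is re-created on each outer iteration with a size depending on the enclosing loop variables; the generator must accordingly quantify each access condition over all enclosing loop variables, and the induction must maintain that the runtime environment is always exactly one such instantiation. Lining up this quantifier structure---so that the constraint available at an access is the very one the current environment instantiates---is the delicate step, and it is where \emph{validity} of $P$ and $S$ (\S\ref{sec:alg_lang-rules}, \S\ref{sec:scheduling}) is essential: validity guarantees $T_n$ is well-scoped, so every accessed buffer is live and carries a generated size constraint, and it guarantees that the generator traverses every access and every loop. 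Once the generator's soundness and completeness are in hand, the lemma's induction goes through and the theorem follows by instantiating the lemma at $P'$.
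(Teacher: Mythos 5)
Your proposal is correct and at its core is the same argument the paper gives: the paper discharges this theorem via Lemma~\ref{lem:bounds-memory-safety}, whose entire content is that (1) the access rule of the extraction function $\mathcal{B}$ (\S\ref{sec:bounds-extraction}) emits, for every read and write, the constraint that the accessed point lie in the allocation bounds $?^{\mem}f$, so any completion in $\kwd{BI}(T)$ satisfies these constraints by the oracle's contract, and (2) every access is dominated by the corresponding $\kwd{allocate}$ statement, so the $\mathrm{InBounds}$ side conditions of [Read] and [Assn] always hold at runtime. Your bridging lemma, with its induction on execution carrying the environment-instantiation invariant, is a rigorous elaboration of what the paper compresses into one sentence, and your identification of the quantifier-alignment issue for relocated allocations (each enclosing $\kwd{for}$ contributing a $\forall$, with the $\exists\,?^{\mem}f$ scoped beneath it) matches the actual structure of $\mathcal{B}$ in Figure~\ref{fig:bounds-extraction}.

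One caveat: your claim that, unlike the equivalence theorem, ``no per-directive invariant over the $s_i$ is needed'' is overstated, and the place where you paper over it is exactly the ``delicate step'' you flag. The liveness/well-scopedness you derive from validity --- every accessed buffer is live with a generated size constraint in scope --- is precisely the paper's Dominance lemma (Lemma~\ref{lem:dominance}), which holds just after lowering only by construction and which the paper explicitly says each scheduling directive must be shown to maintain; indeed, the appendix proof for the storage phase (Lemma~\ref{lem:storage-sound}) consists of nothing but the observation that $\kwd{store-at}$ preserves this dominance. Your appeal to schedule validity does ultimately secure the property, because the side conditions on $\kwd{compute-at}$ and $\kwd{store-at}$ (\S\ref{sec:scheduling-compute}, \S\ref{sec:scheduling-storage}) build dominance preservation into the definition of a valid schedule --- but that is a per-directive structural invariant in disguise, not something visible from $T_n$ alone without knowing how valid schedules are constrained. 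Stated carefully (validity $\Rightarrow$ dominance of allocations over accesses, maintained inductively through the directives), your proof goes through and coincides with the paper's; stated as written, the denial of per-directive reasoning is in tension with the very premise your bridging lemma needs.
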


Memory safety will be guaranteed by the bounds inference oracle.
The problem posed to this oracle is defined in \S\ref{sec:bounds} such that safety is provided by construction.

\begin{theorem}[Scheduling equivalence]\label{thm:meta-equiv}
Let $P\in\kwd{Alg}$ be a valid program and $S$ any valid schedule.
Then all target language programs $P' \in \kwd{BI}(\mathcal{S}(S,\mathcal{L}(P)))$ are confluent with $P$.
\end{theorem}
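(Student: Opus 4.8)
The plan is to prove Theorem~\ref{thm:meta-equiv} by induction on the length $n$ of the schedule $S = s_1;\dots;s_n$, following exactly the invariance strategy sketched in \S\ref{sec:overview}: I will show that \emph{confluence with $P$} (Definition~\ref{def:confluence}) holds of the lowered program $T_0 = \mathcal{L}(P)$ and is preserved by each scheduling directive. Since $T_n = \mathcal{S}(S,\mathcal{L}(P))$, establishing that $T_n$ is confluent with $P$ is precisely the theorem's conclusion. Throughout, I fix an arbitrary completion $P' \in \kwd{BI}(T_i)$ and input $z$; if $P(z)$ has an error value in $R(z)$ or $P'(z)$ fails an assertion check, the corresponding disjunct of Definition~\ref{def:confluence} discharges the goal immediately, so the substantive work is always to establish $P \simeq_z P'$ under the assumption that neither escape hatch fires.

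For the base case I would prove a \emph{lowering lemma}: every completion $P' \in \kwd{BI}(\mathcal{L}(P))$ satisfies $P \simeq_z P'$ whenever the escape hatches do not fire. This reduces to relating the big-step functional semantics of the algorithm language (\S\ref{sec:alg_lang}) to the imperative execution of the lowered loop nest (\S\ref{sec:target_lang-semantics}). Concretely, I would argue by induction on the dependency structure of the funcs that for each output point $x \in R(z)$ the imperative program writes to $f[x]$ exactly the value the functional semantics assigns to $f(x)$, using that bounds inference guarantees every read $g[x']$ needed to compute a point of $R(z)$ is backed by a previously computed, hence semantically correct, store.

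For the inductive step I would establish a \emph{per-primitive preservation lemma}: if $T_i$ is confluent with $P$ then so is $T_{i+1} = \mathcal{S}(s_i, T_i)$. The engine of this step is the correspondence between the completion sets $\kwd{BI}(T_i)$ and $\kwd{BI}(T_{i+1})$ depicted by the gray oracle queries in Figure~\ref{fig:overview-system}: given an arbitrary $P'_{i+1} \in \kwd{BI}(T_{i+1})$, I would exhibit a completion $P'_i \in \kwd{BI}(T_i)$ whose output agrees with $P'_{i+1}$ on $R(z)$ (again modulo assertion failure), so that the inductive hypothesis $P \simeq_z P'_i$ transfers to $P \simeq_z P'_{i+1}$. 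The purely reordering primitives, such as loop splitting and reordering, are the easy cases, since the multiset of stores to each buffer is unchanged and only their order is permuted.

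The hard part will be the non-reordering primitives $\kwd{compute-at}$ and $\kwd{store-at}$, which relocate and resize the computation of a producer func and thereby introduce both redundant recomputation and computation on uninitialized data. For redundant recomputation I would lean on \emph{determinism}: because algorithm funcs are total and non-recursive, recomputing a point $g(x')$ in a different loop iteration yields the same value, so duplicated stores never disagree on any read reaching $R(z)$. For overcompute on uninitialized data the key observation is that any such garbage can only influence output points \emph{outside} $R(z)$; any garbage that could reach a point of $R(z)$ must correspond to an uninitialized or out-of-bounds read that the bounds-inference synthesis problem (\S\ref{sec:bounds}) is constrained to guard with an assertion check, placing that execution in the assertion-failure disjunct of Definition~\ref{def:confluence}. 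Making this precise---characterizing exactly which reads the oracle is forced to protect, and showing that the surviving reads all trace back to correctly computed values---is where the bulk of the technical work, and the subtle interaction with the bounds-inference specification, will lie.
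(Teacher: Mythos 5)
Your overall architecture matches the paper exactly: induction over the schedule, a lowering-soundness base case (the paper's Lemma~\ref{lem:lowering-sound}, built on Lemma~\ref{lem:compute-confluency}), and a per-phase preservation lemma (Lemmas~\ref{lem:schedule-specialize} through~\ref{lem:schedule-bounds}), with the escape hatches of Definition~\ref{def:confluence} discharged up front. However, there is a genuine gap in your treatment of the ``easy'' cases. You claim loop splitting and reordering are immediate because ``the multiset of stores to each buffer is unchanged and only their order is permuted''---but permuting stores is exactly what is \emph{not} automatically sound: two stores to the same location do not commute, and this is the classical dependence-preservation obligation the paper highlights in \S\ref{sec:overview}. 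The paper's proof of Lemma~\ref{lem:schedule-loop} does real work here: it invokes the syntactic separation restriction (Definition~\ref{def:separation}) to show that distinct iterations of \emph{pure} loops write disjoint points of $f$, and it relies on the validity restrictions that $\kwd{swap}$ never reorders two reduction loops and $\kwd{traverse}$ parallelizes only pure loops. Without that argument your inductive step fails precisely for update stages with rdoms, where iterations do write common locations.

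Two further mechanisms differ from what you anticipate, one of which matters. First, your account of uninitialized data is wrong in mechanism: the oracle is \emph{not} ``constrained to guard with an assertion check'' reads that could propagate garbage. The query extraction $\mathcal{B}$ (Figure~\ref{fig:bounds-extraction}) imposes the implication $\svec{e} \in {?^{\cpu}}f \implies \mathcal{B}_\cpu(e_0)$ as a \emph{static synthesis constraint} on the hole-filling---``what happens outside the compute bounds stays outside the compute bounds''---so garbage provably cannot reach $R(z)$ in any completion; no runtime guards are inserted for reads (assertions arise only from the bounds-phase directives and the reference engine's failure fallback). Second, for $\kwd{compute-at}$ your plan to directly ``exhibit a completion $P'_i \in \kwd{BI}(T_i)$ whose output agrees'' does not quite go through: $\kwd{BI}(T_{i+1})$ contains programs with strictly tighter bounds than the image of any completion of $T_i$, so the paper instead introduces a \emph{narrowing} order on completions (Definition~\ref{def:narrowing}), shows any $P'_{i+1}$ is narrowed by some image program $P''_i \in \kwd{BI}(T_{i+1})$, and proves agreement on the narrower bounds via Lemma~\ref{lem:narrowing-match}. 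You correctly flag this region as the hard part, but the narrowing device is the missing idea that makes it work.
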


The preceding property constitutes our soundness invariant.
Proving the theorem therefore reduces to showing that this invariant is preserved first by lowering, and then by each subsequent possible primitive scheduling transformation.

\subsection{Bounds Inference and Language Specification}
\label{sec:overview-bi}

The definition of algorithm confluence permits the bounds inference oracle to insert assertion checks that might fail into completed programs.
This design presents a unique challenge: a completion that \emph{always} fails an assertion check is technically confluent with its original algorithm, but is not useful.
Less vacuously, as the bounds inference engine improves, the set of scheduled programs for which we find good---or even just satisfactory---bounds changes.
Hopefully, the result is a strict improvement but regressions are possible and even likely in the compiler.

There is a strong case that Halide's design is wrong because there are no guarantees that any given program will continue to work without assertion failures when run on different versions of the standard compiler, much less on alternative implementations.
At a minimum, such a fact runs counter to the spirit of specifying programming language behavior.
So, we might be tempted to try to re-design Halide.

Instead, we choose to tackle specifying Halide's existing design for two main reasons.
First, Halide has been in industrial use for nearly a decade, shipped in many products, and would benefit more from specification of its actual design than of an idealization.
Second, flexibility around bounds inference in Halide is essential to array processing and the ability to reason about redundant re-computation and over-computation.
An alternative to bounds inference would be intriguing, but also constitute a novel advance in language design on its own.

Our strategy is to supply a baseline, or a lower bound, for the quality of the bounds inference any compliant implementation may have.
We supply such a baseline in \S\ref{sec:bounds-inference_algorithm}, specified via a \emph{reference bounds engine}.
Bounds-inference implementations must be ``at least as good'' as the baseline in the following sense:

\begin{definition}[Bounds engine]\label{def:bounds-engine}
Let $P$ be a valid algorithm, $S$ a valid schedule, and $T = \mathcal{S}(S, \mathcal{L}(P))$.
A map $\beta : \kwd{Tgt}^? \to \kwd{Tgt}$ is a \emph{bounds engine} if $\beta(T) \in \kwd{BI}(T)$ for any such $T$.
\end{definition}

\begin{definition}[Bounds quality constraint]\label{def:bounds-quality}
Let $\beta_0$ be the reference bounds engine given in \S\ref{sec:bounds-inference_algorithm}.
Let $P$ be a valid algorithm, $S$ a valid schedule, and $z$ an input.
A bounds engine $\beta$ is compliant with the \emph{bounds quality constraint} if whenever (1) $P(z)$ does not contain an error value, (2) $\beta_0(T)(z)$ does not fail an assertion check, and (3) $\beta_0(T) \simeq_z P$, then $\beta(T) \simeq_z P$.
\end{definition}

Thus, any compliant implementation must produce a result on at least the set of programs and schedules accepted by the reference bounds algorithm.
In this way, programmers can be assured some degree of portability between different compliant implementations (or across versions of a single implementation).
For the existing Halide compiler, this reference method can be used to generate regression tests.

\section{Algorithm Language}
\label{sec:alg_lang}
\begin{figure}
\begin{subfigure}[b]{0.5\linewidth}
\footnotesize
\begin{tabularx}{0.98\linewidth}{rcXl}
  \toprule
  $R$            & $::=$ & $\kwd{rdom}(r_1 = I_1 , \dots , r_k = I_k)$       & rdom \\
  $U_0$          & $::=$ & $e$                                               & pure stage \\
  $U$            & $::=$ & $R \kwd{in} ~(e_1, \ldots, e_n) \leftarrow e      %
                           \kwd{if} e_P$                                     & update stage \\
  $B$            & $::=$ & $U_0; U*$                                         & func body \\
  $F$            & $::=$ & $\kwd{fun} f(x_1, \dots, x_n) = \{ B \}$          & func \\
  $D$            & $::=$ & $F ~|~ D; F$                                      & definitions \\
  $P$            & $::=$ & $\kwd{pipeline} f(p_1,\dots,p_m) = D$             & pipeline \\
  \midrule
  $Z$            & $::=$ & $P; \kwd{realize}(z)$                             & realization \\
  $z$            & $::=$ & $\tup{(I_1, \dots, I_n), (c_1, \dots, c_m)}$      & input \\
  \bottomrule
\end{tabularx}
\Description{Algorithm language}
\caption{Algorithm language}
\label{fig:source-syntax}
\end{subfigure}%
\begin{subfigure}[b]{0.5\linewidth}
\footnotesize
\begin{tabularx}{0.98\linewidth}{rcXl}
  \toprule
  $c$        & $::=$ & $i \in \mathbb{V}$                                & constants   \\
  $a$        & $::=$ & $f[e_1,\ldots,e_n]$                               & func access \\
  $e$        & $::=$ & $c ~|~ a ~|~ v ~|~ \kwd{op}(e_1, \dots, e_k)$     & expression  \\
  $I$        & $::=$ & $(e_{\min}, e_{\len})$                            & interval    \\
  \midrule
  $v$        & $::=$ & $x$                                               & pure variable      \\
             & ~|~   & $r$                                               & reduction variable \\
             & ~|~   & $p$                                               & parameter variable \\
  $\kwd{op}$ & $::=$ & $+ ~|~ - ~|~ \times ~|~ \Div ~|~ \Mod$            & arithmetic         \\
             & ~|~   & $\lor ~|~ \land ~|~ \neg ~|~ < ~|~ > ~|~ =$       & logical            \\
             & ~|~   & $\select ~|~ \min ~|~ \max$                       & conditional        \\
  \bottomrule
\end{tabularx}
\Description{Expression language}
\caption{Expression language}
\label{fig:expr-syntax}
\end{subfigure}
\Description{Halide high-level syntax}
\caption{High-level Halide syntax descriptions. These capture the core components of the
user-facing algorithm and expression languages in the production Halide language.}
\end{figure}
Here we describe the Halide algorithm language, whose purpose is to define the values that the final, scheduled, program must compute.
It is a somewhat unusual dataflow language, consisting of \emph{funcs} whose values are computed on-demand by their dependents, and which might have one or more \emph{update stages}, which eagerly and in-place update the func being computed.
This scheme preserves referential transparency of funcs, but the resulting mix of eager and lazy semantics complicates any attempt to assign a simple denotation; this is why we use a big-step semantics.
Finally, the language is carefully designed with the scheduling language in mind: it underspecifies issues pertaining to bounds and evaluation order, while restricting the dependencies between funcs for the sake of analysis.

\subsection{Algorithm Terms and Expressions}\label{sec:alg_lang-syntax}\label{sec:alg_lang-expressions}
Our formalization of Halide (syntax in Figure~\ref{fig:source-syntax}) focuses on the fundamental issues at play: pure definitions, separable updates, and imperative updates.
Along with pointwise evaluation, these are the primary constructs that govern the structure of computation.

Programmers write \emph{pipelines} $P$, which are a sequence of \emph{func definitions} $F$. Each func has some dimension $n$, associated loop variables $x_1,\dots,x_n$, and a \emph{body} $B$. A func body is made up of one or more \emph{stages}. Each stage is made up of a \emph{reduction domain} (or ``rdom'') $R$, a predicate $e_P$, and a rule $(e_1,\dots,e_n) \leftarrow e$. The first stage $U_0=e$ is known as the \emph{pure stage} and is equivalent to $\kwd{rdom}() \kwd{in} ~(x_1,\dots,x_n) \leftarrow e \kwd{if} 1$.

A \emph{reduction domain} repeats the stage rule for a fixed list of variables $r_1,\dots,r_n$ (not necessarily of the same dimension as the func in which it appears) which range over provided intervals. These model a limited form of imperative updates on a func which happen before \emph{any} other func observes any of its values. The variable $r_1$ is innermost (changes fastest), while $r_n$ is outermost. As we will see in \S\ref{sec:alg_lang-rules}, there are many restrictions on the form of reduction domains and update stages.

Halide algorithms distinguish variables by their definition sites.
The variables that are bound by func definitions are lettered $x$ and are called \emph{pure variables}.
The variables bound by rdoms are lettered $r$ and are called \emph{reduction variables}.
Finally, variables bound by the top-level pipeline definitions are lettered $p$ are are called \emph{parameter variables}.

These parameters are optional and are always passed \emph{constants}, never other pipelines or funcs. A \emph{realization} $Z$ of a pipeline is a setting of the $m$ parameters, plus $n$ constant intervals over which to evaluate the output func, the last one in the pipeline, which is also named in its signature.

Figure~\ref{fig:expr-syntax} shows the syntax of the expression language.
The set of values $\mathbb{V} = \mathbb{Z} \cup \set{\varepsilon_{\mathrm{rdom}}, \varepsilon_{\mathrm{mem}}}$ in the formal language extends\footnote{The practical system also supports floating-point and fixed-width integers, and faces standard semantic issues with those.} $\mathbb{Z}$ with special \emph{error values}, which behave as follows:

\begin{definition}[Error value]\label{def:error-value}
The special expression values $\varepsilon_{\mathrm{rdom}} < \varepsilon_{\mathrm{mem}}$ encode a hierarchy of errors.
Any operation in the expression language involving one or more of these values evaluates to the \emph{greatest} among them.
\end{definition}

Note the omission of arithmetic errors in this definition.
These cannot arise because all operations in the expression language are \emph{total}.
In particular, division and modulo by zero are both defined to be zero.
The reason for this is discussed further in \S\ref{sec:conclusion-impact}.
The other errors, $\varepsilon_{\mathrm{rdom}}$ and $\varepsilon_{\mathrm{mem}}$, respectively capture errors preventing ordinary execution of rdoms (\S\ref{sec:alg_lang-semantics}) and \emph{memory} errors, which do not occur in the algorithm semantics.
Memory errors are possible in the target language (\S\ref{sec:target_lang-semantics}), but are prohibited by theorem \ref{thm:meta-safety}.

There is no Boolean type in the expression language, so the logical operators interpret their arguments according to the usual convention of using zero to represent ``false'' and non-zero values to represent ``true''. When a logical operator evaluates to ``true'', it returns $1$, specifically.

Finally, note that the expression language has no short-circuiting semantics.
Thus, logical-or and logical-and may not be used to conditionally evaluate points in another func, and the ``select'' function (the common ternary-if operator) always fully evaluates all three of its arguments.

\subsection{Algorithm Validity Rules}\label{sec:alg_lang-rules}
Halide algorithms must adhere to several non-standard restrictions.
This first rule constrains the use of pure variables to facilitate flexible scheduling decisions.

\begin{definition}[Syntactic separation restriction]\label{def:separation}
Let $f$ be a func given by $\kwd{fun} f(x_1,\dots,x_n) = \{ U_0 ; \cdots ; U_m \}$.
The \emph{syntactic separation restriction} states that for all pure variables $x_i$ and all stages $U_j$, if $x_i$ occurs anywhere in $U_j$ then all accesses in $U_j$ of the form $f[e_1,\dots,e_n]$ must have $e_i \equiv x_i$.
The update rule $U_j = R \kwd{in} ~(\dots,e_i,\dots) \leftarrow e \kwd{if} e_P$ must also have $e_i \equiv x_i$.
\end{definition}

This rule is critical to the correctness of many scheduling directives and metatheory claims, but it is quite subtle, so we show a few examples.
First, it might be tempting to write an in-place shift using the following func definition:
\[ \kwd{fun} f(x) = \{ g[x]; (x) \leftarrow f[x+1] \} \]
but such an update diverges on $f$'s unbounded domain since $f(0)$ would need to first compute $f(1)$, which would need to compute $f(2)$ and so on.
Such updates are disallowed by definition \ref{def:separation}.
It is also disallowed to use the variable in some places, but not others, as in:
\[ \kwd{fun} f(x) = \{ g[x]; \kwd{rdom}(r=(0,3)) \kwd{in} ~(x) \leftarrow f[x] + f[r] \} \]
The reason here is that, viewed as an in-place \emph{update} to the values of $f$, the update cannot be applied uniformly across the entire dimension $x$.
On the other hand, a definition like
\[ \kwd{fun} f(x) = \{ 0; \kwd{rdom}(r=(0,3)) \kwd{in} ~(x) \leftarrow f[x] + \textcolor{red}{g}[x] + \textcolor{red}{g}[r] \} \]
is legal since the restriction only applies to the func whose update stage is being defined.
At this point in the algorithm, all of $g$'s values are known, so there is no hazard.
Intuitively, updates that reference pure variables should augment the previous stage while remaining \emph{well-founded}.

The syntactic separation restriction extends the notion of purity from variables to stage \emph{dimensions}, which need not reference all of the func's pure variables.
\begin{definition}[Pure/reduction dimensions]\label{def:pure-dim}
For any pure variable $x_i$ and stage $U_j$, it is said that $i$ is a \emph{pure dimension} in stage $j$ if $x_i$ appears in $U_j$.
Dimensions which are not pure are called \emph{reduction dimensions}.
\end{definition}

Certain expressions in Halide may not refer to pure or reduction variables in order to keep scheduling flexible and sound.
Such expressions are called \emph{startup expressions} to reflect the fact that they are constant through the whole execution.

\begin{definition}[Startup expression]\label{def:startup-expression}
In a pipeline $P$ with parameters $p_1, \dots, p_n$, an expression $e$ is a \emph{startup expression} iff $e$ contains no func references and any variable $v$ occurring in $e$ is identically one of $p_i$ for some $i$.
\end{definition}

With this definition, we are finally ready to define validity for a program in the algorithm language.

\begin{definition}[Valid program]\label{def:valid}
A program $P\in\kwd{Alg}$ is \emph{valid} if the bounds of all rdoms are startup expressions, the names of all funcs are unique, the names of pure variables within each func are unique, and the names of reduction variables within a single stage are unique.
All stages must obey the syntactic separation restriction (definition \ref{def:separation}).
The output func $f$ in $\kwd{pipeline} f(\dots)$ must exist and be the last func defined.
All funcs must be defined before they are referenced by another func.
The first stage of every func may not include a self-reference (i.e., must be pure in all dimensions).
Lastly, common type checking rules for expressions (eg. func arity) must be respected.
\end{definition}

\subsection{Algorithm Semantics}\label{sec:alg_lang-semantics}
\begin{figure*}
\footnotesize
\begin{framed}
    \AxiomC{$\forall y=(i_1,\dots,i_n)\in I_1 \times \dots \times I_n: \left[\svec{c/p}\right] \left(D; f[i_1,\dots,i_n]\right) \bigeval c_{y}$}
    \RightLabel{[Realize]}
    \UnaryInfC{$\kwd{pipeline} f(\svec{p})=D; \kwd{realize}\left(\svec{I}, \svec{c}\right) \bigeval g(y) := c_{y}$}
    \DisplayProof
  \\\vspace{1ex}
    \AxiomC{}
    \RightLabel{[Const-Eval]}
    \UnaryInfC{$D ; c \bigeval c$}
    \DisplayProof
  \quad
    \AxiomC{$\left(\svec{D ; e \bigeval c}\right)$}
    \AxiomC{$D ; f[\svec{c}] \bigeval c'$}
    \RightLabel{[Func-Arg-Eval]}
    \BinaryInfC{$D ; f[\svec{e}] \bigeval c'$}
    \DisplayProof
  \\\vspace{1ex}
    \AxiomC{$\left(\svec{D ; e \bigeval c}\right)$}
    \AxiomC{$D ; \kwd{op}(\svec{c}) \bigeval c'$}
    \RightLabel{[Op-Eval]}
    \BinaryInfC{$D ; \kwd{op}(\svec{e}) \bigeval c'$}
    \DisplayProof
  \quad
    \AxiomC{$f \neq g$}
    \AxiomC{$D ; f[\svec{c}] \bigeval c'$}
    \RightLabel{[Func-Skip]}
    \BinaryInfC{$D ; \kwd{fun} g[\svec{x}] = \{ B \}; f[\svec{c}] \bigeval c'$}
    \DisplayProof
  \quad
    \AxiomC{$D ; \left[\svec{c/x}\right] e \bigeval c'$}
    \RightLabel{[Func-Eval]}
    \UnaryInfC{$D ; \kwd{fun} f[\svec{x}] = \{ e \} ; f[\svec{c}] \bigeval c'$}
    \DisplayProof
  \\\vspace{1ex}
    \AxiomC{$D ; \kwd{fun} f[\svec{x}] = \{ \svec{U} \} ;
                 \left[\svec{c/x}\right] \left( \kwd{select}(
                 \svec{e = x} \land e_p, e_b, f[\svec{x}]) \right) \bigeval c'$}
    \RightLabel{[Update-Eval]}
    \UnaryInfC{$D ; \kwd{fun} f[\svec{x}] = \{ \svec{U} ;
                 \kwd{rdom}() \kwd{in} ~(\svec{e}) \leftarrow e_b \kwd{if} e_p \} ;
                 f[\svec{c}] \bigeval c'$}
    \DisplayProof
  \\\vspace{1ex}
    \AxiomC{$\left(\svec{I=\tup{e^{\kwd{min}}, e^{\kwd{len}}}}\right)$}
    \AxiomC{$\left(\svec{e^{\kwd{min}} \bigeval c^{\kwd{min}}}\right)$}
    \AxiomC{$\left(\svec{e^{\kwd{len}} \bigeval c^{\kwd{len}}}\right)$}
    \AxiomC{$\exists j. c_j^{\kwd{len}} < 0$}
    \RightLabel{[RDom-Err]}
    \QuaternaryInfC{$D ; \kwd{fun} f[\svec{x}] = \{ \cdots ; \kwd{rdom}(\svec{r=I}) \kwd{in} \cdots \} ; f[\svec{c}] \bigeval \varepsilon_{\mathrm{rdom}}$}
    \DisplayProof  
  \\\vspace{1ex}
    \AxiomC{\hspace{-0.0em}$\left(\svec{I=\tup{e^{\kwd{min}}, e^{\kwd{len}}}}\right)$}
    \AxiomC{\hspace{-0.3em}$\left(\svec{e^{\kwd{min}} \bigeval c^{\kwd{min}}}\right)$}
    \AxiomC{\hspace{-0.3em}$\left(\svec{e^{\kwd{len}} \bigeval c^{\kwd{len}}}\right)$}
    \AxiomC{\hspace{-0.3em}$D ; \kwd{fun} f[\svec{x}] = \{ \svec{U} ; \mathit{unroll} \} ; f[\svec{c}] \bigeval c'$}
    \RightLabel{[RDom-Eval]}
    \QuaternaryInfC{$D ; \kwd{fun} f[\svec{x}] = \{ \svec{U} ; \kwd{rdom}(\svec{r=I}) \kwd{in} ~(\svec{e}) \leftarrow e_b \kwd{if} e_p \} ; f[\svec{c}] \bigeval c'$}
    \DisplayProof

  \vspace{1ex}
  \emph{where}  
  \vspace{1ex} 
  
    $\mathit{unroll} =
    \left(
        \left[ c_k^{\kwd{min}} / r_k \right]
          \left( \kwd{rdom}(r_1=I_1, \dots, r_{k-1}=I_{k-1}) 
                 \kwd{in} ~(\svec{e}) \leftarrow e_b \kwd{if} e_p \right) ;
        \quad \cdots \quad
        \left[ \left(c_k^{\kwd{min}} + c_k^{\kwd{len}} - 1\right) / r_k \right]
           \left( \cdots \right)
    \right)$
\end{framed}
\Description{Algorithm language natural semantics}
\caption{Algorithm language natural semantics. Note that we use a metasyntactic notation,
$\svec{~\cdot~}$, which indicates that the covered expression is repeated for each numerical
subscript, eg. $(\svec{e=x} \land e_p) \equiv (e_1=x_1 \land \dots \land e_n=x_n \land e_p)$.
Parentheses distinguish such terms from axioms. The [Realize] rule defines the points of a
partial function $g : (I_1 \times \dots \times I_n) \to \mathbb{V}$.}
\label{fig:source-semantics}
\end{figure*}
The purpose of a Halide algorithm is to \emph{define} the value of every point in every func (Figure~\ref{fig:source-semantics}).
Evaluation proceeds pointwise with no need to track bounds.
Funcs are evaluated by substitution [Func-Eval] as is standard for function calls.
Compared to the target language, which precomputes values of funcs as if they were arrays, these semantics are \emph{lazy}.

While this laziness avoids reasoning about bounds, it complicates the semantics of the comparatively eager rdom construct.
How do we update a func in-place, when it is intuitively meant to be pure?
To resolve this tension we, simply unroll rdoms [RDom-Eval] into sequences of point updates when and as they are encountered.

These simple point updates [Update-Eval] can then be thought of as shadowing the previous func definition, similar to the functional definition of stores used by most operational semantics for imperative languages.
If the lookup point and update point coincide, then the update rule is substituted, otherwise the existing value is used.

Lastly, we note that all valid algorithms terminate.
This follows the intuition that Halide pipelines are defining mathematical objects by supplying formulas to compute the values.

\begin{lemma}[Algorithms terminate]
Given any algorithm $P \in\kwd{Alg}$ and input $z$, the output of $P(z)$ can be determined in a finite amount of time.
\end{lemma}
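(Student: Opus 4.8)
The plan is to establish termination by a structural induction argument organized around the big-step evaluation rules of Figure~\ref{fig:source-semantics}, exhibiting a well-founded measure that strictly decreases along every derivation branch. The key observation is that the only source of potential nontermination is the recursive evaluation of func accesses $f[\svec{c}]$, since arithmetic via [Op-Eval] bottoms out at the total operators of the expression language and [Const-Eval] terminates immediately. Thus the crux is to bound the recursive unfolding of func references.

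First I would fix an input $z = \tup{\svec{I}, \svec{c}}$ and observe that [Realize] reduces the problem to finitely many goals $[\svec{c/p}](D; f[i_1,\dots,i_n]) \bigeval c_y$, one for each $y$ in the finite product $I_1 \times \dots \times I_n$; so it suffices to show each such evaluation terminates. I would then define a lexicographic measure on evaluation goals $D; f[\svec{c}]$. The outer component is the position of $f$ in the definition list $D$: by the validity rules (Definition~\ref{def:valid}), every func is defined before it is referenced, so any func access appearing in the body of $f$ resolves to a func \emph{strictly earlier} in $D$ (after [Func-Skip] discards later definitions). This means cross-func dependencies form a DAG and strictly decrease the outer measure. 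The inner component handles the update stages of a single func: [RDom-Eval] unrolls an rdom into a fixed, finite sequence of point updates (the \emph{unroll} expansion), whose length is determined by the rdom bounds --- and these bounds are \emph{startup expressions} (Definition~\ref{def:startup-expression}), hence contain no func references and evaluate to concrete constants $c^{\kwd{len}}_k$ without triggering further recursion. Each [Update-Eval] step peels off one stage, strictly decreasing the number of remaining update stages.

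The subtle point --- and what I expect to be the main obstacle --- is the self-reference $f[\svec{x}]$ that appears in [Update-Eval] via the $\kwd{select}(\dots, f[\svec{x}])$ term and in the unrolled update rule. This is a recursive call to the \emph{same} func, so it does not decrease the outer (inter-func) measure. The resolution hinges on the syntactic separation restriction (Definition~\ref{def:separation}): because any self-access $f[e_1,\dots,e_n]$ in a stage that mentions pure variable $x_i$ must have $e_i \equiv x_i$, a self-reference at update point $\svec{x}$ can only read the value of $f$ at the \emph{very same coordinate} being updated, never at a shifted coordinate --- this is precisely what rules out the divergent shift example $f[x] \leftarrow f[x+1]$. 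Consequently the recursive call $D; \kwd{fun} f[\svec{x}] = \{\svec{U}\}; f[\svec{c}]$ induced by [Update-Eval] strips off the current update stage and evaluates $f$ with strictly fewer stages, which I would track as a third lexicographic component (number of remaining stages). I must argue carefully that this recursive goal is well-typed and that the innermost stage, being the pure stage $U_0 = e$, contains no self-reference (guaranteed by the last clause of Definition~\ref{def:valid}), so the recursion terminates at the base case [Func-Eval].

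Assembling these pieces, the termination measure is the triple (index of $f$ in $D$; number of update stages of $f$ remaining to be applied; structural size of the expression currently being evaluated), ordered lexicographically over $\mathbb{N}^3$, which is well-founded. I would verify that each inference rule in Figure~\ref{fig:source-semantics} either terminates immediately or recurses only on goals of strictly smaller measure. Since each [Realize] goal is over a finite window and there are finitely many of them, the entire computation $P(z)$ halts in finitely many steps, establishing the lemma.
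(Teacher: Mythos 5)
Your proof is, in outline, a rigorous elaboration of the paper's own (three-sentence) argument, and it uses the same three ingredients: rdom bounds are startup expressions evaluating to finite constants, funcs are defined before use so cross-func recursion descends a finite list, and self-references are resolved against a definition with fewer stages. However, two steps as stated do not hold up. First, your lexicographic measure $(\text{index of } f \text{ in } D;\ \text{stages remaining};\ \text{expression size})$ does \emph{not} strictly decrease under [RDom-Eval]: the $\mathit{unroll}$ expansion replaces one stage carrying a $k$-variable rdom with $c_k^{\kwd{len}}$ stages each carrying a $(k{-}1)$-variable rdom, so your second component \emph{increases} at exactly this rule. The repair is routine but necessary: since rdom bounds are startup expressions, the total number of point updates after full unrolling is a concrete finite number, and you should take the second component to be that count (or use a multiset/ordinal measure graded by rdom depth), under which both [RDom-Eval] and [Update-Eval] strictly decrease.

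Second, you lean on the syntactic separation restriction (Definition~\ref{def:separation}) for more than it provides. That restriction constrains only \emph{pure} dimensions; in reduction dimensions a self-access at a coordinate different from the update point is perfectly legal (this is the histogram-style pattern), so it is false that ``a self-reference at update point $\svec{x}$ can only read the value of $f$ at the very same coordinate being updated.'' Fortunately termination never needed that claim: as you also observe, the premise of [Update-Eval] evaluates every self-reference against the truncated definition $\{U_0;\dots;U_{j-1}\}$, so the accessed coordinate is irrelevant to the measure --- the stage count (suitably repaired as above) decreases regardless. The paper's prose divergence example ($f[x{+}1]$) describes the intended in-place reading; the big-step rules sidestep it syntactically, and separation matters for \emph{scheduling soundness}, not for this lemma. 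With these two corrections your induction goes through and is a sound, considerably more precise version of the proof the paper merely sketches.
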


\begin{proof}
Since rdom bounds are \emph{startup expressions} (\ref{def:startup-expression}) and no infinity value exists in $\mathbb{V}$, there is no way to loop infinitely.
The program validity checks (\ref{def:valid}) prevent self-recursion in the function definitions. 
Functions must be declared before they are used, so recursion is impossible.
Thus, Halide algorithms always terminate.
In fact, this also shows Halide is not Turing-complete.
\end{proof}

\section{Target Language}
\label{sec:target_lang}
In this section, we describe the target language (IR) to which the algorithm language compiles.
Unlike the algorithm language, it is similar to classic imperative languages, and programs in this language have a defined execution order (which is modified by the schedule).
It uses the same expression language from \S\ref{sec:alg_lang-expressions} and has the same semantics for all expressions, save func accesses, which become references to memory.

\subsection{Syntax}\label{sec:target_lang-syntax}
\begin{wrapfigure}{R}{0.51\textwidth}
\footnotesize
\begin{tabularx}{\linewidth}{rcXl}
  \toprule
  $\tau$         & $::=$ & $\mathrm{serial} ~|~ \mathrm{parallel}$           & traversal order \\
  $s$            & $::=$ & $\kwd{nop}$                                       & no operation \\
                 & ~|~   & $\kwd{assert} e$                                  & assertion \\
                 & ~|~   & $s_1$ ; $s_2$                                     & sequencing \\
                 & ~|~   & $\kwd{allocate} f(I_1, \ldots, I_n)$              & allocate buffer \\
                 & ~|~   & $f^\ell[e_1,\dots,e_m] \leftarrow e_0$            & update buffer \\
                 & ~|~   & $\kwd{if} e_1 \kwd{then} s_1 \kwd{else} s_2$      & branching \\
                 & ~|~   & $\kwd{for}^\tau x \kwd{in} I \kwd{do} s$          & bounded loops \\
                 & ~|~   & $\kwd{let} x = e \kwd{in} s$                      & let binding \\
                 & ~|~   & $\kwd{label} \ell : s$                            & statement label \\
  $P$            & $::=$ & $\kwd{pipeline} f(p_1,\dots,p_m): ~s$             & pipeline \\
  \midrule
  $e$            & $::=$ & $\dots ~|~ ?\ell ~|~ f^\ell[\dots]$             & $\kwd{Tgt}^?$ expr \\
  \bottomrule
\end{tabularx}
\Description{Halide IR syntax}
\caption{Halide IR syntax. Expressions and realizations are the same as in Figure~\ref{fig:expr-syntax}, but are augmented with labeled holes for $\kwd{Tgt}^?$. Labels $\ell$ are arbitrary and left uninterpreted. Statement labels have no special semantics.}
\label{fig:target-syntax}
\end{wrapfigure}
Figure~\ref{fig:target-syntax} presents the abstract syntax for the Halide IR.
This language comes in two variants: \emph{with} holes ($\kwd{Tgt}^?$) and \emph{without} holes ($\kwd{Tgt}$).
The lowering algorithm given in \S\ref{sec:lowering} translates an algorithm to a program in $\kwd{Tgt}^?$ whose holes will be filled by \emph{bounds inference} (\S\ref{sec:bounds}).
The main difference between $\kwd{Tgt}$ and similar imperative languages is that loops are restricted to range-based for loops which can be marked for parallel traversal.
Furthermore, these ranges are given as \emph{minimum} and \emph{length} pairs, rather than minimum and \emph{maximum}.
Some syntax may be annotated with \emph{labels}, written $\ell$.
Labels are ignored by the semantics because they are simply used as handles by the \emph{scheduling} (\S\ref{sec:scheduling}) and bounds inference systems.

\subsection{Semantics}\label{sec:target_lang-semantics}
\begin{figure*}
\footnotesize
\newcommand{\rulesep}{\\\vspace{0.75em}}
\newcommand{\ruleskip}{\hspace{1.5em}}
\begin{mdframed}
\fbox{\begin{tabular}{r@{\hskip 1em}c@{\hskip 1em}l}
$E$ & ::= & $\square$                                                          \\
    & |   & $E ; s$                                                            \\
    & |   & $\kwd{allocate} f((c^{\min}_1, c^{\len}_1), \dots, E, \dots, I_n)$ \\
    & |   & $a \leftarrow E$                                                   \\
    & |   & $E \leftarrow c$                                                   \\
    & |   & $\kwd{if} E \kwd{then} s_1 \kwd{else} s_2$                         \\
    & |   & $\kwd{for}^\ell x \kwd{in} E \kwd{do} s$                           \\
    & |   & $\kwd{let} x = E \kwd{in} s$                                       \\
    & |   & $f[c_1, \ldots, E, \ldots, e_n]$                                   \\
    & |   & $\kwd{op}(c_1, \ldots, E, \ldots, e_k)$                            \\
    & |   & $(E, e^{\len})$                                                    \\
    & |   & $(c^{\min}, E)$                                                    \\
    & |   & $\kwd{assert} E$                                                   \\
    & |   & $\kwd{pipeline} f(\svec{p}): E$                                    \\
\end{tabular}}%
\hfill%
\begin{minipage}{0.55\textwidth}
\centering
\AxiomC{$\tup{s | \Sigma | \sigma} \to \tup{s' | \Sigma' | \sigma'}$}
\RightLabel{[Reduce]}
\UnaryInfC{$\tup{E[s] | \Sigma | \sigma} \to \tup{E[s'] | \Sigma' | \sigma'}$}
\DisplayProof
\rulesep
\AxiomC{$v = \Sigma(x)$}
\RightLabel{[Var]}
\UnaryInfC{$\tup{x | \Sigma | \sigma} \to \tup{v | \Sigma | \sigma}$}
\DisplayProof
\rulesep
\AxiomC{\phantom{M}}
\RightLabel{[Nop]}
\UnaryInfC{$\tup{\kwd{nop} ; s | \Sigma | \sigma} \to \tup{s | \Sigma | \sigma}$}
\DisplayProof
\rulesep
\AxiomC{$c \neq 0$}
\RightLabel{[Assert-True]}
\UnaryInfC{$\tup{\kwd{assert} c | \Sigma | \sigma} \to \tup{\kwd{nop} | \Sigma | \sigma}$}
\DisplayProof
\rulesep
\AxiomC{\phantom{M}}
\RightLabel{[Let]}
\UnaryInfC{$\tup{\kwd{let} x = c \kwd{in} s | \Sigma | \sigma} \to \tup{s | \Sigma[x=c] | \sigma}$}
\DisplayProof
\end{minipage}
\begin{center}
\AxiomC{\phantom{M}}
\RightLabel{[Realize]}
\UnaryInfC{$\tup{\kwd{pipeline} f(\svec{p}): s ; \kwd{realize}(\svec{c}) | \Sigma | \sigma} \to \tup{\kwd{pipeline} f(\svec{p}): s | \Sigma\left[\svec{p=c}\right] | \sigma}$}
\DisplayProof
\rulesep
\AxiomC{$\sigma(f) = \tup{\hat{f}, \ldots}$}
\RightLabel{[End]}
\UnaryInfC{$\tup{\kwd{pipeline} f(\svec{p}): \kwd{nop} | \Sigma | \sigma} \to \tup{\hat{f} | \Sigma | \sigma}$}
\DisplayProof
\rulesep
\AxiomC{$c^{\len} > 0$}
\RightLabel{[For-Iter]}
\UnaryInfC{$\tup{\kwd{for} x \kwd{in} ~(c^{\min}, c^{\len}) \kwd{do} s | \Sigma | \sigma} \to \tup{s ; \kwd{for} x \kwd{in} ~(c^{\min}+1, c^{\len}-1) \kwd{do} s | \Sigma[x=c^{\min}] | \sigma}$}
\DisplayProof
\rulesep
\AxiomC{\phantom{M}}
\RightLabel{[For-Stop]}
\UnaryInfC{$\tup{\kwd{for} x \kwd{in} ~(c^{\min}, 0) \kwd{do} s | \Sigma | \sigma} \to \tup{\kwd{nop} | \Sigma \setminus\!\set{x} | \sigma}$}
\DisplayProof
\rulesep
\AxiomC{$c\neq 0$}
\RightLabel{[If-T]}
\UnaryInfC{$\tup{\kwd{if} c \kwd{then} s_1 \kwd{else} s_2 | \Sigma | \sigma} \to \tup{s_1 | \Sigma | \sigma}$}
\DisplayProof
\ruleskip
\AxiomC{\phantom{M}}
\RightLabel{[If-F]}
\UnaryInfC{$\tup{\kwd{if} 0 \kwd{then} s_1 \kwd{else} s_2 | \Sigma | \sigma} \to \tup{s_2 | \Sigma | \sigma}$}
\DisplayProof
\rulesep
\AxiomC{$\mathrm{InBounds}(f[\svec{c}], \sigma)$}
\AxiomC{$\sigma' = \sigma[f(\svec{c}) = c']$}
\RightLabel{[Assn]}
\BinaryInfC{$\tup{f[\svec{c}] \leftarrow c' | \Sigma | \sigma} \to \tup{\kwd{nop} | \Sigma | \sigma'}$}
\DisplayProof
\ruleskip
\AxiomC{$\sigma' = \sigma\left[f \to \tup{\lambda \svec{x}. ~\varepsilon_\mem, ~\svec{I}}\right]$}
\RightLabel{[Alloc]}
\UnaryInfC{$\tup{\kwd{allocate} f\left(\svec{I}\right) | \Sigma | \sigma} \to \tup{\kwd{nop} | \Sigma | \sigma'}$}
\DisplayProof
\rulesep
\AxiomC{$\mathrm{InBounds}(f[\svec{c}], \sigma)$}
\RightLabel{[Read]}
\UnaryInfC{$\tup{f[\svec{c}] | \Sigma | \sigma} \to \tup{\sigma(f[\svec{c}]) | \Sigma | \sigma}$}
\DisplayProof
\ruleskip
\AxiomC{$\kwd{op}(\svec{c}) = c'$}
\RightLabel{[Eval]}
\UnaryInfC{$\tup{\kwd{op}(\svec{c}) | \Sigma | \sigma} \to \tup{c' | \Sigma | \sigma}$}
\DisplayProof
\end{center}
\end{mdframed}
\Description{IR structural semantics}
\caption{Structural semantics for $\kwd{Tgt}$ (\emph{without} holes). Notice that there are four states that can get \emph{stuck}: (1) when $\kwd{assert} \kwd{false}$ is encountered, (2) when a $\kwd{for}$ loop extent is negative, (3) when a read occurs out of bounds, and (4) when an assignment occurs out of bounds. The latter two memory errors cannot happen in programs derived from the scheduling and bounds inference processes by Theorem~\ref{thm:meta-safety}.}
\label{fig:target-semantics}
\end{figure*}
In Figure~\ref{fig:target-semantics} we give small-step semantics for the IR.
Note that $\Sigma$ is an \emph{environment} for loop variables and let bindings and $\sigma$ is the \emph{store} or \emph{heap} in which memory is allocated.

These semantics are mostly standard, though there are a few instances where the semantics can get \emph{stuck}. We enumerate and define all the failure modes here:
\begin{definition}[Assertion failure]\label{def:assertion_failure}
If the execution of a program $P\in\kwd{Tgt}$ gets stuck when an assertion fails (i.e. the condition evaluates to $0$), then we say $P$ has failed an assertion check.
\end{definition}
\begin{definition}[RDom failure]\label{def:rdom_failure}
If the execution of a program $P\in\kwd{Tgt}$ gets stuck because a for loop has a negative extent, then we say $P$ has encountered an \emph{rdom failure}. This corresponds to the failure mode in the algorithm semantics (\S\ref{sec:alg_lang-semantics}) where an invalid rdom causes the program to return $\varepsilon_\text{rdom}$ everywhere.
\end{definition}
\begin{definition}[Memory error]\label{def:memory_failure}
Recall that the [Read] and [Assn] rules assume their accesses are in bounds. If the execution of a program $P\in\kwd{Tgt}$ gets stuck when accessing memory, we say $P$ has attempted an \emph{out of bounds} access or has encountered a \emph{memory error}.
\end{definition}
Recall that theorem \ref{thm:meta-safety} states that memory errors cannot occur in the execution of a program which was derived from an algorithm via lowering, scheduling, and bounds inference.

The [Alloc] rule updates the store $\sigma$ with a mapping from the \emph{symbolic name} of the func to a pair of (1) a partial function $\hat{f}$ (initially $\varepsilon_\mem$ everywhere) that records the values and (2) the bounds that were stated at allocation time.
The predicate $\mathrm{InBounds}(f[\svec{c}], \sigma)$ uses this data to check the fully evaluated point $\svec{c} \equiv (c_1, \dots, c_n)$ against the bounds stored in $\sigma(f)$.

[Assn] defines assigning to a point in a func in the store and [Read] defines reading from a func in the store.
Assignment is modeled by \emph{shadowing} the old value, ie. by redefining the mapping of $f$ in $\sigma$ to a new partial function $\hat{f}'$ which agrees with $\hat{f}$ everywhere except at the point being updated.
We use the terse syntax $\sigma'=\sigma[f(\svec{c})=c']$ to denote this operation.
Reading a value from a func is then a matter of simply evaluating the stored function.

\section{Lowering}
\label{sec:lowering}
\begin{figure}
\footnotesize
\newcommand{\LL}{\mathcal{L}} 
\begin{framed}
  \begin{align*}
  \LL(\kwd{pipeline} f(\svec{p}) : \svec{F}; \kwd{fun} f(\svec{x}) = B) &= 
    \begin{cases}
      \kwd{pipeline} f\left(\svec{p}, \svec{x^{\min}, x^{\len}}\right) : & \\
      \quad \svec{\LL(F)}; \LL(\kwd{fun} f(\svec{x}) = B) &
    \end{cases} \\
  \LL(\kwd{fun} f(\svec{x}) = B) &= \begin{cases}
  \kwd{allocate} f\left(\svec{?^{\mem}f_x}\right) &\\
  \kwd{label} f: \LL_B(f, \svec{x}, B) &\\
  \end{cases} \\
  \LL_B(f, \svec{x}, U_0; \cdots; U_m) &= \begin{cases}
  \kwd{label} s_0: \LL_U(f, \svec{x}, 0, U_0) &\\
  \quad \vdots &\\
  \kwd{label} s_m: \LL_U(f, \svec{x}, m, U_m) &\\
  \end{cases} \\
  \LL_U(f, \svec{x}, i, R \kwd{in} \svec{e} = e_B \kwd{if} e_P) 
    &= \LL_P(f, \svec{x}, \svec{e}, i, \LL_R(R, \kwd{if} e_P \kwd{then} f^i[\svec{e}] \leftarrow e_B)) \\
  \LL_R(\kwd{rdom}(), s) &= s \\
  \LL_R(\kwd{rdom}(r_1=I_1, \svec{r=I}), s) &= \LL_R(\kwd{rdom}(\svec{r=I}), \kwd{for} r_1 \kwd{in} I_1 \kwd{do} s) \\
  \LL_P(f, (), (), i, s) &= [f^{i-1}/f] s \\
  \LL_P(f, (x, \svec{y}), (e, \svec{e'}), i, s) &=
        \begin{cases}
          \LL_P\left(
              f, \svec{y}, \svec{e'}, i, \kwd{for} x \kwd{in} ~?^{\cpu}f_x^i \kwd{do} s
          \right) & \text{if} ~ x \equiv e \\
          \LL_P(f, \svec{y}, \svec{e'}, i, s) & \text{otherwise} \\
        \end{cases}
  \end{align*}
\end{framed}
\Description{Lowering algorithm with default eager schedule.}
\caption{Lowering algorithm with default eager schedule.}
\label{fig:lowering}
\end{figure}

Halide algorithms are compiled to IR programs with holes by the \emph{lowering} function $\mathcal{L}$, defined in Figure~\ref{fig:lowering}.
The lowering function creates a sequence of top-level loop nests for every func in the program.
Inside these loops are assignments implementing the formulas for each stage in the algorithm.
Pure dimensions which do not appear in a stage are not lowered, and reduction domains appear as innermost loops.

The lowering function also annotates certain fragments with \emph{labels} to facilitate scheduling and bounds inference.
These labels appear in three places: first, they appear in $\kwd{label}$ statements which act as handles for the scheduling directives; second, they are attached to the $\cpu$ and $\mem$ \emph{bounds holes}; finally, they are attached to func references.
The following lemma captures the structural invariant provided by the first set of these labels.

\begin{lemma}[Loop naming]\label{lem:loop-naming}
Given a valid algorithm $P\in\kwd{Alg}$ and a valid schedule $S\in\kwd{Sched}$, any for loop in $\mathcal{S}(S,\mathcal{L}(P))$ is uniquely identified by (1) the func, (2) the specialization (or lack thereof, see \S\ref{sec:scheduling-specialization}), and (3) the stage to which it belongs, as well as (4) the name of its induction variable.
\end{lemma}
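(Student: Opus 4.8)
The plan is to prove the statement by induction on the length $n$ of the schedule $S = s_1; \dots; s_n$, carrying the four-component address as a structural invariant of the intermediate programs $T_i = \mathcal{S}(s_1;\dots;s_i,\mathcal{L}(P))$. I would state the invariant as: the map sending each for-loop to the tuple (func, specialization, stage, induction-variable name) is injective on the for-loops of $T_i$.

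For the base case ($n=0$, so $T_0 = \mathcal{L}(P)$) I would read the invariant straight off Figure~\ref{fig:lowering}. Every for-loop is introduced by exactly one of two clauses: the $\mathcal{L}_P$ clause, which (only when $x \equiv e$) emits a loop $\kwd{for}\ x \kwd{in}\ ?^{\cpu}f_x^i$ whose induction variable is a pure variable of $f$; or the $\mathcal{L}_R$ clause, which emits a loop $\kwd{for}\ r_1 \kwd{in}\ I_1$ whose induction variable is a reduction variable of the rdom in the current stage. In either case the loop sits textually inside a unique $\kwd{label}\ f$ (emitted once per func by $\mathcal{L}(\kwd{fun}\dots)$, fixing component (1)) nested around a unique $\kwd{label}\ s_i$ (emitted by $\mathcal{L}_B$, fixing component (3)). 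Injectivity then follows from Definition~\ref{def:valid}: func names are globally unique, the stage labels $s_0,\dots,s_m$ are distinct by construction, and within a single stage the induction-variable names are distinct because the pure variables of $f$ are distinct, the reduction variables of the stage are distinct, and the two name classes ($x$ versus $r$) are disjoint. Since lowering produces no specializations, component (2) is uniformly ``none.''

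For the inductive step I would assume the invariant for $T_{i-1}$ and case on the primitive $s_i$, checking that $\mathcal{S}(s_i,-)$ preserves injectivity. Pure placement and reordering directives (e.g.\ $\kwd{compute-at}$, $\kwd{store-at}$, reorder, and traversal-annotation changes) only relocate or re-tag existing loop nests, leaving every loop's four-tuple unchanged, so the invariant transfers directly. For $\kwd{split}$, which replaces one loop by two nested loops bearing the user-supplied names $xo, xi$, I would invoke schedule validity (\S\ref{sec:scheduling}) to guarantee these names are fresh within the affected func/stage/specialization, so no new collision is created while the split loop's old name is consumed. The genuinely delicate case is specialization (\S\ref{sec:scheduling-specialization}), which \emph{duplicates} a loop nest and is therefore the unique source of loops sharing components (1), (3), and (4); here the argument must show that the scheduling semantics stamps the copies with distinct specialization identifiers, which is exactly why component (2) appears in the address at all.

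The main obstacle is this last case: specialization is the one directive for which the naive three-component address (func, stage, induction variable) genuinely fails to be injective, so the crux of the proof is verifying that schedule validity forbids two loops from agreeing on all four coordinates, and that every other directive leaves the specialization tag either untouched or consistently refined. By contrast, the base case and the non-duplicating directives are essentially bookkeeping over the clauses of $\mathcal{L}$ and the uniqueness hypotheses of Definition~\ref{def:valid}.
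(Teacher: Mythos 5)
Your proof takes essentially the same route as the paper: the paper establishes the lemma by construction at lowering and then requires each scheduling phase to preserve it, with the loop phase handled by exactly your freshness argument (Lemma~\ref{lem:loop-preserves-names}: split/fuse replace loops of a given func, specialization, and stage with similarly localized loops, and name conflicts are prohibited). Your write-up is in fact more explicit than the paper's terse treatment---notably the base-case bookkeeping over the $\mathcal{L}_P$/$\mathcal{L}_R$ clauses and the observation that specialization is the one directive that forces component (2), which the paper leaves implicit in the $[f^{i,j}/f^i]$ tagging of Figure~\ref{fig:schedule-semantics}---so it is correct and matches the intended argument.
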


Specializations do not exist in initially lowered programs, but are a scheduling feature (see \S\ref{sec:scheduling-specialization}) that enables replicating code behind one or more \emph{branches}, each guarded by a predicate.
Each branch can be scheduled independently, and its predicate is used to simplify the body.
A common use case is to specialize a pipeline to common input sizes and reduce bounds computations.
If a func is not specialized, that data can be regarded as $0$.
In any case, Lemma~\ref{lem:loop-naming} lets us relate syntax fragments in the IR to their \emph{provenance} in the original algorithm.
The following lemma uses this to state that funcs are computed and allocated in a valid order in the IR.

\begin{lemma}[Dominance]\label{lem:dominance}
Let $P\in\kwd{Alg}$ and $S\in\kwd{Sched}$ be a valid algorithm and schedule, and let $P' = \mathcal{S}(S,\mathcal{L}(P))$.
If a func $f$ appears in the definition of a func $g$ in $P$, then the loops for $f$ \emph{dominate} the assignment statement for $g$ in $P'$.
Furthermore, the $\kwd{allocate}$ statement for any func $f$ dominates the loops for $f$ in $P'$.
\end{lemma}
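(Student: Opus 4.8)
The plan is to prove both dominance statements by induction on the length of the schedule $S = s_1; \dots; s_n$, carrying the two claims as a single \emph{joint} invariant through the chain $T_0 = \mathcal{L}(P)$ and $T_i = \mathcal{S}(s_i, T_{i-1})$. Throughout I would lean on Lemma~\ref{lem:loop-naming} to give a stable, provenance-based meaning to the phrases ``the loops for $f$'' and ``the assignment statement for $g$'': since every for loop and every buffer-update statement is tagged by the func (and stage) it implements, these referents remain well defined even as statements are relocated by scheduling. The notion of \emph{dominance} I would use is the standard control-flow one read off the small-step semantics of Figure~\ref{fig:target-semantics}: statement $A$ dominates statement $B$ when every execution that reaches $B$ first executes $A$. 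The two claims must travel together because the relocating directives ($\kwd{compute-at}$ moves loops, $\kwd{store-at}$ moves the allocate) reference one another, so neither invariant can be maintained in isolation.

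For the base case $T_0 = \mathcal{L}(P)$, both facts fall directly out of the shape of the lowering function (Figure~\ref{fig:lowering}). The clause for $\kwd{fun} f(\svec{x}) = B$ emits $\kwd{allocate} f(\svec{?^{\mem}f_x})$ immediately followed by $\kwd{label} f : \mathcal{L}_B(f, \svec{x}, B)$, so the allocate for $f$ sequentially precedes, and hence dominates, every loop in $f$'s body; this gives the second claim. For the first claim, the pipeline clause emits the funcs in definition order ($\svec{\mathcal{L}(F)}$ before the final func), and the validity requirement (Definition~\ref{def:valid}) that every func be defined before it is referenced forces $f$ to be lowered before $g$ whenever $f$ occurs in $g$'s definition. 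As these top-level loop nests are composed with ``$;$'', the loops for $f$ precede, and therefore dominate, the loops for $g$, and in particular the assignment statement nested inside them.

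For the inductive step I would do a case analysis over the primitive directives of \S\ref{sec:scheduling}, showing each preserves the joint invariant. The directives that only reshape or reorder loops within a single func---such as $\kwd{split}$ (with any tail strategy) and loop reordering---neither change the relative order of distinct funcs' loop nests nor move an $\kwd{allocate}$ past the loops it guards, so both relations are immediate. The genuinely non-trivial cases are $\kwd{compute-at}$ and $\kwd{store-at}$, which physically move $f$'s loop nest (respectively its $\kwd{allocate}$) to a nested position inside a consumer's loops. Here I would argue that the validity conditions on schedules guarantee the insertion site dominates the relevant use: $\kwd{compute-at}(f, \tup{g, \dots})$ places $f$'s loops at a program point that dominates every access to $f$ inside $g$, preserving the first claim, while $\kwd{store-at}$ is constrained to relocate the $\kwd{allocate}$ no deeper than the $\kwd{compute-at}$ site, so it still dominates $f$'s now-relocated loops, preserving the second. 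Specialization is absorbed by Lemma~\ref{lem:loop-naming}'s provenance index, checking each branch independently.

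I expect the main obstacle to be precisely these relocating directives. For the reordering cases dominance is essentially syntactic, but for $\kwd{compute-at}$ and $\kwd{store-at}$ I must appeal to the exact mechanics of how the directive chooses and splices in its insertion point, and to the schedule-validity conditions that forbid computing a func ``beside'' or ``after'' (rather than ``above/before'') its use. The crux is showing that these well-formedness conditions are strong enough to guarantee genuine \emph{dominance}---not mere textual precedence---even in the presence of the $\kwd{if}$-guards introduced by tail strategies and the branch duplication introduced by specialization, where the property must be verified on every control-flow path rather than read off linear program order.
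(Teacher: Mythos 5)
Your proposal is correct and takes essentially the same route as the paper: dominance holds by construction in the lowered program (Figure~\ref{fig:lowering}), and is then carried as an invariant preserved directive-by-directive through each scheduling phase, as the appendix lemmas note. One remark: the crux you anticipate largely dissolves, because the paper makes dominance preservation for $\kwd{compute-at}$ and $\kwd{store-at}$ a \emph{definitional} side condition of the directives themselves (\S\ref{sec:scheduling-compute}, \S\ref{sec:scheduling-storage} permit the move only if the relocated statement still dominates all relevant accesses), so there is nothing to derive there beyond checking the side condition is stated in terms of genuine dominance, with specialization handled via Lemma~\ref{lem:specialize-active}'s guarantee that the exhaustive branch structure always executes exactly one copy of the producer's loops.
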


The previous two lemmas hold just after lowering by construction.
Each scheduling directive needs to show that it maintains these invariants.
Lowering also introduces a set of labeled \emph{bounds holes}, which will be filled by the \emph{bounds inference} oracle (\S\ref{sec:bounds}), and which carry the following data.

\begin{definition}[Bounds hole]\label{def:bounds-hole}
A bounds hole is an entity in the expression language of $\kwd{Tgt}^?$ that stands in for a hole-free expression.
A bounds hole is labeled by (1) whether it is an allocation hole ($\mem$) or a compute hole ($\cpu$), (2) whether it represents the minimum ($\min$) of an interval, or its length ($\len$), (3) the associated func and dimension, and (4) if it is a compute hole, the associated stage and specialization.

Across specializations, the last stage of a given func always uses a common bounds hole.
We omit the stage number when referring to the \emph{last} stage of a func and we omit the specialization number when the func is not specialized.
Finally, we write $?^{\mem}f_x = \left[(?^{\mem}f_x)^{\min}, (?^{\mem}f_x)^{\len}\right]$ for the allocation bounds interval for func $f$, dimension $x$.
By analogy, $?^{\cpu}f_x^{i,j}$ denotes the compute bounds interval for func $f$, dimension $x$, stage $i$, and specialization $j$.
\end{definition}

Finally, the labels attached to func references ($f^\ell[\dots]$) record the \emph{previous} stage and current specialization. This helps the bounds extraction procedure (\S\ref{sec:bounds}) construct the necessary predicates to ensure safety and correctness.

\section{Bounds Inference}
\label{sec:bounds}
\begin{figure}
\footnotesize
\begin{framed}
\begin{subfigure}[b]{0.55\textwidth}
\centering
\newcommand{\BB}{\mathcal{B}}
  \begin{align*}
    \BB(\kwd{pipeline} f(\svec{p}, &\svec{x^{\min}, x^{\len}}) : s) = \ldots \\
    \svec{\forall p\in[-\infty,\infty]}: &\BB(s) \land \svec{[x^{\min}, x^{\len}] \in ?^{\cpu}f} \\
    \BB(\kwd{assert} e) &= e \\
    \BB(\kwd{nop}) &= \kwd{true} \\
    \BB(\kwd{allocate} f(\dots) ; s) &= \exists ?^{\mem}f : \BB(s) \\
    \BB(s_1 ; s_2) &= \BB(s_1) \land \BB(s_2) \\
    \BB(\kwd{label} f : s) &= \exists ?^{\cpu}f: \BB(s) \\
    \BB(\kwd{let} v = e \kwd{in} s) &= \BB_\cpu(e) \land \kwd{let} v = e \kwd{in} \BB(s) \\
    \BB(\kwd{if} e \kwd{then} s_1 \kwd{else} s_2)
        &= \BB_\cpu(e) \land e \implies \BB(s_1) \land \lnot e \implies \BB(s_2)
  \end{align*}
  \[ \BB(\kwd{for} v \kwd{in} [e^{\min}, e^{\len}] \kwd{do} s) = e^{\len} \geq 0 \land \forall v \in [e^{\min}, e^{\len}] : \BB(s) \]
  \begin{align*}
    \BB\left(f^{i,j}[\svec{e}] \!\leftarrow\! e_0 \right)
      &= \svec{e} \in ?^{\mem}f \land \svec{\BB_\cpu(e)} \land \BB_\mem(e_0)  \\
      &  \land \svec{e} \in ?^{\cpu}f^{i,j} \implies \BB_\cpu(e_0)
  \end{align*}
  \emph{\scriptsize where}
  \[ \BB_\cpu(f^{i,j}[\svec{e}]) = \svec{\BB_\cpu(e)} \land \svec{e} \in ?^{\mem}f \land \svec{e} \in ?^{\cpu}f^{i,j} \]
  \[ \BB_\mem(f[\svec{e}]) = \svec{\BB_\cpu(e)} \land \svec{e} \in ?^{\mem}f \]
  \emph{\scriptsize Remaining cases for $\BB_\cpu, \BB_\mem$ fold with union.}
  
  \caption{Query extraction function $p=\BB(T)$ produces predicate from program $T\in\kwd{Tgt}^?$.}
  \label{fig:bounds-extraction}
\end{subfigure}\hfill%
\begin{subfigure}[b]{0.43\textwidth}
\centering
\newcommand{\BB}[1]{\beta_0\!\left(#1\right)}
\newcommand{\II}[1]{\mathcal{I}\!\left(#1\right)}
\newcommand{\env}{\Gamma}
\begin{align*}
\BB{\forall v\in I: b, \env} &= \BB{b, \env[v\mapsto I]} \\
\BB{b_1 \land b_2, \env} &= \BB{b_1, \BB{b_2, \env}} \\
\BB{e \implies b, \env} &= \BB{b, \env} \\
\BB{\exists~?^\ell : b, \env} &= \BB{b, \env} \\
\BB{\kwd{let} v = e \kwd{in} b, \env} &= \BB{b, \env[v\mapsto \II{e, \env}]} \\
\BB{\svec{e} \in ~?^\ell, \env} &= \env\left[?^\ell \mapsto \env(?^\ell) \cup \svec{\II{e, \env}}\right]
\end{align*}

\hrule
\vspace{1ex}

\fbox{Note: $[\omega^{\min}_i, \omega^{\len}_i] = \II{e_i}$ below.}
\[ \II{e_1 + e_2, \env} = [\omega^{\min}_1 + \omega^{\min}_2, \omega^{\len}_1 + \omega^{\len}_2 - 1] \]
\emph{(standard interval arithmetic rules elided)}
\begin{align*}
\II{e_1 \Div e_2, \env} &= [-M, 2M - 1] \\
\intertext{\emph{where $M=\max(-\omega^{\min}_1, \omega^{\min}_1+\omega^{\len}_1-1)$}}
\II{f^\ell[\svec{e}], \env} &= [-\infty, \infty] \\
\II{v, \env} &= \env(v) \\
\II{c, \env} &= [c, 1]
\end{align*}
\caption{Baseline bounds engine $\beta_0(p)$ applies naive interval arithmetic rules to queries produced by $\mathcal{B}$.}
\label{fig:bounds-inference}
\end{subfigure}
\end{framed}
\caption{Overview of the bounds inference system, showing query extraction and the baseline bounds engine $\beta_0$.}
\end{figure}

Previous work on Halide discusses bounds inference in terms of a particular algorithm used to fill the bounds holes.
Improvements to the compiler regularly change the results of this algorithm, resulting in an unstable definition in practice.

In order to abstract over the ever-changing bounds inference algorithm, we pose bounds inference as a \emph{program synthesis} problem via an oracle query.
While the resulting satisfiability problem is undecidable in general, this definition provides previously underformulated soundness conditions for any bounds inference algorithm. Queries to this oracle are defined as follows:

\begin{definition}[Bounds oracle query]\label{def:oracle-query}
Let $P\in\kwd{Alg}$ be an algorithm and let $S\in\kwd{Sched}$ be a schedule for it so $T = \mathcal{S}(S, \mathcal{L}(P))$.
Then a query to the \emph{bounds oracle} $\mathcal{O}$ is the \emph{predicate} $p = \mathcal{B}(T)$.
The oracle responds with some set of hole substitutions $\Gamma \in \mathcal{O}(p)$ that is compatible with $T$.
Hence, the set $\kwd{BI}(T) = \set{[\Gamma] T | \Gamma \in \mathcal{O}(\mathcal{B}(T))}$.
\end{definition}

Recall from definition~\ref{def:bounds-hole} that there are two kinds of bounds.
The \emph{compute bounds} define regions over which the points in the buffers must have non-error values that agree with those defined by the original algorithm.
The \emph{allocation bounds} enclose the compute bounds, and further includes at least all points read from or written to.
As we saw in the example (\S\ref{sec:example}), this gap can be exploited by overcompute strategies during scheduling (\S\ref{sec:scheduling-loops}).

\subsection{Bounds constraint extraction}\label{sec:bounds-extraction}
The algorithm for extracting the bounds constraints for a program $T\in\kwd{Tgt}^?$ is shown in Figure~\ref{fig:bounds-extraction}.
The extraction traverses the AST of the program and translates every statement into a logical condition with existentially quantified holes.

This extraction encodes a few important correctness conditions.
First, if a point being computed lies in the compute bounds, then all of the accesses on the right hand side of the assignment must be in the compute bounds of their funcs.
(What happens outside the compute bounds stays outside the compute bounds.)
Second, accesses occurring anywhere inside an expression that is used for \emph{indexing} or \emph{branching} must be in the compute bounds as well.
Finally, every point that is read anywhere in the program must at least be in the allocation bounds, in order to preserve memory safety.

This second point is particularly important: splitting loops in data-dependent update stages (such as when computing a histogram) will introduce $\kwd{if}$ statements whose values must not be errors resulting from reading uninitialized memory. The rule for $\kwd{let}$ is similarly motivated; $\kwd{let}$ expressions are only introduced by scheduling directives to hold expressions used for indexing (\S\ref{sec:scheduling}), so accesses there must be in the compute bounds.

\subsection{Reference algorithm}\label{sec:bounds-inference_algorithm}
Figure~\ref{fig:bounds-inference} gives the baseline bounds inference algorithm $\beta_0$.
It works by scanning the extracted constraint and performing interval arithmetic (via $\mathcal{I}$) on the terms, naively trying to symbolically satisfy the consequent of each implication without using its predicate (ie. unconditionally).
$\beta_0$ merges these intervals to determine safe coverings for each hole.
Because the constraint is extracted from the fully scheduled target program, it can rely on the association order of $\land$ to reflect the sequencing order in the original program and ensure that we make inferences about holes backwards through the dependencies.
Since $\beta_0$ only produces a list of substitutions, it does not meet the bounds engine definition (\ref{def:bounds-engine}) on its own.
However, it is easily lifted to a bounds engine by applying the substitutions whenever every hole is determined and no $\pm\infty$ appears in the substitutions.
When this is not the case, it simply fails by replacing the body with $\kwd{assert} \kwd{false}$.

Beyond the na\"{i}vety of the algorithm, interval arithmetic has an inherent \emph{dependency problem}.
The classic example is $x^2+x$, where $x\in[-1,1]$ and so $x^2\in[0,1]$. 
Adding these bounds gives $[-1,2]$, which is slightly wider than the true bounds: $[-\frac{1}{4}, 2]$.
This is because interval arithmetic treats $x^2+x$ as $x^2+y$, where $y$ varies \emph{independently} over the same interval as $x$.
These errors can accumulate rapidly as expressions grow larger.

$\beta_0$ is only meant to be a \emph{baseline}; and, although this algorithm is quite na\"{i}ve, it still identifies tight bounds for the example in \S\ref{sec:example}.
The practical system contains many improvements over this, including analyses of function value ranges, of correlated differences and sums, and of simplifications based on scoped facts.

\subsection{Metatheory}
Finally, we state the main lemmas concerning the structure of solutions to the bounds inference problem.

\begin{lemma}[Memory safety]\label{lem:bounds-memory-safety}
All programs resulting from bounds inference $P'\in\kwd{BI}(T)$, are memory safe.
\end{lemma}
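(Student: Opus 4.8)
The plan is to reduce the lemma to a single invariant relating the static constraint $\mathcal{B}(T)$ to every reachable state of the small-step execution. By Definition~\ref{def:oracle-query}, any $P' \in \kwd{BI}(T)$ has the form $P' = [\Gamma]T$ for some $\Gamma \in \mathcal{O}(\mathcal{B}(T))$, so the hole substitution $\Gamma$ is a solution to the extracted predicate $\mathcal{B}(T)$: it makes the predicate true under every assignment to its universally quantified loop variables and parameters. By Definition~\ref{def:memory_failure}, memory safety means that execution never gets stuck at a [Read] or [Assn] whose $\mathrm{InBounds}(f[\svec{c}], \sigma)$ premise fails. It therefore suffices to show that whenever execution is poised to perform such an access, the access point lies within the allocation bounds recorded in $\sigma$.

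First I would define a semantic satisfaction judgement $\Sigma, \Gamma \models \mathcal{B}(s)$, interpreting the free loop and let variables of the residual statement $s$ via the environment $\Sigma$ and the holes via $\Gamma$, together with a store-consistency predicate stating that for every func $f$ already allocated in $\sigma$, the bounds recorded by the earlier [Alloc] step equal $[\Gamma]\,?^{\mem}f$ evaluated under the environment in force at allocation time. The core of the proof is then a preservation argument: I would show, by induction over the reduction relation of Figure~\ref{fig:target-semantics}, that the conjunction of these two conditions is an invariant of execution. The base case follows from $\Gamma$ satisfying $\mathcal{B}(T)$ together with [Realize], which binds the parameters $\svec{p}$ and matches the top-level $\forall \svec{p}$ quantifier. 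The inductive step is a case analysis pairing each reduction rule with the corresponding clause of $\mathcal{B}$: [For-Iter]/[For-Stop] against $e^{\len}\ge 0 \land \forall v\in[e^{\min},e^{\len}]:\mathcal{B}(s)$, so that each unrolled iteration and the residual tail loop still satisfy their constraints; [Let] against the let-clause; [If-T]/[If-F] against the two guarded implications; and [Alloc] against $\exists\, ?^{\mem}f:\mathcal{B}(s)$, where $\Gamma$ supplies the witness that re-establishes store-consistency for the freshly allocated buffer.

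With the invariant in hand I would dispatch the memory-access rules directly. At an [Assn] step $f^{i,j}[\svec{c}]\leftarrow c'$, the invariant gives $\Sigma,\Gamma\models \svec{e}\in\,?^{\mem}f$ from the assignment clause of Figure~\ref{fig:bounds-extraction}; since the indices have already been evaluated to $\svec{c}$ through the [Reduce] evaluation-context steps, this says exactly that $\svec{c}$ lies in $[\Gamma]\,?^{\mem}f$, which by store-consistency is the interval stored in $\sigma(f)$, whence $\mathrm{InBounds}$ holds. Reads are handled symmetrically using $\mathcal{B}_\mem$ for values and $\mathcal{B}_\cpu$ for indexing and branching subexpressions, both of whose clauses likewise assert membership in $?^{\mem}f$. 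I would also invoke Lemma~\ref{lem:dominance} to guarantee that the relevant $\kwd{allocate}$ statement has already run before any access to $f$, so store-consistency for $f$ is genuinely in scope at every such step.

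I expect the main obstacle to be the correspondence between the static, all-at-once universal quantifiers of $\mathcal{B}$ and the one-iteration-at-a-time unrolling of the small-step semantics, threaded through the evaluation contexts $E$. The delicate points are: showing that $\forall v\in[e^{\min},e^{\len}]:\mathcal{B}(s)$ descends both to the single bound value $v=c^{\min}$ and to the shortened residual loop, so the invariant survives [For-Iter]; confirming that subexpression evaluation inside an evaluation context cannot itself trigger an out-of-bounds read, which requires that the $\mathcal{B}_\cpu$ and $\mathcal{B}_\mem$ obligations on nested accesses are discharged before the enclosing access fires; and aligning the environment-threading in the let and for clauses of $\mathcal{B}$ with the $\Sigma$-updates performed by [Let] and [For-Iter]. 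Each is routine once the satisfaction judgement is set up carefully, but getting that bookkeeping exactly right is the crux of the argument.
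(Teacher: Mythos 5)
Your proposal is correct and takes essentially the same approach as the paper: its proof of Lemma~\ref{lem:bounds-memory-safety} is a three-sentence sketch resting on exactly your two ingredients---Lemma~\ref{lem:dominance} to ensure the $\kwd{allocate}$ statement for $f$ dominates (hence precedes) every access, and the access clauses of $\mathcal{B}$ in Figure~\ref{fig:bounds-extraction}, which force every point read or written into $?^{\mem}f$, so that $\mathrm{InBounds}$ always holds. Your satisfaction judgement and preservation induction over the small-step semantics are just a careful, fully spelled-out rendering of that same argument.
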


\begin{lemma}[Compute bounds confluent]\label{lem:compute-confluency}
Let $P\in\kwd{Alg}$, $P'\in\kwd{BI}(\mathcal{L}(P))$, and let $f$ be a func in $P$.
If all of the points in compute bounds of funcs preceding $f$ are confluent with $P$, then the loop nest for $f$ computes values confluent with $P$.
\end{lemma}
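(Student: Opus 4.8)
The plan is to exhibit a simulation between the small-step execution of the IR loop nest that $\mathcal{L}$ emits for $f$ and the big-step, substitution-based evaluation of $f$'s points in the algorithm, and to push it through by induction on the stages $U_0; \cdots; U_m$ of $f$. Write $\hat{f}$ for the partial function recorded in the store for $f$. I maintain the invariant that once the IR has finished executing the loop nest for stage $i$, then for every point $\svec{c}$ in the compute bounds $?^{\cpu}f^i$ the stored value $\hat{f}(\svec{c})$ equals the algorithm value of $f$ \emph{truncated to stages $0$ through $i$}, i.e. the value $D; f[\svec{c}] \bigeval c'$ would produce were $f$'s body cut off after $U_i$ (unless an error or assertion failure intervenes, handled below). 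Taking $i=m$ and recalling that the last stage uses the common compute-bounds hole $?^{\cpu}f$ then yields the lemma.

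Before the induction I would record two structural facts. First, since $P' \in \kwd{BI}(\mathcal{L}(P))$, by Definition~\ref{def:oracle-query} the substituted holes satisfy the extracted predicate $\mathcal{B}(\mathcal{L}(P))$; reading off the clause for assignments $f^{i,j}[\svec{e}]\leftarrow e_0$ in Figure~\ref{fig:bounds-extraction}, whenever the written point lies in $?^{\cpu}f^{i,j}$ every func access on the right-hand side (and in every index and predicate subexpression) lies in the \emph{compute} bounds of its func, and every access anywhere lies in the \emph{allocation} bounds. This is exactly what guarantees the [Read] rule never touches an uninitialized ($\varepsilon_\mem$) cell and never gets stuck out of bounds. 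Second, by Lemma~\ref{lem:dominance} the allocation and loop nests of every func $g$ occurring in $f$'s definition dominate $f$'s assignments, so by the time $f$'s nest runs those funcs hold their final stored values; combined with the hypothesis that preceding funcs are confluent with $P$ and the in-bounds guarantee, each read $g[\svec{e}]$ returns precisely the algorithm value $P(z)(g)(\svec{e})$.

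For the base case (pure stage $U_0=e$) the nest iterates over $?^{\cpu}f^0$ and performs $f^0[\svec{x}]\leftarrow e$ at each point $\svec{c}$. Because the IR and algorithm expression semantics coincide on every operator save func access (\S\ref{sec:target_lang}), and the func accesses inside $e$ agree with their algorithm values by the previous paragraph, the [Assn] rule stores exactly the value of $[\svec{c/x}]e$, which is the algorithm value of $f$ after stage $0$ via [Func-Eval]. For the inductive step I would first observe that the loop order emitted by $\mathcal{L}_R$ (with $r_1$ innermost and $r_k$ outermost) coincides with the order in which the $\mathit{unroll}$ expansion of [RDom-Eval] peels reduction variables, so the sequence of in-place IR writes lines up one-for-one with the sequence of point updates the algorithm performs. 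I would then run an inner induction over this iteration sequence, showing that each IR write reproduces one [Update-Eval] step: the $\kwd{if} e_P$ guard leaves $\hat{f}$ unchanged exactly when the algorithm's $\kwd{select}(\svec{e=x}\land e_p, e_b, f[\svec{x}])$ selects the old value $f[\svec{x}]$, and when the guard holds the stored $e_B$ matches the chosen branch. The references appearing as $f^{i-1}$ in the lowered body read the current buffer contents, so the crux is that these self-reads observe the intended partially-updated state.

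The main obstacle is precisely this faithfulness of the \emph{in-place} update to the algorithm's shadowing semantics. The algorithm recomputes each point independently by substitution, whereas the IR mutates a single buffer in iteration order; the two can agree only because the syntactic separation restriction (Definition~\ref{def:separation}) forces every self-access in $U_i$ to use $x_j$ in each pure dimension $j$, so a write to $f[\dots,x_j,\dots]$ can influence only reads at the very same pure coordinate, while in the reduction dimensions the matched loop/unroll order makes the accumulation order identical. Carefully invoking the restriction to certify that no IR self-read sees a value the algorithm has not yet produced (nor misses one it has) is the delicate step, and it is where the renaming $[f^{i-1}/f]$ and the labels recording the previous stage must be reconciled with the label-agnostic store semantics. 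Finally, the confluence escape hatches are threaded through the induction: if some intervening point of $f$ (or of an index/predicate subexpression) evaluates to an error value in the algorithm, or if a reduction interval has negative length---triggering [RDom-Err] in the algorithm and a negative-extent loop for which neither [For-Iter] nor [For-Stop] applies in the IR---then the matching clause of Definition~\ref{def:confluence} is discharged instead of the value equality.
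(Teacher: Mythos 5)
Your proposal is correct and follows essentially the same route as the paper's own proof: a stage-wise induction on $f$'s body in which the pure-stage base case follows from the absence of self-reference, the inductive step matches the lowered loop nest (pure loops over the compute-bounds holes outermost, rdom loops innermost) against the $\mathit{unroll}$ expansion of [RDom-Eval], syntactic separation (Definition~\ref{def:separation}) confines each pure-coordinate iteration to disjoint memory, and the bounds-extraction clauses plus Lemma~\ref{lem:dominance} supply the in-bounds and producer-ordering guarantees. Your version is somewhat more explicit than the paper's (the truncated-stage simulation invariant, the guard/select correspondence, and the error/rdom-failure escape hatches), but these are elaborations of the same argument rather than a different approach.
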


The proofs of these lemmas are deferred to the appendices.
Together, they form the base case of the inductive proof that the scheduling directives are sound.

\section{Scheduling Language}
\label{sec:scheduling}
We formalize scheduling by directly mutating programs in $\kwd{Tgt}^?$.
Because some directives --- like split --- must be applied after certain other directives, we require that schedules be ordered into phases\footnote{The practical system sorts directives into phases automatically.} as indicated in Figure~\ref{fig:schedule-syntax}.
Scheduling directives use loop names to determine their targets.

\begin{wrapfigure}[17]{R}{0.53\textwidth}
\footnotesize
\begin{tabularx}{\linewidth}{rcXl}
  \toprule
  $S$    & $::=$ & $s_1 ; \dots ; s_n$                       & schedule program \\
  $\ell$ & $::=$ & $\tup{f, i, j, v}$                        & loop names (\S\ref{lem:loop-naming}) \\
  $\tau$ & $::=$ & $\kwd{serial} ~|~ \kwd{parallel}$         & traversal orders \\
  $\varphi$ & $::=$ & $\varphi_\text{Guard} ~|~ \varphi_\text{Shift} ~|~ \varphi_\text{Round}$ & split strategies \\
  \toprule
  $s$    & $::=$ & $\kwd{specialize}(f, e_1, \dots, e_n)$    & Specialization (\S\ref{sec:scheduling-specialization}) \\
  \midrule
         & |     & $\kwd{split}(\ell, x_o, x_i, e, \varphi)$ & Loops (\S\ref{sec:scheduling-loops})\\
         & |     & $\kwd{fuse}(\ell, x)$                     & \\
         & |     & $\kwd{swap}(\ell)$                        & \\
         & |     & $\kwd{traverse}(\ell, \tau)$              & \\
  \midrule
         & |     & $\kwd{compute-at}(f, \ell_g)$             & Compute (\S\ref{sec:scheduling-compute}) \\
  \midrule
         & |     & $\kwd{store-at}(f, \ell_g)$               & Storage (\S\ref{sec:scheduling-storage}) \\
  \midrule
         & |     & $\kwd{bound}(f, x, e^{\min}, e^{\len})$   & Bounds (\S\ref{sec:scheduling-bounds}) \\
         & |     & $\kwd{bound-extent}(f, x, e^{\len})$      & \\
         & |     & $\kwd{align-bounds}(f, x, e^m, e^r)$      & \\
  \bottomrule
\end{tabularx}
\Description{Scheduling language}
\caption{The Halide scheduling language. The $s$ definition is grouped by phase of scheduling (presented in order).}
\label{fig:schedule-syntax}
\end{wrapfigure}
In Figure~\ref{fig:schedule-semantics} we show the IR transformations for each scheduling directive.
In each subsequent section, we describe each phase and enumerate its restrictions, but defer safety proofs to the appendices.
For each phase, we require an inductive lemma like the following:

\begin{lemma}[Scheduling phase is sound]\label{lem:schedule-phase-generic}
Let $P\in\kwd{Alg}$ be a valid algorithm and let $T_i \in\kwd{Tgt}^?$ be the result of lowering $P$ and applying scheduling directives up through this phase.
Let $s$ be a scheduling directive in this phase, then $T_{i+1}=\mathcal{S}(s, T_i)$ is confluent with $P$.
\end{lemma}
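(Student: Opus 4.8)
The plan is to prove the lemma by reducing confluence to output equivalence and then performing a case analysis on the directive $s$, using the inductive hypothesis that $T_i$ is confluent with $P$. Fix an arbitrary completion $P'_{i+1}\in\kwd{BI}(T_{i+1})$ and an input $z$. By Definition~\ref{def:confluence}, confluence holds trivially unless $P(z)$ has no error value in $R(z)$ \emph{and} $P'_{i+1}(z)$ passes every assertion, so I assume both and must establish $P\simeq_z P'_{i+1}$. Since $P'_{i+1}$ solves the extracted predicate $\mathcal{B}(T_{i+1})$ (Definition~\ref{def:oracle-query}), I get to assume the correctness conditions baked into Figure~\ref{fig:bounds-extraction}: every read contributing to a value, an index, or a branch lands in the compute bounds of its func, and every point written within compute bounds is assigned a non-error value.

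The common tool across all cases is the \emph{compute-bounds invariant}: ``each buffer agrees with $P$ throughout its compute bounds.'' I would establish that any completion of a program confluent with $P$ maintains this invariant by a structural induction over the loop nests, taken in the order imposed by Lemma~\ref{lem:dominance} (Dominance). The base case is exactly Lemma~\ref{lem:compute-confluency}; the step observes that, because $\mathcal{B}(T_{i+1})$ forces the right-hand-side reads to stay inside the compute bounds of already-correct predecessor funcs, each assignment within compute bounds evaluates to the algorithm-correct value. Two features of Halide are absorbed here using determinism of the total expression language: \emph{redundant recomputation} is harmless because recomputing a point yields the value it already had, and \emph{overcompute} on uninitialized data is harmless because the compute-bounds-closure property ensures the garbage written outside compute bounds is never read back inside them. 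Since the requested window $R(z)$ lies in the output func's compute bounds (the top-level conjunct $[x^{\min},x^{\len}]\in ?^{\cpu}f$ of $\mathcal{B}$), output equivalence on $R(z)$ reduces to showing the directive preserves this invariant.

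To show preservation, the inductive hypothesis (confluence of $T_i$) is transferred via a per-directive correspondence: for each $s$ I would exhibit a map sending $P'_{i+1}$ either to an assertion failure or to a completion of $T_i$ that computes the same values on $R(z)$, so that the IH and the compute-bounds invariant finish the case. The purely \emph{reordering} directives ($\kwd{swap}$, $\kwd{fuse}$, $\kwd{traverse}$) are easiest: the transformed nest visits the same assignments, so---dependencies being respected by Dominance---the compute-bounds contents are unchanged. For $\kwd{split}(\ell,x_o,x_i,e,\varphi)$ I would check each tail strategy separately: $\varphi_\text{Guard}$ keeps the iteration domain identical (the inserted $\kwd{if}$ restores it), while $\varphi_\text{Shift}$ and $\varphi_\text{Round}$ enlarge it, so here I must verify that the extra iterations only write \emph{outside} compute bounds and hence cannot disturb $R(z)$. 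The bounds-tightening directives ($\kwd{bound}$, $\kwd{bound-extent}$, $\kwd{align-bounds}$) only strengthen the extracted predicate, so a solution can at worst be pushed into the assertion-failure disjunct---never made to disagree on $R(z)$---and $\kwd{specialize}$ is handled by arguing each guarded branch under its predicate.

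\textbf{The main obstacle} will be $\kwd{compute-at}$ and $\kwd{store-at}$, the genuinely non-reordering directives, since they relocate a loop nest and rescope a buffer. First I must re-establish Dominance and Loop naming (Lemmas~\ref{lem:dominance} and~\ref{lem:loop-naming}) after the relocation, so that provenance tracking and ``allocate-dominates-use'' survive. Then the delicate point is that the compute bounds of the relocated func $g$ now depend on the enclosing loop variables of its consumer $f$: the structural induction must be carried out \emph{per instantiation} of those variables, re-synthesizing $g$'s per-tile compute window via $\mathcal{B}$ and checking that every read of $g$ inside $f$'s compute bounds lands in the freshly computed, correct window, while uninitialized entries of a possibly over-allocated $g$ buffer are never read within compute bounds. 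It is precisely this interaction between per-tile bounds synthesis and the compute-bounds-closure property that makes these two cases the crux of the argument.
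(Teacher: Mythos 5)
Your overall architecture (induction over directives, reduction of confluence to output equivalence on $R(z)$, a compute-bounds invariant anchored by Lemma~\ref{lem:compute-confluency} and Lemma~\ref{lem:dominance}, per-directive case analysis) matches the paper's, but there is a genuine gap in the loop-phase cases. You justify $\kwd{swap}$, $\kwd{fuse}$, and $\kwd{traverse}$ by saying the transformed nest ``visits the same assignments, so---dependencies being respected by Dominance---the compute-bounds contents are unchanged.'' Lemma~\ref{lem:dominance} is the wrong tool here: it only orders producer loop nests before consumer assignments and $\kwd{allocate}$ statements before uses, and says nothing about permuting iterations \emph{within} a single stage's loop nest. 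Swapping executes the same assignment instances in a permuted order, and parallel traversal executes them concurrently; to conclude the buffer is unchanged you need that no two instances touching the same memory are ever reordered. The paper derives exactly this from the syntactic separation restriction (Definition~\ref{def:separation}) through the notion of pure loops (Definition~\ref{def:pure-loop}): distinct pure-loop iterations write disjoint points of $f$, $\kwd{split}$ and $\kwd{fuse}$ preserve this property, so $\kwd{traverse}$ (restricted to pure loops) is race-free and $\kwd{swap}$ (forbidden from reordering two reduction loops) never interchanges two writes to common memory. Your proposal never invokes syntactic separation anywhere, so the swap/traverse case is unsupported as written. Relatedly, your $\kwd{split}$ case mischaracterizes $\varphi_{\mathrm{Shift}}$: shifting does not write outside the compute bounds---it \emph{re-executes} iterations inside the original window---so your proposed check that ``the extra iterations only write outside compute bounds'' would fail for it; the correct argument (which your general-tools paragraph actually contains, but you fail to deploy here) is idempotence of recomputation on pure stages.

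On $\kwd{compute\text{-}at}$, you correctly identify the crux---completions of $T_{i+1}$ have per-tile bounds mentioning loop variables not in scope for the holes of $T_i$, so no naive correspondence with completions of $T_i$ exists---but your per-instantiation re-synthesis sketch leaves the hard step undeveloped. The paper resolves it with a dedicated device: the narrowing order on completions (Definition~\ref{def:narrowing}) and the matching lemma (Lemma~\ref{lem:narrowing-match}), showing that any completion of $T_{i+1}$ that is not a relocation of a completion of $T_i$ is narrowed by one, and that narrowing preserves buffer contents on the narrower bounds. Something equivalent to that lemma is what your map-based strategy would have to prove, and it is the missing load-bearing piece. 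Finally, note that the paper's $\kwd{split}$/$\kwd{fuse}$ case is disposed of more cheaply than your invariant-preservation argument: the extracted predicates $\mathcal{B}(T_i)$ and $\mathcal{B}(T_{i+1})$ are logically equivalent with unchanged hole scopes, yielding a bijection of completions that execute identical statement instances in identical order.
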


\subsection{Specialization Phase}\label{sec:scheduling-specialization}
Certain scheduling decisions may be more or less efficient, depending on program parameters.
For instance, simpler schedules tend to work better for small output sizes.

\emph{Specialization} duplicates an existing func's code for each of $n$ conditions, and introduces labels that allow later scheduling directives to operate differently on each instance.
These conditions, like all expressions in the scheduling language, are required to be \emph{start-up expressions}.
In our formal system, schedules may give at most one specialization directive per func.
The following lemma captures an essential property of specializations, namely that only one specialization is ``active'' during any given run.

\begin{lemma}[Unique active specialization]\label{lem:specialize-active}
Given algorithm $P \in\kwd{Alg}$ and a schedule $S \in \kwd{Sched}$, let $P'\in\kwd{BI}(\mathcal{S}(S,\mathcal{L}(P)))$.
Then for any input $z$, $P'(z)$ will evaluate exactly one specialization for any given func $f$.
\end{lemma}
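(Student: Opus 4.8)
The plan is to prove the lemma by maintaining a purely \emph{structural} invariant through scheduling and bounds inference, and then reading off the runtime behavior from the operational semantics of $\kwd{Tgt}$. The invariant I would carry along every intermediate program $T_i$ produced while applying $S$ is: the code belonging to each specialized func $f$ occurs as a single \emph{cascade} of guarded branches, i.e.\ an $\kwd{if}$--$\kwd{else}$ chain of the shape $\kwd{if} e_1 \kwd{then} \cdots \kwd{else} \kwd{if} e_2 \kwd{then} \cdots \kwd{else}\;(\text{default})$, in which every guard $e_j$ is a startup expression (Definition~\ref{def:startup-expression}) and the trailing branch is unconditional. Such a cascade is \emph{exhaustive} (the default always matches) and \emph{mutually exclusive} (the $\kwd{else}$ nesting means the first satisfied guard wins), so it already selects at most and at least one branch syntactically, independent of the particular values of the guards.

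For the base case I would inspect the definition of $\mathcal{S}(\kwd{specialize}(f, e_1,\dots,e_n), -)$ in Figure~\ref{fig:schedule-semantics}: it emits exactly this cascade, the guards are required to be startup expressions, and the original unspecialized body serves as the default branch, so the invariant holds the instant the directive is applied. For the inductive step I would check each later directive. By Lemma~\ref{lem:loop-naming} every loop, hence every region a directive rewrites, is tagged with a specialization index, so the loop directives ($\kwd{split}$, $\kwd{fuse}$, $\kwd{swap}$, $\kwd{traverse}$) and the bounds directives act strictly inside one branch of the cascade and therefore neither create nor delete branches nor touch any guard. The relocating directives $\kwd{compute-at}$ and $\kwd{store-at}$ may copy another func $g$'s loop nest into each branch of $f$, but they move whole already-cascaded sub-programs and leave the top-level guards of $f$ and the branch structure of $g$ intact; I would verify that the copy is applied uniformly to every branch so that $g$'s own cascade survives in each copy. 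Finally, because guards are startup expressions they are hole-free, so the oracle $\kwd{BI}$, which only substitutes intervals into $\mem$/$\cpu$ holes, leaves every guard and the branching untouched; hence the invariant persists all the way to $P' \in \kwd{BI}(\mathcal{S}(S,\mathcal{L}(P)))$.

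It then remains to read off execution. Fixing an input $z$, the [Realize] rule binds the parameters $\svec{p}$ to the supplied constants. Each guard $e_j$ mentions only parameters and no func accesses, so by totality of the expression operations (see the discussion following Definition~\ref{def:error-value}) together with rules [Var] and [Eval] it evaluates under $\Sigma$ to a definite integer $c_j \in \mathbb{Z}$, with no possibility of $\varepsilon_{\mathrm{mem}}$ (there are no reads) and no divergence. The rules [If-T]/[If-F] then drive the cascade deterministically: control enters the body of the least $j$ with $c_j \neq 0$, or the default branch if all $c_j = 0$. Because the $c_j$ are constants for the whole run, the \emph{same} branch is chosen on every encounter of the cascade, so for each func $f$ exactly one specialization body is ever executed, as claimed.

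I expect the main obstacle to be the inductive step for $\kwd{compute-at}$ and $\kwd{store-at}$, and more generally pinning down what ``evaluate exactly one specialization'' should mean once a specialized consumer replicates a (possibly itself specialized) producer across its branches. The care needed is to formalize ``active specialization'' per func as the index whose body is reachable under the cascade, to show that relocation preserves this per-func notion even after duplication, and to confirm that constancy of the startup guards forces the \emph{same} producer specialization in every replicated copy rather than merely one specialization per copy. Once the invariant is stated sharply enough to make this uniformity of the copies explicit, the remaining directive cases reduce to routine structural inductions.
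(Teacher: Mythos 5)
Your proposal is correct and follows essentially the same route as the paper's own proof: a per-func $\kwd{if}$--$\kwd{else}$ cascade with startup-expression guards, whose values are fixed by the input $z$ at realization time so the same branch is selected on every encounter, with preservation under later scheduling phases handled by induction (which the paper defers to its appendix lemmas while you sketch it inline). Your flagged obstacle about $\kwd{compute\text{-}at}$ duplicating a producer across branches is moot in this formalism---the rule in Figure~\ref{fig:schedule-semantics} \emph{relocates} the single labeled statement rather than copying it, with Lemma~\ref{lem:loop-naming}'s specialization-indexed loop names pinning down the target---so your invariant survives that phase even more directly than you anticipated.
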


It is also important to note that the transformation attaches the specialization instance to func references inside the copied (and original) statements.
As per definition~\ref{def:bounds-hole}, the rule in Figure~\ref{fig:schedule-semantics} should be interpreted to exclude the final stage when attaching this information.

\subsection{Loops Phase}\label{sec:scheduling-loops}
Halide provides several standard loop transformations to change the order of computations.
A loop can be \emph{split} into two nested loops, two nested loops can be \emph{fused} into a single loop, a loop may be \emph{swapped} with the immediately nested loop, and loops may be traversed in \emph{parallel}.
Swapping and parallelization apply only to \emph{pure loops}, a manifestation of pure dimensions in the target IR.
We define these here:
\begin{figure}
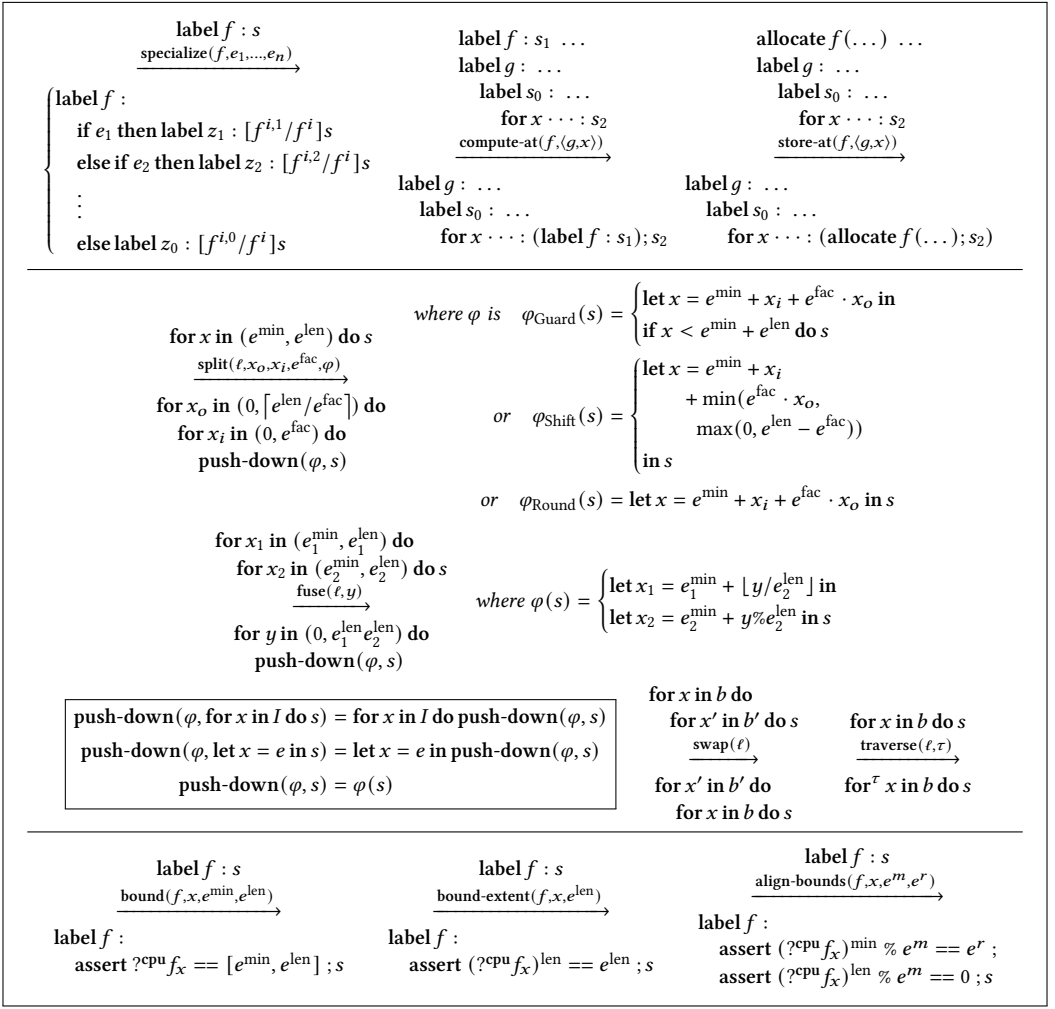

\footnotesize
\begin{framed}
\vspace*{-1em}
  \[
  \begin{array}{c}
    \kwd{label} f : s \\
    \xrightarrow{\kwd{specialize}(f, e_1, \dots, e_n)} \\
    \begin{cases}
      \kwd{label} f : & \\
      \quad \kwd{if} e_1 \kwd{then} \kwd{label} z_1 : [f^{i,1}/f^i] s & \\
      \quad \kwd{else} \kwd{if} e_2 \kwd{then} \kwd{label} z_2 : [f^{i,2}/f^i] s & \\
      \quad \vdots & \\
      \quad \kwd{else} \kwd{label} z_0 : [f^{i,0}/f^i] s &
    \end{cases}
  \end{array}%
  \hspace{-4ex}%
  \begin{array}{c}
    \begin{array}{l}
      \kwd{label} f : s_1 \; \dots \\
      \kwd{label} g : \; \dots \\
      \quad \kwd{label} s_0 : \; \dots \\
      \quad \quad \kwd{for} x \dots : s_2
    \end{array} \\
    \xrightarrow{\kwd{compute-at}(f, \tup{g, x})} \\
    \begin{array}{l}
      \kwd{label} g : \; \dots \\
      \quad \kwd{label} s_0 : \; \dots \\
      \quad \quad \kwd{for} x \dots : (\kwd{label} f : s_1) ; s_2
    \end{array}
  \end{array}%
  \hspace{-4ex}%
  \begin{array}{c}
    \begin{array}{l}
      \kwd{allocate} f (\dots) \; \dots \\
      \kwd{label} g : \; \dots \\
      \quad \kwd{label} s_0 : \; \dots \\
      \quad \quad \kwd{for} x \dots : s_2
    \end{array} \\
    \xrightarrow{\kwd{store-at}(f, \tup{g, x})} \\
    \begin{array}{l}
      \kwd{label} g : \; \dots \\
      \quad \kwd{label} s_0 : \; \dots \\
      \quad \quad \kwd{for} x \dots : (\kwd{allocate} f (\dots) ; s_2)
    \end{array}
  \end{array}
  \]
  \hrule
  \[
  \begin{array}{c}
    \kwd{for} x \kwd{in} ~(e^\mathrm{min}, e^\mathrm{len}) \kwd{do} s \\
    \xrightarrow{\kwd{split}(\ell, x_o, x_i, e^\mathrm{fac}, \varphi)} \\
    \begin{array}{l}
      \kwd{for} x_o \kwd{in} ~(0, \ceil*{e^\mathrm{len} / e^\mathrm{fac}}) \kwd{do} \\
      \quad \kwd{for} x_i \kwd{in} ~(0, e^\mathrm{fac}) \kwd{do} \\
      \quad \quad \kwd{push-down}(\varphi, s)
    \end{array}
  \end{array}
  \begin{aligned}
  \text{\emph{where $\varphi$ is}}\quad%
  \varphi_{\mathrm{Guard}}(s) &= \begin{cases}
    \kwd{let} x = e^\mathrm{min} + x_i + e^\mathrm{fac} \cdot x_o \kwd{in} & \\
    \kwd{if} x < e^\mathrm{min} + e^\mathrm{len} \kwd{do} s
  \end{cases} \\
  \text{\emph{or}}\quad%
  \varphi_{\mathrm{Shift}}(s) &= \begin{cases}
    \begin{aligned}
      \kwd{let} x &= e^{\min} + x_i \\
                  &+ \min(e^{\mathrm{fac}}\cdot x_o, & \\
                  &\phantom{+} \max(0, e^{\len} - e^{\mathrm{fac}}))
    \end{aligned} & \\
    \kwd{in} s &
  \end{cases} \\
  \text{\emph{or}}\quad%
  \varphi_{\mathrm{Round}}(s) &= \kwd{let} x = e^\mathrm{min} + x_i + e^\mathrm{fac} \cdot x_o \kwd{in} s
  \end{aligned}
  \]
  \[
  \begin{array}{c}
    \begin{array}{l}
      \kwd{for} x_1 \kwd{in} ~(e^\mathrm{min}_1, e^\mathrm{len}_1) \kwd{do} \\
      \quad \kwd{for} x_2 \kwd{in} ~(e^\mathrm{min}_2, e^\mathrm{len}_2) \kwd{do} s
    \end{array} \\
    \xrightarrow{\kwd{fuse}(\ell, y)} \\
    \begin{array}{l}
      \kwd{for} y \kwd{in} ~(0, e^\mathrm{len}_1 e^\mathrm{len}_2) \kwd{do} \\
      \quad \kwd{push-down}(\varphi, s)
    \end{array}
  \end{array}
  \text{\emph{where}}\;
  \varphi(s) = \begin{cases}
    \kwd{let} x_1 = e^\mathrm{min}_1 + \floor{y / e^\mathrm{len}_2} \kwd{in} & \\
    \kwd{let} x_2 = e^\mathrm{min}_2 + y \% e^\mathrm{len}_2 \kwd{in} s & \\
  \end{cases}
  \]
  \[
  \fbox{$\begin{aligned}
    \kwd{push-down}(\varphi, \kwd{for} x \kwd{in} I \kwd{do} s)
      &= \kwd{for} x \kwd{in} I \kwd{do} \kwd{push-down}(\varphi, s)\\
    \kwd{push-down}(\varphi, \kwd{let} x = e \kwd{in} s)
      &= \kwd{let} x = e \kwd{in} \kwd{push-down}(\varphi, s)\\
    \kwd{push-down}(\varphi, s) &= \varphi(s) \\
  \end{aligned}$}
  \;
  \begin{array}{c}
    \begin{array}{l}
      \kwd{for} x \kwd{in} b \kwd{do} \\
      \quad \kwd{for} x' \kwd{in} b' \kwd{do} s
    \end{array} \\
    \xrightarrow{\kwd{swap}(\ell)} \\
    \begin{array}{l}
      \kwd{for} x' \kwd{in} b' \kwd{do} \\
      \quad \kwd{for} x \kwd{in} b \kwd{do} s
    \end{array} \\
  \end{array}
  \;
  \begin{array}{c}
    \kwd{for} x \kwd{in} b \kwd{do} s \\
    \xrightarrow{\kwd{traverse}(\ell, \tau)} \\
    \kwd{for}^\tau x \kwd{in} b \kwd{do} s
  \end{array}
  \]
  \hrule
  \[
  \begin{array}{c}
    \kwd{label} f : s \\
    \xrightarrow{\kwd{bound}(f, x, e^{\min}, e^{\len})} \\
    \begin{array}{l}
      \kwd{label} f : \\
      \quad \kwd{assert} ~~ ?^{\kwd{cpu}}f_x == [e^{\min}, e^{\len}] ~; s
    \end{array}
  \end{array}
  \hspace{-1ex}
  \begin{array}{c}
    \kwd{label} f : s \\
    \xrightarrow{\kwd{bound-extent}(f, x,e^{\len})} \\
    \begin{array}{l}
      \kwd{label} f : \\
      \quad \kwd{assert} ~~ (?^{\kwd{cpu}}f_x)^{\len} == e^{\len} ~; s
    \end{array}
  \end{array}
  \hspace{-1ex}
  \begin{array}{c}
    \kwd{label} f : s \\
    \xrightarrow{\kwd{align-bounds}(f, x, e^m, e^r)} \\
    \begin{array}{l}
      \kwd{label} f : \\
      \quad \kwd{assert} ~~ (?^{\kwd{cpu}}f_x)^{\min} ~\%~ e^m == e^r ~; \\
      \quad \kwd{assert} ~~ (?^{\kwd{cpu}}f_x)^{\len} ~\%~ e^m == 0 ~; s
    \end{array}
  \end{array}
  \]
\vspace*{-1em}
\end{framed}
\Description{Scheduling directives over the IR}
\caption{Scheduling directives over the IR}
\label{fig:schedule-semantics}
\end{figure}

\begin{definition}[Pure loop]\label{def:pure-loop}
Let $v$ be the loop variable for some loop in a program $P\in\kwd{Tgt}^?$.
We say $v$ and its associated loop are \emph{pure} if $v$ is one of the pure dimensions of its associated func, if it is the result of splitting a pure loop, or if it is the result of fusing two pure loops together.
We letter the iteration variable of pure loops $x$. All other loops are \emph{reduction loops}, lettered $r$.
\end{definition}

We may $\kwd{split}$ a loop $\ell$ by a \emph{split factor} of $e$ into an outer loop iterated by $x_o$ and inner loop iterated by $x_i$.
This division may produce a remainder, which is handled by choice of a \emph{tail strategy} (denoted $\varphi$): (1) \emph{guard}ing the body with an $\kwd{if}$; (3) \emph{shift}ing the last loop iteration inwards, causing recomputation; or (3) \emph{round}ing the loop bounds upward, causing overcomputation of the func and affecting upstream bounds.
Shifting and rounding are only allowed on pure stages.

Two nested loops can be \emph{fused} together into a single loop whose extent is the product of the original extents, provided both loops are pure or both are reduction loops.
This is approximately an inverse to the split directive, and is useful for controlling the granularity of parallelism.
Immediately nested loops can be \emph{swapped} as long as the swap does not reorder two reduction loops.
Finally, each pure loop can also be traversed in either \emph{serial} or \emph{parallel} order.
All variable names introduced by these directives must be new, unique, and non-conflicting.

\subsection{Compute Phase}\label{sec:scheduling-compute}
To narrow the scope of computation, the $\kwd{label}$ed statement for computing a func $f$ may be moved from the top level to just inside any loop as long as the $\kwd{label}$ed statement continues to dominate all external accesses to $f$.

The closer a producer is computed to its consumer, the less of the producer needs to be computed per iteration of the consumer.
The expectation is that bounds inference will use the additional flexibility granted by the additional loop iteration information to derive tighter bounds.
This directive therefore controls \emph{how much} of a func to compute before computing part of its consumers.

\subsection{Storage Phase}\label{sec:scheduling-storage}
Each func is tied to a particular piece of memory when it is computed.
Halide offers some control over how much memory a func occupies during the run of a pipeline.
The \emph{store-at} directive (analogous to compute-at above) moves the allocation statement to just inside any loop such that the allocation still dominates all accesses of the func it allocates.

Bounds inference is then free to choose a more precise size for the allocation based on the code that follows, and the particular values of the variables of the loops that enclose it.

\subsection{Bounds Phase}\label{sec:scheduling-bounds}
Additional domain knowledge might allow a user to derive superior bounds functions than those inferred.
Halide provides directives to give hints to the bounds engine just before querying it.

The first two directives, \emph{bound} and \emph{bound-extent}, assert equality of bounds holes to provided startup expressions.
The third directive, \emph{align-bounds}, adds assertions that constrain the divisibility and position of the window.
The minimum is constrained to have a particular remainder modulo a factor which is declared to divide the extent.
These assertions affect the bounds inference query such that the inferred computation window will expand to meet these requirements.
Recall that these assertions are allowed to fail without violating \emph{confluence} (definition \ref{def:confluence}).

\subsection{Practical directives}\label{sec:schedule-practical}
Halide provides many more scheduling directives that are out of scope for this paper.
It has directives for assigning loops to coprocessors like GPUs and the DSPs, and directives for prefetching and memoization.
It has two additional traversal orders that apply only to constant-extent loops after bounds inference has completed and are semantically uninteresting: $\kwd{vectorize}$, which asks Halide to vectorize the loop, and $\kwd{unroll}$, which simply unrolls the loop.
Some of the most esoteric directives
may require more substantial adjustments to these semantics.

\section{Related Work}
\label{sec:related_work}
The computational and scheduling models of Halide have evolved through a series of extensions and generalizations \cite{Halide:PLDI, Halide:SIGGRAPH, Halide:CACM, Halide:rfactor}.
Halide builds on the idea of explicit control over compiler transformations developed earlier in many script- or pragma-based compiler tools in HPC \cite{Xlang, Sequoia, POET, Orio, CHiLL}, and the definition of parametric spaces of optimizations in SPIRAL \cite{SPIRAL}.
A growing family of high performance DSLs since the introduction of Halide have directly adopted the concept of a programmer-visible scheduling language \cite{Legion,TVM,TensorComprehensions,TACO,GraphIt,SWIRL,Taichi}.
The polyhedral loop optimization community has explored similar ideas in its own context \cite{ISL,ScheduleTrees,PENCIL,Tiramisu}.

Virtually all of these languages and systems do not have formally specified semantics, proofs of soundness, or other such metatheory. 
POET~\cite{POET} and TeML~\cite{TeML} are notable exceptions for being defined formally, but their transformation (i.e., scheduling) languages are not shown to be correctness-preserving.
Legion defined a core calculus and proved a form of soundness for their dynamic, user-configurable distributed scheduler~\cite{Legion-Partition-2013}.
However, for our present aims many of the details are unnecessary, and redundant recomputation and overcomputation on uninitialized values---both essential to Halide---remain outside their scope.
Egg~\cite{Egg}, ELEVATE~\cite{ELEVATE}, and the X language~\cite{Xlang} all provide generic transformation or rewriting infrastructure, but do not provide the definitions and metatheory needed to establish correctness for any specific language.

URUK~\cite{URUK2005,URUK2006,URUK2007}, CHiLL~\cite{CHiLL}, and Tiramisu~\cite{Tiramisu} are notable examples of user-schedulable polyhedral compilers.
In the latter two cases, correctness claims are deferred to polyhedral dependence analysis using ISL~\cite{ISL}.
As we discuss in \S\ref{sec:overview}, dependence analysis is only sufficient to justify \emph{re-ordering} transformations---not transformations such as $\kwd{compute-at}$, which \emph{recompute} or \emph{over-compute} values and might introduce novel statement instances.
For instance, in correspondence with authors of the Tiramisu paper and system~\cite{Tiramisu-correspondence}, we discovered that the relevant safety checks for $\kwd{compute-at}$ transformations had neither been implemented in the system artifact, nor described in the paper.  

Older automated polyhedral analyses~\cite{Feautrier91} work on static control programs with denotational / functional semantics.
In that setting, dataflow and dependence graphs are equivalent.
This is also the case for functional DSLs such as PolyMage~\cite{PolyMage}, which also supports redundant recomputation.
Recent developments in the Alpha system~\cite{AlphaZ} are notable for maintaining a complete denotational form of the program throughout transformation, not just a dependence analysis.
As with Halide, functional semantics are crucial for reasoning about such non-re-ordering code transformations.

Halide's algorithm language is closely related to both array languages \cite{APL,ZPL,Chapel,Futhark:system,HaskellAccelerate}, and image processing DSLs \cite{Popi,Shantzis}.
Its computational model is most closely related to that of the lazy functional image language Pan \cite{Pan}.
Bounds inference is related to array shape analyses and type systems \cite{ShapeCheckingArrayPrograms,FunctionalImperativeShape,Futhark:bounds}.
Our treatment of bounds inference is (to the best of our knowledge) the first formulation via a constraint-based program synthesis problem \cite{ProgramSynthesis}.

The correctness of many compiler transformations has been treated in the context of verified compilers like Comp\-Cert \cite{CompCert:dataflow,CompCert:isched,CompCert:vliw}.
The closest component to the present work is the Comp\-Cert instruction-scheduling optimization, which is designed to be applied after register allocation.
(By contrast, we are concerned with less local and harder-to-validate loop transformations.)
Verification is based on the \emph{translation validation} strategy, where a certified validator program attempts to prove that the pre- and post-optimization programs are equivalent.
This strategy is effective in the Comp\-Cert scenario because (a) it is (potentially) generic with respect to the choice of optimization pass and (b) when validation fails, Comp\-Cert can always (correctly) fall back to a less optimized version of the code.
Once scheduling is exposed to the user (our scenario), these design choices are inappropriate.
The semantics must make predictable and defensible guarantees to users about the results of schedules that they write.

Concurrent work by \citet{Halide-TRS} uses program synthesis to build a verified term-rewriting expression simplifier for the Halide expression language.
Their verification conditions are based on the expression language semantics described in this work.

\JRK{Add \cite{array-dependent-types}, Hughes paper}

\section{Practical impact}
\label{sec:conclusion-impact}
These formalization efforts have influenced Halide's design, and we have found and fixed bugs where actual and expected behavior differed in significant ways.

\paragraph{Negative rdom extents.}
While formalizing the behavior of reduction domains (\S\ref{sec:alg_lang-semantics}), we discovered that the practical system had not defined the behavior of loops with \emph{negative} extents \cite{bug-neg-rdom}.
Test cases designed to probe the behavior suggested that Halide treated such loops as no-ops; however, there could be instances wherein a negative extent is treated as unsigned, which would silently wrap to a very large positive integer.
While unsigned underflow is a well-known problem, Halide has the additional obligation of making sure that no scheduling transforms accidentally introduce this behavior even if it’s absent in the original code.
We worked with the developers to determine that this situation should be treated as an error that can be checked at program startup (recall definition \ref{def:startup-expression}), as formalized here.

\paragraph{Impure identity functions.}
The practical system has several APIs for computing the results of a pipeline.
One such API intended to match the interface formalized here (\S\ref{sec:alg_lang}, \S\ref{sec:target_lang}): the user supplies the desired \emph{compute bounds} (see \S\ref{sec:bounds}) and receives a buffer containing at least the requested values.

For efficiency, another API allows a user to supply their own output buffer, rather than delegating the allocation to Halide.
In this case, the pipeline checks at startup that the supplied buffer is at least as large as the buffer it \emph{would have} allocated.
However, when the simple API was implemented in terms of this advanced API, it incorrectly assumed that the \emph{compute bounds} and \emph{allocation bounds} would be equal.
This led to vexing errors on pipelines whose outputs were scheduled to overcompute \cite{bug-output-RDom}.

This confusion had a surprising consequence: adding an unscheduled \emph{identity} func to the end of the pipeline would compile to a \emph{copy} of the former output, and which would have equal compute and allocation bounds.
So, from the perspective of the user, identity functions were \emph{impure} since they had side effects due to bounds inference.
After reaching clarity on these issues through our formalism, we worked with the Halide authors to fix this behavior.
The latest release correctly returns the full, possibly overallocated, buffer.

\paragraph{Arithmetic error semantics.}
As discussed in \S\ref{sec:bounds-extraction}, values used in control flow or indexing must be \emph{well-defined} regardless of the schedule.
Data-dependent accesses in rdom conditions and update locations might necessitate computing points not required by the default schedule, especially when over-computing strategies are employed.
Similarly, computation outside the compute bounds must be side-effect free, even when processing uninitialized values.
One consequence of this is that integer division and modulo must be made into \emph{total} functions, similar to IEEE 754 arithmetic.

We constructed test cases for the practical system that crashed due to integer division by zero happening outside of the compute bounds \cite{bug-roundup}.
We worked with the Halide authors to \emph{define} these operations to return zero and implemented the new behavior with runtime checks.
The compiler leverages its existing bounds analyses to eliminate these checks when it can, for instance when dividing by a non-zero constant.

One might wonder why the convention $x \bmod 0 \equiv 0$ was chosen in favor of the more typical $x \bmod 0 \equiv x$. Both conventions were tried and the former produced tighter bounds equations in practice; in short, it is better to bound $(x\bmod y)$ by $y$ than by $x$, which is typically much wider.

This change also impacted concurrent work on verifying Halide's term-rewriting expression simplifier.
As \citet{Halide-TRS} report, these new semantics invalidated dozens of existing rewrite rules and required many new proofs of correctness for valid rules.

\paragraph{Compute bounds for indexing accesses.}
Another consequence of the rules in \S\ref{sec:bounds-extraction} is that accesses that occur inside indexing expressions must have well-defined values, which means that the points must be in the compute bounds.
However, the practical system did not implement this rule; it instead relied on an unsound analysis of the bounds of a func's value to compute the bounds in the indexing expression and did not widen the compute bounds to fit the accessed point.
We were able to construct a real crash based on this insight in \cite{bug-nested-access-bounds} and provided a patch to the compiler.

\paragraph{Race conditions in rdom predicates}
It is unsafe to parallelize a loop that contains an RDom whose predicate depends on
values written by that loop. Race conditions on the values read by the predicate can
lead to non-deterministic behavior. We discovered that the compiler was missing these 
checks. We constructed a real instance of non-determinism based on this insight
\cite{bug-rdom-race} and provided a patch to the compiler\cite{bug-rdom-race-patch}.

\paragraph{Compute-with directive.}
\emph{Compute-with} was a scheduling directive intended to interleave the computation of two or more independent funcs by fusing their outermost loops together.
This could benefit performance by reducing memory traffic if the two funcs shared many reads from a common producer.
However, the prototype implementation did not consider dependencies between the stages of a single func (\S\ref{sec:alg_lang-semantics}), nor did it consider \emph{specializations} (\S\ref{sec:scheduling-specialization}).
We discovered cases where compute-with could move the pure stage of a func \emph{after} one of its update stages, resulting in crashes and mangled outputs \cite{bug-compute-with}.

We worked with the Halide authors to define the feature, but due to little widespread use (perhaps owing to these bugs, in part) and the highly complex implementation, the feature was deprecated instead.
We look forward to designing a sound replacement in future work.

\paragraph{Future work}
We believe this work provides a foundation to study the new class of languages with user-controlled scheduling. One major question is how they could incorporate abstraction and module systems.
Another is whether alternative bounds inference algorithms, based on our program synthesis formulation, could be useful in practice and in other settings.

\begin{acks}
We thank Andrew Adams and Daan Leijen for their helpful conversations about Halide's implementation and language semantics, respectively.
We also thank Martin Rinard, Zachary Tatlock, Adam Chipala, and Sarah Chasins for their detailed review and feedback prior to submission.
\end{acks}

\bibliography{main}

\clearpage
\appendix
\section{Proofs of Theorems and Lemmas}\label{app:proofs}

\begin{lemma}[Loop phase naming]\label{lem:loop-preserves-names}
The loop phase preserves loop names as described in Lemma~\ref{lem:loop-naming}.
\end{lemma}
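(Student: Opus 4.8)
The plan is to prove the lemma by induction on the number of loop-phase directives applied, which reduces the claim to a single-step preservation property: assuming the program $T_i$ entering a given loop directive satisfies the loop-naming invariant of Lemma~\ref{lem:loop-naming}, I would show that $T_{i+1} = \mathcal{S}(s, T_i)$ does as well, for each $s \in \{\kwd{split}, \kwd{fuse}, \kwd{swap}, \kwd{traverse}\}$. It is convenient to first restate the invariant in two parts: (a) the func, specialization, and stage to which a for loop belongs are determined by its enclosing $\kwd{label}$ statements; and (b) within each fixed triple of func, specialization, and stage, the induction-variable names of the for loops are pairwise distinct. The input to the first loop directive satisfies both parts by the analysis of the preceding (specialization) phase, which supplies the base case.

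A structural observation makes every case local. Because schedules are ordered into phases and the loop phase precedes the compute and storage phases, no $\kwd{compute-at}$ or $\kwd{store-at}$ has yet been applied when a loop directive runs. Consequently each stage's loop nest is still self-contained inside its own $\kwd{label} f : \ldots \kwd{label} s_i$ region, with no foreign funcs interleaved. Each of the four directives rewrites a fragment strictly inside one such region and never moves code across a $\kwd{label}$ boundary, so part (a) is preserved automatically: every loop in $T_{i+1}$ inherits the (func, specialization, stage) attribution of the loop(s) it replaces. It therefore suffices to re-establish part (b) within the single affected triple.

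For part (b) I would proceed by cases. $\kwd{traverse}$ merely annotates a loop with a traversal order $\tau$, and $\kwd{swap}$ only permutes the nesting of two immediately nested loops; neither creates nor destroys a loop, so the multiset of induction-variable names in the affected triple is unchanged and distinctness is immediate. For $\kwd{split}$, the target loop with variable $x$ is replaced by two nested loops $x_o, x_i$, while $x$ is demoted to a $\kwd{let}$-binding by the chosen tail strategy $\varphi$ (via $\kwd{push-down}$); since loop naming ranges only over for loops, the retired name $x$ leaves the domain, and the restriction that split introduce \emph{new, unique, and non-conflicting} variable names guarantees $x_o \neq x_i$ and that both avoid every surviving induction variable in the triple. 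The $\kwd{fuse}$ case is symmetric: $x_1$ and $x_2$ are demoted to $\kwd{let}$-bindings and replaced by a single fresh loop $y$, whose freshness again preserves distinctness. In both cases I would verify that $\kwd{push-down}$ recurses through the nested $\kwd{for}$ and $\kwd{let}$ constructs of the body and applies $\varphi$ only at the leaf, so no for loop of the body is created, deleted, or renamed beyond the ones named explicitly by the directive.

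The main obstacle is the bookkeeping behind part (a): pinning down precisely that a loop's (func, specialization, stage) is fixed by its enclosing labels, and then confirming that none of the four rewrites---despite the nesting manipulations in $\kwd{swap}$ and $\kwd{fuse}$ and the $\kwd{push-down}$ machinery of $\kwd{split}$---crosses a label boundary or duplicates a loop. The second subtlety is ensuring that the variables demoted to $\kwd{let}$-bindings by split and fuse genuinely leave the loop-naming domain, so that their (possibly reused) names cannot collide with the freshly introduced loop variables; this is where the phase-ordering fact and the freshness restriction on schedule-introduced names do the real work.
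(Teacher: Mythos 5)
Your proposal is correct and matches the paper's own argument, which makes exactly your two points---that every loop-phase directive replaces loops with similarly localized loops within the same (func, specialization, stage), and that the prohibition on conflicting loop-variable names keeps the naming unique---just stated in two sentences rather than your fuller case analysis. Your additional bookkeeping (the $\kwd{push-down}$ leaf behavior, the demotion of retired variables to $\kwd{let}$-bindings, and the phase-ordering observation) is a faithful elaboration of the same proof, not a different route.
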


\begin{proof}
All loop phase directives replace loops of a given func, specialization, and stage with similarly localized loops, albeit with different loop variables.
Since conflicts between loop variable names are prohibited, loops remain uniquely named.
\end{proof}

\begin{definition}[Narrowing program]\label{def:narrowing}
Let $P_1, P_2\in\kwd{BI}(T)$ where $T$ is a program scheduled from some algorithm.
We say that $P_1 \leq P_2$ iff for all inputs $z$, every execution of any allocation or loop in $P_1(z)$ has a corresponding execution in $P_2(z)$ and the bounds $I_1$ for $P_1(z)$ are contained in the bounds $I_2$ for $P_2(z)$.
(Here ``corresponding'' means that the two statements are at the same lexical site, executing with identical environments $\Sigma$.)
\end{definition}

\begin{lemma}[Narrowing executions match]\label{lem:narrowing-match}
Let $P_1 \leq P_2$ as above.
Then after the execution of corresponding loops for some func $f$ on intervals $I_1$ and $I_2$, the contents of the buffer for $f$ agree on $I_1$.
\end{lemma}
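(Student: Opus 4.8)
The plan is to exploit the fact that $P_1$ and $P_2$ are both completions of the \emph{same} holey program $T$; writing $P_1 = [\Gamma_1]T$ and $P_2 = [\Gamma_2]T$, they share an identical lexical structure and differ only in the interval expressions substituted into $\kwd{allocate}$ statements and $\kwd{for}$ loops. By the narrowing relation (Definition~\ref{def:narrowing}), each iteration of $P_1$'s loop for $f$ binding $x = v$ has a corresponding iteration of $P_2$'s loop with the \emph{same} environment $\Sigma$, so every point $P_1$ writes is also written by $P_2$, and the enclosing $\kwd{let}$-bindings and outer loop variables take identical values at that point. It therefore suffices to show that each point $x \in I_1$ receives the same value in both runs, and I would prove this by a nested induction that follows the program's dependency structure.

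The outer induction is on the dependency order of funcs, which is well-founded and, by the Dominance Lemma~\ref{lem:dominance}, is respected by execution order: the loops for every producer $g$ of $f$ run to completion before $f$'s loop begins. The outer induction hypothesis is that for each such producer $g$, the buffers of $P_1$ and $P_2$ agree on the compute region $I_1^g$ of $g$ in $P_1$. Inside the step for $f$, I would induct again on the stages of $f$ and, within a stage, on the order in which points are written (the loop-iteration order, after unrolling rdoms as in the [RDom-Eval] rule). The inner hypothesis records that all earlier writes to $f$ agree between the two runs on the already-computed portion of $I_1^f$.

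The heart of the step is a single write $f^{i,j}[\svec{e}] \leftarrow e_0$ at a point $x \in I_1^f$ lying in the compute bounds. Since $\Gamma_1$ satisfies the extracted constraint $\mathcal{B}(T)$ (Figure~\ref{fig:bounds-extraction}), the clause $\svec{e} \in ?^{\cpu}f^{i,j} \implies \mathcal{B}_\cpu(e_0)$ fires, forcing every func access inside $e_0$ to land in the compute bounds of its func under $\Gamma_1$; likewise $\svec{\mathcal{B}_\cpu(e)}$ forces the accesses in the index and predicate expressions into compute bounds. Producer accesses $g[\dots]$ then fall in $I_1^g$, where the outer hypothesis gives agreement; self-accesses $f[\dots]$ fall in a region written by an earlier stage or earlier iteration, where the inner hypothesis gives agreement; and the index and predicate expressions, reading only agreeing points under the common environment, evaluate identically, so both runs target the same point $x$ and guard it with the same truth value. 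Hence $e_0$ evaluates to the same element of $\mathbb{V}$ and $f[x]$ is assigned equally. The base case is a leaf func whose right-hand side contains no func accesses, so its value depends only on constants, parameters, and loop variables, which coincide under the shared input $z$ and environment.

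The main obstacle is keeping every right-hand-side read inside the narrower region $I_1$ on which agreement has already been established; this is precisely what the compute-bounds conditions compiled into $\mathcal{B}$ buy us, but only for points that actually lie in the compute bounds. In the overcompute region---loop-range points beyond $?^{\cpu}f$ introduced by a $\varphi_\text{Round}$ tail---the extraction only guarantees $\mathcal{B}_\mem$, i.e.\ memory safety (reads in allocation bounds), and such reads may hit uninitialized $\varepsilon_\mem$ entries that differ between the two allocations. Agreement therefore genuinely holds only on the compute-bounds portion of $I_1$, consistent with the garbage-tolerant notion of output equivalence (Definition~\ref{def:output-equiv}); the careful part of the write-up is threading this compute-bounds invariant through all three nested inductions (funcs, then stages, then iterations) while ensuring that update stages only ever read self-values produced within the agreeing region.
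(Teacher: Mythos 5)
Your proposal is correct and takes essentially the same route as the paper's own proof, which argues in two compressed steps: (1) the assignment executions of $P_1$ form a subset of those of $P_2$, while every $P_2$ write landing in $I_1$ also occurs in $P_1$; and (2) the written values ``depend only on the values computed in both programs,'' citing the assignment rule of $\mathcal{B}$. Your nested induction over funcs, stages, and iteration order is exactly an unfolding of step (2), which the paper dispatches in a single sentence. One substantive point distinguishes your write-up: you observe that the $\mathcal{B}_\cpu(e_0)$ obligation is conditional on the written point lying in $?^{\cpu}f^{i,j}$, so for writes inside $I_1$ but outside the compute bounds (e.g., tail points from a $\varphi_\text{Round}$ split) the right-hand side is only memory-safe and may read cells that are uninitialized ($\varepsilon_\mem$) in $P_1$ yet initialized in $P_2$, whose producer loops are wider; hence agreement is guaranteed only on the compute-bounds portion of $I_1$. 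This is a real sharpening: the paper's terse appeal to the assignment rule silently carries the same restriction, and the weaker conclusion is all that the lemma's sole use (Lemma~\ref{lem:compute-sound}) requires, since confluence is judged only within compute bounds. Your version therefore proves what the paper actually relies on, and does so with the caveat made explicit rather than glossed.
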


\begin{proof}[Proof of Lemma~\ref{lem:narrowing-match}]
Let $\kwd{SI}(P, z)$ denote the set of assignment executions in $P(z)$.
By definition, $\kwd{SI}(P_1, z) \subseteq \kwd{SI}(P_2, z)$.
Let $\kwd{SI}(P_2, z)|_{I_1}$ be the set of assignment statement executions in $P_2(z)$ which write which write to a point of $f$ in $I_1$.
Then $\kwd{SI}(P_1, z) \supseteq \kwd{SI}(P_2, z)|_{I_1}$. 
Thus there are no writes to values in $I_1$ in $P_2$ that do not also occur in $P_1$.
Furthermore, all of these computed values depend only on the values computed in both programs.
This follows from the assignment rule of $\mathcal{B}$.
\end{proof}

For the next three proofs, note that if an expression appearing in an access is unbounded then there is no possible bounds query result.
So without loss of generality, we may assume that some satisfying bounds \emph{actually do exist}, since confluence is vacuous otherwise.

\begin{proof}[Proof of Lemma~\ref{lem:bounds-memory-safety}]
First recall that by Lemma~\ref{lem:dominance}, every access to a func $f$ is dominated by the $\kwd{allocate}$ statement for $f$.
The rule for accesses in $\mathcal{B}$ (see Figure \ref{fig:bounds-extraction}) explicitly requires that every point read or written is contained in the allocation bounds ($\kwd{Mem}$) of the associated func.
Thus $P'$ will always satisfy the \textsc{InBounds} condition.
\end{proof}

\begin{proof}[Proof of Lemma~\ref{lem:compute-confluency}]
This proof proceeds by induction over prefixes of the syntactic structure of the algorithm $P\in\kwd{Alg}$ and corresponding prefixes of the initially lowered program $P'\in\kwd{BI}(\mathcal{L}(P))$.

Recall from the definition of the bounds extraction function $\mathcal{B}$ (\S\ref{sec:bounds-extraction}) that each of $f$'s stages' loop bounds cover the compute bounds.
Lemma \ref{lem:dominance} ensures that the func is realized before it is read.

Thus, as a base case, if $f$ consists of only a pure stage, we are done because the static lack of self-reference and reduction dimensions makes each assignment statement in that stage completely independent of every other (this is ensured by Definition \ref{def:separation}).
The assignment exactly matches the algorithm's expressions and only reads from compute bounds by assumption.

Inductively, if $f$ has $n$ stages the claim holds, we argue that adding another stage $s$ to $f$ preserves the claim.
Bounds extraction ensures that every access in $s$ is included in the allocation bounds and assignments writing to a point in the compute bounds read only within the compute bounds of other funcs and $f$.
By the induction hypothesis on stages, the values in the compute bounds of the buffer for $f$ (and all other funcs) are correct just before $s$ runs.

Recall the structure of the loop nest for $s$.
The outermost for loops correspond to pure dimensions and range over bounds holes, which are constrained to cover the compute bounds of $f$.
The innermost for loops implement any reduction domain in the stage and have bounds supplied by the algorithm.
If the stage is guarded by a condition, it is included within the innermost for loop.
Finally, a single assignment statement corresponding to the update rule for the stage is the innermost statement.

We need to show that the code produced by lowering for $s$ will compute confluent values for $f$.
Consider a point $p$ in the compute bounds of $f$ after the stage runs.
The stage $s$ separates pure dimensions from reduction dimensions of $p$.
Let $x_p$ be the assignment to pure dimensions induced by $p$.
Consider the definition of $s$ in the algorithm.
It consists of a series of simple updates after unrolling the rdom all of which share values $x_p$ in common.
Now, observe the iteration of the pure loops of $s$ in the target language which coincide with $x_p$.
This iteration is guaranteed to occur by the covering of the compute bounds by the loop bounds.
The content of this iteration is a sequence of assignments corresponding to the unrolled rdom (as in the big step semantics in Figure \ref{fig:source-semantics}).
Lastly, any other $x_p' \neq x_p$ touches no memory in common with this iteration because of syntactic separation (Definition \ref{def:separation}).
\end{proof}

\begin{lemma}[Lowering is sound]\label{lem:lowering-sound}
$P'\in\kwd{BI}(\mathcal{L}(P))$ is confluent with $P$ for all $P\in\kwd{Alg}$.
\end{lemma}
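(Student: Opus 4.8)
The plan is to obtain this lemma as a corollary of the per-func result in Lemma~\ref{lem:compute-confluency}, stitched together by an induction over the funcs of $P$ in definition order. Following Definition~\ref{def:confluence}, I would fix an arbitrary completion $P'\in\kwd{BI}(\mathcal{L}(P))$ and an input $z$ and show that one of the three disjuncts holds. The first step is to dispose of the two vacuous cases: if $P(z)$ contains an error value somewhere in $R(z)$, or if $P'(z)$ fails an assertion check, we are immediately done. So I assume neither holds and aim to prove $P\simeq_z P'$, i.e.\ that $P(z)(x)=P'(z)(x)$ for every $x\in R(z)$.

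Next I would run the induction. List the funcs of $P$ as $f_1;\dots;f_k$ in definition order, with $f_k$ the output func named in the pipeline signature. The induction hypothesis is that for each $j$, every point in the compute bounds of $f_1,\dots,f_j$ holds a value confluent with $P$. The base case is $f_1$: since valid programs require every func to be defined before it is referenced (Definition~\ref{def:valid}), $f_1$ references no other func, so the hypothesis of Lemma~\ref{lem:compute-confluency} is vacuously satisfied and the lemma yields confluent values on $f_1$'s compute bounds. For the inductive step, any func referenced in the body of $f_{j+1}$ is one of $f_1,\dots,f_j$ (again by Definition~\ref{def:valid}), so by the induction hypothesis all \emph{preceding} funcs are confluent on their compute bounds, and Lemma~\ref{lem:dominance} guarantees those funcs are allocated and computed before $f_{j+1}$ reads them; hence Lemma~\ref{lem:compute-confluency} applies and gives confluent values on $f_{j+1}$'s compute bounds. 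Throughout, Lemma~\ref{lem:bounds-memory-safety} rules out getting stuck on an out-of-bounds [Read] or [Assn], and the $e^{\len}\ge 0$ conjunct of $\mathcal{B}$ (Figure~\ref{fig:bounds-extraction}) forces every \emph{reachable} loop extent non-negative, so the target execution cannot get stuck on a memory error or a negative-extent rdom failure and the only remaining abnormal termination is an assertion failure, which we have excluded.

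I would then connect this to the output window. The query produced by $\mathcal{B}$ at the pipeline node (Figure~\ref{fig:bounds-extraction}) constrains the compute bounds of $f_k$ to contain the requested output window $R(z)$ (recorded in the parameters $\svec{x^{\min},x^{\len}}$). By the conclusion of the induction, every point in those compute bounds---and in particular every $x\in R(z)$---is confluent with $P$, meaning that at each such $x$ either $P'(z)(x)=P(z)(x)$, or $P(z)(x)$ is an error value, or an assertion fails. The latter two are excluded by our standing assumptions, so $P'(z)(x)=P(z)(x)$ for all $x\in R(z)$, which is exactly $P\simeq_z P'$.

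The main obstacle is the interface between the whole-window confluence statement of Definition~\ref{def:confluence} and the pointwise, compute-bounds reasoning inherited from Lemma~\ref{lem:compute-confluency}, compounded by the eager/lazy mismatch: the lowered program eagerly evaluates \emph{every} func over its full compute bounds, including funcs and points the demand-driven algorithm semantics would never force. I expect the care to go into confirming that this extra computation cannot spuriously break equivalence---any discrepancy it could introduce (e.g.\ a negative-extent rdom in an otherwise-unused func, which corresponds to $\varepsilon_{\mathrm{rdom}}$ under Definition~\ref{def:error-value}) must surface either as an algorithm error value propagated into $R(z)$, as an unsatisfiable bounds query yielding no completion, or at points outside $R(z)$, each of which lands in an excluded or vacuous case. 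Since Lemma~\ref{lem:compute-confluency} already encapsulates most of this work per func, the residual burden here is the global stitching over the definition order together with the window/pointwise bridge; the rest is routine bookkeeping.
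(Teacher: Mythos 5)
Your proposal is correct and follows essentially the same route as the paper's (much terser) proof: dismiss the error and assertion cases, note that lowered programs respect dominance, and then apply Lemma~\ref{lem:compute-confluency} inductively over the funcs of the lowered code. Your added detail---the explicit induction in definition order and the bridge from the output func's compute bounds to $R(z)$ via the pipeline rule of $\mathcal{B}$---merely spells out steps the paper leaves implicit.
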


\begin{proof}[Proof of Lemma~\ref{lem:lowering-sound}]
Let $z$ be any input. If $P(z)$ contains an error, then we are done.
Since lowering does not create any assertion statements, failing one is not possible.
All lowered programs trivially respect dominance.
Finally the argument in lemma \ref{lem:compute-confluency} applies inductively over the lowered code.
\end{proof}

\begin{proof}[Proof of Lemma~\ref{lem:specialize-active}]
If $f$ has no specializations, then its only compute statement is considered the default specialization.
Valid schedules have at most one specialization directive per func, so there is one block of if-then statements for each func, each containing a compute statement for $f$.
Since the conditions of those branches are startup expressions, the input $z$ determines which one will be taken every time the block is encountered.
Proofs that each subsequent phase of scheduling preserves this invariant appear in appendix \ref{app:proofs}.
\end{proof}

\begin{lemma}[Specialization is sound]\label{lem:schedule-specialize}
Let $P\in\kwd{Alg}$ be a valid algorithm and let $T_i \in\kwd{Tgt}^?$ be the result of lowering $P$ and applying scheduling directives up through this phase.
Let $s$ be a scheduling directive in this phase, then $T_{i+1}=\mathcal{S}(s, T_i)$ is confluent with $P$.
\end{lemma}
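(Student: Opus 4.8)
The plan is to unfold Definition~\ref{def:confluence} and argue pointwise. Fix an arbitrary completion $P' \in \kwd{BI}(T_{i+1})$, say $P' = [\Gamma]T_{i+1}$ for a hole-substitution $\Gamma$ satisfying $\mathcal{B}(T_{i+1})$, and fix an input $z$. Assuming $P(z)$ contains no error value in $R(z)$ and $P'(z)$ fails no assertion, I must show $P \simeq_z P'$. The induction hypothesis is that $T_i$ is confluent with $P$ (the base case being Lemma~\ref{lem:lowering-sound}), and the goal is to transport this fact across the single $\kwd{specialize}(f, e_1, \dots, e_n)$ step.

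First I would isolate the branch that actually runs. Because the guard conditions $e_1,\dots,e_n$ are startup expressions (Definition~\ref{def:startup-expression}), they are constant throughout the execution on $z$, so by Lemma~\ref{lem:specialize-active} exactly one branch $k$ of the if-cascade produced by the directive is active, and $P'(z)$ executes only the body $[f^{i,k}/f^i]s$ of that branch. The remaining branches are never entered and hence contribute no assertion failures, memory accesses, or writes on this input. I would then observe that the active-branch body is a faithful copy of the pre-specialization statement $s$ from $T_i$: the substitution $[f^{i,k}/f^i]$ only rewrites specialization annotations on func references, which are semantically inert (labels are ignored by the target semantics), and the only other change is that the copy refers to fresh, specialization-$k$ compute-bounds holes---except for the final-stage hole, which is shared across specializations per the note following Definition~\ref{def:bounds-hole}.

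Next I would relate the bounds obligations. Applying the $\mathcal{B}$ rule for $\kwd{if}$ (Figure~\ref{fig:bounds-extraction}) down the cascade, the contribution of branch $k$ to $\mathcal{B}(T_{i+1})$ is its body's constraint guarded by $e_k \land \lnot e_{k-1} \land \cdots \land \lnot e_1$. Since $\Gamma$ validates the whole query, it in particular validates this implication on the region of parameter space selected by $z$, where the guard holds, so there the consequent must hold; up to the renaming of per-specialization holes, that consequent is exactly the constraint $\mathcal{B}$ extracts from the un-specialized statement $s$ in $T_i$. Consequently, on input $z$ the execution of $P'(z)$ coincides step-for-step with executing $T_i$ under a hole assignment that meets $T_i$'s compute-bounds obligations at $z$; the loop-naming and dominance invariants (Lemmas~\ref{lem:loop-naming} and~\ref{lem:dominance}) are preserved because the directive merely wraps an already-dominating labeled block in exhaustive guards, leaving the allocation and consumer accesses untouched. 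The induction hypothesis that $T_i$ is confluent with $P$ then yields that the values computed within the compute bounds of every func agree with the algorithm; since the pipeline rule forces $R(z)$ to lie inside those compute bounds, we obtain $P(z)(x) = P'(z)(x)$ for every $x \in R(z)$, i.e.\ $P \simeq_z P'$.

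The main obstacle I anticipate is the mismatch between the whole-program, all-inputs character of a bounds completion $\Gamma$ and the per-input nature of confluence: a completion of $T_{i+1}$ need not restrict to a full completion of $T_i$, because the per-specialization split only forces the branch-$k$ constraint on the region where the guard $e_k$ holds. The delicate step is therefore to \emph{localize} the induction hypothesis to the input $z$---using that the active guard holds there and that every completion's compute bounds must still cover $R(z)$---rather than trying to manufacture a single $T_i$-completion valid for all inputs. A secondary subtlety to verify is that attaching specialization indices to func references, while deliberately excluding the final stage per Definition~\ref{def:bounds-hole}, leaves both the extracted constraints and the computed values unchanged, so that the shared final-stage hole is genuinely the same object in $T_i$ and in the active branch of $T_{i+1}$.
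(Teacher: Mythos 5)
Your proposal is correct and shares its skeleton with the paper's proof: both fix a completion and an input, dispose of the error case, note that no assertion statements exist before the bounds phase, invoke Lemma~\ref{lem:specialize-active} to isolate the unique active branch (the guards being startup expressions), and observe that dominance (Lemma~\ref{lem:dominance}) is preserved because the directive merely wraps an already-dominating labeled block in exhaustive guards. Where you genuinely diverge is in depth: the paper's five-line proof silently assumes that the active branch of a completion of $T_{i+1}$ behaves like some completion of $T_i$ (``we may assume $P \simeq_z P_i'$''), whereas you correctly identify that this is not literally available --- since the $\mathcal{B}$ rule for $\kwd{if}$ guards the branch-$k$ obligations by $e_k \land \lnot e_{k-1} \land \cdots \land \lnot e_1$, a substitution $\Gamma$ validating $\mathcal{B}(T_{i+1})$ need only satisfy them on the slice of parameter space where the guard holds, so its restriction to branch-$k$ holes need not lie in $\kwd{BI}(T_i)$, and the universally quantified ($\forall p$) confluence hypothesis on $T_i$ cannot be applied as a black box. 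Your fix --- localizing to the input $z$, where the guard holds and the pipeline rule guarantees $R(z)$ is covered by the output compute bounds, and rerunning the argument of Lemma~\ref{lem:compute-confluency} at that input --- is sound precisely because that lemma's proof is per-execution and never uses validity of the filled bounds at other parameter values; your side conditions also check out, since labels are semantically inert in Figure~\ref{fig:target-semantics} and the shared final-stage hole of Definition~\ref{def:bounds-hole} makes your holes-up-to-renaming correspondence between the branch-$k$ constraint and $\mathcal{B}$ of the unspecialized body exact. In short, the paper buys brevity by eliding exactly the gap you flag; your route buys an honest treatment of the completion-restriction issue at the cost of having to redo the confluence argument pointwise rather than citing it.
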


\begin{proof}[Proof of Lemma~\ref{lem:schedule-specialize}]
Let $z$ be a valid input for $P$, $P_i' \in\kwd{BI}(T_i)$, and $P_{i+1}' \in\kwd{BI}(T_{i+1})$.
If $P(z)$ contains an error, we are done; and no assertions are present until the bounds phase; so we may assume $P \simeq_z P_i'$.
By assumption, $s$ is a specialization of some func $f$.
Now by lemma \ref{lem:specialize-active}, exactly one branch of $f$ is taken in any execution $P_{i+1}'(z)$.
This preserves producer domination as required by lemma \ref{lem:dominance}.
\end{proof}

\begin{lemma}[Loop phase is sound]\label{lem:schedule-loop}
Let $P\in\kwd{Alg}$ be a valid algorithm and let $T_i \in\kwd{Tgt}^?$ be the result of lowering $P$ and applying scheduling directives up through this phase.
Let $s$ be a scheduling directive in this phase, then $T_{i+1}=\mathcal{S}(s, T_i)$ is confluent with $P$.
\end{lemma}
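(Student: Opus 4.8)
The plan is to follow the same template as the specialization (Lemma~\ref{lem:schedule-specialize}) and lowering (Lemma~\ref{lem:lowering-sound}) soundness proofs. First I would fix an input $z$: if $P(z)$ contains an error value in $R(z)$, confluence holds vacuously by Definition~\ref{def:confluence}, so assume otherwise. No loop directive emits an $\kwd{assert}$ — the guard tail strategy inserts an $\kwd{if}$, not an assertion — so no new assertion failures are introduced and I may take the induction hypothesis in the form $P \simeq_z P_i'$ for $P_i' \in \kwd{BI}(T_i)$. Since none of $\kwd{split}$, $\kwd{fuse}$, $\kwd{swap}$, $\kwd{traverse}$ relocates an $\kwd{allocate}$ or $\kwd{label}$ statement, dominance (Lemma~\ref{lem:dominance}) is preserved immediately, and loop naming is preserved by Lemma~\ref{lem:loop-preserves-names}. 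It then remains to show that, for any $P_{i+1}' \in \kwd{BI}(T_{i+1})$, the contents of every buffer agree with $P$ on its compute bounds.

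My strategy for that core step is to re-run the argument of Lemma~\ref{lem:compute-confluency} on the transformed loop nest, which depends on only three structural facts about each func's loops: (a) the pure loops still range over bounds holes that cover the compute bounds $?^{\cpu}f$; (b) the right-hand side $e_0$ of each assignment is syntactically unchanged; and (c) the order in which writes hit any single buffer location is preserved. I would check that each directive maintains (a)--(c), after which the per-point inductive argument of Lemma~\ref{lem:compute-confluency} — together with the independence of distinct pure-dimension iterations guaranteed by the syntactic separation restriction (Definition~\ref{def:separation}) — carries over.

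For the reordering directives the verification is the lighter half. $\kwd{fuse}$ replaces a bijectively reindexed pair of like-kind loops by a single loop whose iteration order is exactly the original lexicographic order, and $\kwd{swap}$ merely permutes loop nesting, so in both cases the multiset of executed assignments is unchanged. Because these act on pure loops (Definition~\ref{def:pure-loop}), $\kwd{swap}$ never transposes two reduction loops and $\kwd{fuse}$ only combines loops of the same kind, which secures (c): reductions accumulating into a fixed pure point are never reordered among themselves, while writes belonging to distinct pure points land in disjoint memory by Definition~\ref{def:separation}. That same disjointness makes $\kwd{traverse}(\ell, \kwd{parallel})$ race-free, so its result coincides with the serial one.

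The main obstacle is $\kwd{split}$, and specifically its three tail strategies. Under $\varphi_{\mathrm{Guard}}$ the inserted $\kwd{if}$ masks exactly the rounded-up iterations, so the guarded iteration space and its order coincide with the original; this is the only strategy permitted when $\ell$ is a reduction loop, which is what keeps reduction order intact and preserves (c). Under $\varphi_{\mathrm{Shift}}$ and $\varphi_{\mathrm{Round}}$ — allowed only on pure stages — iterations are duplicated (overlapping recomputation) or added beyond the original window (overcompute); here I would lean on purity and Definition~\ref{def:separation} to argue that every write to a compute-bounds point is reproducible and independent of the others, so recomputation cannot change the stored value. The genuinely delicate case is $\varphi_{\mathrm{Round}}$: the overcomputed points lie outside $?^{\cpu}f$, so the implication $\svec{e} \in ?^{\cpu}f \implies \mathcal{B}_\cpu(e_0)$ in the assignment rule of $\mathcal{B}$ is vacuous for them and they may read uninitialized producer values, yet the remaining conjuncts of $\mathcal{B}$ still force those reads into the allocation bounds, preserving memory safety. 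Pinning down this decoupling of iteration bounds from compute bounds cleanly — so that overcompute never contaminates the compute-bounds values while still obligating bounds inference to allocate producers widely enough — is the crux of the proof.
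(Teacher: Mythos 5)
Your proposal is correct in substance and covers every directive the paper covers, but it takes a genuinely different route on the central case. For $\kwd{split}$ (guard) and $\kwd{fuse}$, the paper never re-examines computed values at all: it observes that the iterated value set of the loop variable is unchanged, so the extracted query $\mathcal{B}(T_i)$ is logically equivalent to $\mathcal{B}(T_{i+1})$ with identical hole scopes, yielding a bijection between $\kwd{BI}(T_i)$ and $\kwd{BI}(T_{i+1})$ in which corresponding completions execute the same statement instances in the same order --- confluence then transfers directly from the induction hypothesis $P \simeq_z P_i'$. You instead re-run the value-level argument of Lemma~\ref{lem:compute-confluency} on the transformed nest, justified by your structural invariants (a)--(c). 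The paper's relational argument is lighter where it applies, since all value reasoning stays quarantined in Lemma~\ref{lem:compute-confluency} (proved once, for the lowered shape) and nothing about lets, ifs, or reindexed bounds needs rechecking; your direct route costs you exactly that recheck, and note that Lemma~\ref{lem:compute-confluency} as stated applies only to $\kwd{BI}(\mathcal{L}(P))$, so ``re-running'' it is really proving a mild generalization --- your (c) is even violated verbatim by $\varphi_{\mathrm{Shift}}$ and you must (and do) patch it with idempotence of pure-stage writes. On the other hand, your route pays off precisely where the paper's bijection breaks: for $\varphi_{\mathrm{Shift}}$ and $\varphi_{\mathrm{Round}}$ the query genuinely changes, the paper retreats to a one-line ``adjusted for the expanding compute bounds and identically recomputed points,'' and your explicit analysis of $\varphi_{\mathrm{Round}}$ --- the implication $\svec{e} \in {?^{\cpu}f} \implies \mathcal{B}_\cpu(e_0)$ going vacuous on tail iterations while the remaining conjuncts still pin reads inside producer allocation bounds --- is more detailed than what the paper provides. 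Your $\kwd{swap}$/$\kwd{traverse}$ argument (disjointness of distinct pure-point writes via Definition~\ref{def:separation}, preserved relative order of same-point writes) coincides with the paper's.
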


\begin{proof}[Proof of Lemma~\ref{lem:schedule-loop}]
Lemma \ref{lem:schedule-specialize} ensures that $T_{i}$ is confluent as long as we have not yet reached this phase.
So we may inductively assume confluence before issuing any such $s$.
As before, $s$ does not introduce assertions, so we need only assess output equivalence.
Each directive $s$ operates locally on loops in a single stage, so therefore we need only show that the transformation of this stage is observationally equivalent; i.e. the state of the buffers before and after the stage is the same within the compute bounds.

Suppose $s=\kwd{split}(\tup{\dots,v}, v_o, v_i, e^{\mathrm{fac}})$. 
We may check that the values $v$ ranges over are unchanged; this means that $\mathcal{B}(T_i)$ is logically equivalent to $\mathcal{B}(T_{i+1})$.
Since the scopes of holes have also not changed ($K(T_i)=K(T_{i+1})$), the sets of bounds query results are identical. Thus there is a bijection between programs $P_i'\in\kwd{BI}(T_i)$ and $P_{i+1}'\in\kwd{BI}(T_{i+1})$, s.t. corresponding holes have been filled with identical expressions.
These two programs execute the same statement instances in the same order; therefore the effect on buffers is identical.

The argument for the $\kwd{fuse}$ directive proceeds analogously, as do the arguments for the overcomputation and inward-shifting split strategies that apply only to single-stage (pure) funcs, which are adjusted for the expanding compute bounds and identically recomputed points, respectively.

For $\kwd{swap}$ and $\kwd{traverse}$, consider the loop nest for the stage of $f$.
Analogously to the proof of lemma \ref{lem:compute-confluency}, syntactic separation ensures that distinct iterations of pure loops access $f$ at disjoint sets of points.
Since $\kwd{split}$ and $\kwd{fuse}$ preserve the sets and order of accesses as they introduce new loops, two writes agree on their pure dimensions iff the pure loop variable values agree.

But this means that two assignments touch the same memory of $f$ only if the values of the pure loops are the same.
Since $\kwd{traverse}$ parallelizes over a single pure loop, no two tasks touch the same memory.
And $\kwd{swap}$ interchanges two pure loops, so the writes which do touch memory in common are not interchanged with respect to one another.
\end{proof}

\begin{lemma}[Compute phase is sound]\label{lem:compute-sound}
Let $P\in\kwd{Alg}$ be a valid algorithm and let $T_i \in\kwd{Tgt}^?$ be the result of lowering $P$ and applying scheduling directives up through this phase.
Let $s$ be a compute-at directive moving the func $f$ to some location $\ell_g$. Then $T_{i+1}=\mathcal{S}(s, T_i)$ is confluent with $P$.
\end{lemma}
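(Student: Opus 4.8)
The plan is to fix an input $z$ and verify the three disjuncts of confluence (Definition~\ref{def:confluence}) directly against the algorithm $P$, re-running the structural argument of Lemma~\ref{lem:compute-confluency} on the relocated loop nest rather than attempting a bijection on completions as in Lemma~\ref{lem:schedule-loop}. By the inductive hypothesis $T_i$ is confluent with $P$, and since $\kwd{compute-at}$ introduces no new assertions, the only genuinely new completions in $\kwd{BI}(T_{i+1})$ arise from the tighter, now $x$-dependent compute bounds for $f$. If $P(z)$ contains an error value in $R(z)$, or the chosen completion $P'_{i+1}$ fails an assertion check, we are immediately done, so we may assume neither holds; in particular bounds inference has succeeded with finite bounds and, by Lemma~\ref{lem:bounds-memory-safety}, $P'_{i+1}$ is memory safe.

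First I would re-establish the two structural invariants of \S\ref{sec:lowering}. The directive's side condition requires that the relocated $\kwd{label} f$ statement continue to dominate all external accesses to $f$, and the $\kwd{allocate}$ statement for $f$ is left at its original outer site, so it still dominates the now-nested computation; hence Lemma~\ref{lem:dominance} survives the transformation. Because the extraction function $\mathcal{B}$ places $\exists\,?^{\cpu}f$ at the $\kwd{label} f$ that was moved inside the $\kwd{for} x$ loop but leaves $\exists\,?^{\mem}f$ at the outer allocation, the allocation bounds remain a single $x$-independent interval covering every write, while the compute bounds are free to shrink per iteration of $x$ --- which is exactly the flexibility the directive is meant to grant.

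The heart of the proof is then the recomputation argument. I would show, by the same induction over funcs in dependence order used in Lemma~\ref{lem:compute-confluency}, that every invocation of the loop nest for $f$ (there is now one per iteration of $x$) writes values agreeing with $P$ throughout that invocation's compute bounds: each invocation reads only points inside the compute bounds of $f$'s dependencies, which are confluent with $P$ by the inductive structure and syntactic separation (Definition~\ref{def:separation}), and the assignment rule of $\mathcal{B}$ forces any write into the compute bounds to draw only from compute bounds. Since the algorithm assigns a single value to each point of $f$, distinct invocations whose compute bounds overlap must write \emph{identical} values, so the buffer contents within the compute bounds are independent of which recomputation wrote last. Combined with dominance, every read of $f$ therefore observes a $P$-consistent value. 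As $\kwd{compute-at}$ relocates only the internal producer $f$ and alters neither the output func's loop nest nor its compute bounds covering $R(z)$, the output func still reads only confluent values and hence agrees with $P$ on $R(z)$, giving $P \simeq_z P'_{i+1}$.

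The main obstacle is precisely this recomputation step: unlike the loop phase, $\kwd{compute-at}$ does not preserve the set of statement instances, so the clean bijection between $\kwd{BI}(T_i)$ and $\kwd{BI}(T_{i+1})$ exploited in Lemma~\ref{lem:schedule-loop} is unavailable. The argument must instead lean on the \emph{determinism} of the algorithm --- that every correct recomputation of a given point yields the same value --- together with dominance to guarantee that no read ever races ahead of the write that establishes its value.
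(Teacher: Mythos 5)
Your overall strategy is genuinely different from the paper's, and the difference matters. The paper's proof never re-analyzes the transformed loop nest at all: it first applies the $\kwd{compute-at}$ move to each \emph{completed} program $P_i'\in\kwd{BI}(T_i)$, observing that the resulting $P_i''$ computes \emph{all} points of $f$ ever needed on every iteration (it keeps the old, iteration-independent bounds) and still dominates all consumers, hence $P_i''$ computes the same values as $P_i'$ and is itself a member of $\kwd{BI}(T_{i+1})$. Any other completion $P_{i+1}'\in\kwd{BI}(T_{i+1})$, with genuinely tighter $x$-dependent bounds, is then related to some $P_i''$ by the narrowing order of Definition~\ref{def:narrowing} ($P_{i+1}'\leq P_i''$), and Lemma~\ref{lem:narrowing-match} yields that $P_{i+1}'$ computes identical values of $f$ on its narrower windows, which the extraction constraints of \S\ref{sec:bounds-extraction} guarantee are sufficient. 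Your proposal instead re-runs the structural induction of Lemma~\ref{lem:compute-confluency} on the relocated nest and discharges overlapping recomputation via determinism of the algorithm. Your closing instinct --- that the statement-instance bijection of Lemma~\ref{lem:schedule-loop} is unavailable and something like determinism plus dominance must carry the recomputation step --- is exactly right, and your determinism observation is in effect the content of Lemma~\ref{lem:narrowing-match}; the paper simply packages it semantically, at the level of executions of completions, rather than syntactically.

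That packaging is not cosmetic, and it is where your argument has a real gap: Lemma~\ref{lem:compute-confluency} is stated and proved only for \emph{initially lowered} programs $P'\in\kwd{BI}(\mathcal{L}(P))$, and its proof leans on the pristine shape of lowered nests (outer pure loops ranging over bounds holes, innermost rdom loops, a single guarded assignment). By the compute phase, $T_i$ has already passed through the loop phase: splits with $\varphi_{\mathrm{Guard}}$ or $\varphi_{\mathrm{Shift}}$ have introduced $\kwd{let}$ bindings and $\kwd{if}$ guards, $\kwd{fuse}$ has introduced $\lfloor\cdot\rfloor$ and $\%$ reindexings, and $\kwd{swap}$ has permuted the nest. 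So you cannot ``re-run the structural argument of Lemma~\ref{lem:compute-confluency} on the relocated loop nest'' as stated --- doing so rigorously would require re-proving the loop-phase soundness lemmas inside this proof, once per possible nest shape. The paper's narrowing argument avoids this entirely: it compares executions of two completed programs and inherits correctness of the (arbitrarily restructured) nest from the inductive hypothesis that $T_i$ is confluent, never inspecting the transformed syntax. The rest of your proposal --- the per-iteration scoping of $\exists\,?^{\cpu}f$ under the $\kwd{for} x$ quantifier, the allocation remaining at its outer site, and dominance surviving the move --- is accurate and matches the paper's framework; the fix is to replace your structural re-proof of $f$'s nest with a comparison to the maximal overcompute completion, which is precisely what Definition~\ref{def:narrowing} and Lemma~\ref{lem:narrowing-match} supply.
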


\begin{proof}[Proof of Lemma~\ref{lem:compute-sound}]
Let $P_i'\in\kwd{BI}(S_i)$ and let $\set{P_i''}$ be the set of programs resulting from applying $s$ to $P_i'$.
Let $P_{i+1}'\in\kwd{BI}(T_{i+1})$.
First observe that any $P_i''$ computes the same values as $P_i'$ because the $\kwd{compute}$ statement for $f$ computes \emph{all} of the points ever needed in $P_i''$ and dominates all of its consumer funcs.
This also implies that $P_i'' \in\kwd{BI}(T_{i+1})$.

Now suppose $P_{i+1}'$ is not one of the $P_i''$.
Then for each $z$, if $P(z)$ does not contain an error, then there is some $P_i'' \geq P_{i+1}'$ (see definition \ref{def:narrowing}).
By lemma \ref{lem:narrowing-match}, $P_{i+1}'$ computes the exact same values of $f$ on the narrower range, which bounds inference guarantees is sufficient for confluence.
\end{proof}

\begin{lemma}[Storage phase is sound]\label{lem:storage-sound}
Let $P\in\kwd{Alg}$ be a valid algorithm and let $T_i \in\kwd{Tgt}^?$ be the result of lowering $P$ and applying scheduling directives up through this phase.
Let $s$ be a store-at directive, then $T_{i+1}=\mathcal{S}(s, T_i)$ is confluent with $P$.
\end{lemma}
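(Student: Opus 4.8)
The plan is to follow the template of the compute phase proof (Lemma~\ref{lem:compute-sound}), reusing the narrowing order of Definition~\ref{def:narrowing} together with Lemma~\ref{lem:narrowing-match}. The two trivial cases are dispatched first: if $P(z)$ contains an error value, confluence holds immediately, and because a $\kwd{store-at}$ directive inserts no $\kwd{assert}$ statements (assertions appear only in the subsequent bounds phase), it remains only to establish the output equivalence $P \simeq_z P'_{i+1}$ for every completion $P'_{i+1} \in \kwd{BI}(T_{i+1})$ and every input $z$ with $P(z)$ error-free. By the induction hypothesis propagated through Lemma~\ref{lem:compute-sound}, $T_i$ is already confluent with $P$, so I may assume $P \simeq_z P'_i$ for each $P'_i \in \kwd{BI}(T_i)$.

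The central step is to connect the completions of $T_i$ and $T_{i+1}$. Given $P'_i \in \kwd{BI}(T_i)$, I would relocate its already-filled $\kwd{allocate}$ statement exactly as the directive prescribes, retaining the original outer allocation bounds, to obtain a program $P''_i$. Two facts must then be checked. First, $P''_i$ computes the same values of $f$ as $P'_i$: the directive's side condition requires that the moved $\kwd{allocate}$ statement still dominate every access to $f$, so the dominance invariant of Lemma~\ref{lem:dominance} is preserved, and within each lifetime of the (now loop-local) buffer every point of $f$ that is later read is also written. Consequently, reallocating---and thus reinitializing to $\varepsilon_{\mem}$---on each iteration of the enclosing loop discards only dead values, and no live value is lost. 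Second, $P''_i \in \kwd{BI}(T_{i+1})$: the extraction rule $\mathcal{B}(\kwd{allocate} f(\dots);s) = \exists\,?^{\mem}f : \mathcal{B}(s)$ now sits beneath the enclosing loop's universal quantifier, so the allocation hole may be filled loop-variably; the loop-invariant wide bound carried over from $P'_i$ is simply one admissible---if non-tight---such choice.

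It remains to cover an arbitrary $P'_{i+1} \in \kwd{BI}(T_{i+1})$, whose per-iteration allocation may be strictly narrower than the wide bound used by $P''_i$. For each error-free input $z$, the relocated wide allocation subsumes every per-iteration narrow allocation, so I would exhibit a witness $P''_i$ with $P'_{i+1} \le P''_i$ in the narrowing order. Lemma~\ref{lem:narrowing-match} then gives that $P'_{i+1}$ and $P''_i$ agree on the narrower buffer range; since bounds inference guarantees that this range still covers the compute bounds of $f$, every downstream read---and hence the output on $R(z)$---matches that of $P''_i$, which in turn equals $P'_i$ and thus $P$ on $R(z)$.

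The main obstacle I anticipate is the first half of the central step: arguing precisely that shrinking the buffer's lifetime destroys no value that is subsequently read. This is exactly the point at which the interaction with the already-applied compute phase must be made rigorous, verifying that within each allocation lifetime the computation of $f$ produces all of the points consumed during that lifetime, so that reinitializing the buffer at each lifetime boundary never severs a live producer--consumer dependence. Absent satisfiable bounds this is vacuous, so I may assume throughout that a completion of $T_{i+1}$ exists.
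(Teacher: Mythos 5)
Your proposal is correct in substance but takes a genuinely different---and considerably more thorough---route than the paper. The paper's entire proof of Lemma~\ref{lem:storage-sound} is two sentences: it observes that Lemma~\ref{lem:bounds-memory-safety} needed only dominance of the $\kwd{allocate}$ statement over all accesses to $f$, and that the directive preserves dominance by definition; it never explicitly engages with output equivalence, the rescoping of the $\exists\,?^{\mem}f$ quantifier under the loop's $\forall$, or the reinitialization of the buffer to $\varepsilon_{\mem}$ at each iteration. You instead transplant the compute-phase machinery: construct a wide relocated witness $P''_i$, show $P''_i \in \kwd{BI}(T_{i+1})$, and relate an arbitrary completion $P'_{i+1}$ to it via the narrowing order of Definition~\ref{def:narrowing} and Lemma~\ref{lem:narrowing-match}. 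What your approach buys is that it confronts the one real subtlety the paper's proof glosses over---that shrinking the allocation's lifetime discards values across iterations of the enclosing loop---and you correctly identify this as the crux. One refinement is needed there: your claim that ``within each lifetime every point of $f$ that is later read is also written'' is too strong. Reads in \emph{value} position are only constrained by $\mathcal{B}_{\mem}$ to lie in the allocation bounds, so (e.g., under a $\varphi_{\mathrm{Round}}$ overcompute) a consumer may legitimately read a never-written point and obtain $\varepsilon_{\mem}$; such reads can yield \emph{different} values in $P'_i$ (stale data from a prior iteration under the top-level allocation) versus $P''_i$ (fresh $\varepsilon_{\mem}$). The repair is the invariant already encoded in the assignment rule of $\mathcal{B}$: only reads feeding indexing, branching, or compute-bounds assignments must lie in $?^{\cpu}f$, and those \emph{are} written within the same lifetime (the directive's dominance side condition forces the $\kwd{label}\,f$ statement inside the allocation's scope); values outside the compute bounds never influence values inside them, hence never influence the output on $R(z)$. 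So you should weaken ``computes the same values of $f$'' to ``agrees on the compute bounds,'' which is all that confluence requires. With that adjustment your argument is sound, and arguably a better-justified proof than the one the paper records.
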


\begin{proof}[Proof of Lemma~\ref{lem:storage-sound}]
Lemma \ref{lem:bounds-memory-safety} required only dominance of the $\kwd{allocate}$ statement over all accesses to the associated func for correctness.
This is preserved by definition.
\end{proof}

\begin{lemma}[Bounds phase is sound]\label{lem:schedule-bounds}
Let $P\in\kwd{Alg}$ be a valid algorithm and let $T_i \in\kwd{Tgt}^?$ be the result of lowering $P$ and applying scheduling directives up through this phase.
Let $s$ be a scheduling directive in this phase, then $T_{i+1}=\mathcal{S}(s, T_i)$ is confluent with $P$.
\end{lemma}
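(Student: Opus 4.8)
The plan is to exploit the single structural feature that sets this phase apart: every directive here (\kwd{bound}, \kwd{bound-extent}, \kwd{align-bounds}) modifies the program in exactly one way, by splicing one or two $\kwd{assert}$ statements immediately inside the $\kwd{label} f$ block. Each inserted assertion constrains the \emph{existing} compute-bounds hole $?^{\cpu}f_x$ but introduces no new holes and touches no buffer. This is precisely the difference from all earlier phases, whose soundness proofs (e.g. Lemma~\ref{lem:schedule-specialize}) leaned on the \emph{absence} of assertions; here the assertions are exactly what let us discharge confluence through its assertion-failure clause.

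First I would set up the induction internal to the phase. By the storage-phase lemma (Lemma~\ref{lem:storage-sound}) together with any earlier bounds directives already applied, $T_i$ is confluent with $P$, and I take this as the inductive hypothesis; note that $T_i$ may itself already carry bounds-assertions, which is harmless. Then I would observe, using $\mathcal{B}(\kwd{assert} e) = e$ and the sequencing rule $\mathcal{B}(s_1;s_2) = \mathcal{B}(s_1)\land\mathcal{B}(s_2)$, that the extracted predicate for $T_{i+1}$ differs from that for $T_i$ only by conjoining the new assertion condition $e$ under the existential for $?^{\cpu}f$, and that the set of holes is unchanged. Consequently, any hole-filling $\Gamma$ witnessing a completion $P' = [\Gamma]T_{i+1} \in \kwd{BI}(T_{i+1})$ also satisfies $\mathcal{B}(T_i)$, and forces the runtime condition $[\Gamma]e$ to be true.

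Next I would fix an arbitrary $P' \in \kwd{BI}(T_{i+1})$ and input $z$ and case-split along Definition~\ref{def:confluence}. If $P(z)$ contains an error value in $R(z)$, we are done. Otherwise I examine the inserted assertions during $P'(z)$: if any fails, then $P'(z)$ fails an assertion check and confluence holds by that clause. If instead all pass, then by the [Assert-True] and [Nop] rules each $\kwd{assert}$ reduces to a no-op, so $P'(z)$ executes exactly as the completion $[\Gamma]T_i(z)$ does and produces identical buffer contents on $R(z)$. Since $\Gamma$ satisfies $\mathcal{B}(T_i)$, the inductive hypothesis yields $P \simeq_z [\Gamma]T_i$, and transporting this equivalence across the now-trivial assertions gives $P \simeq_z P'$, as required.

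The main obstacle I anticipate lies in the last step: making precise that a passing assertion in $P'$ identifies it with a \emph{confluent} completion of $T_i$ under the same $\Gamma$. Two points must be nailed down rather than waved through. First, that satisfaction of the conjoined predicate genuinely coincides with the runtime assertion passing, so that ``$\Gamma$ violates the bounds constraint'' and ``$P'(z)$ fails an assertion'' are the same event; this follows from $\mathcal{B}(\kwd{assert} e)=e$ but depends on the parameters being universally quantified in the query, so the chosen bound expression must satisfy $e$ for every $z$. Second, that confluence of $T_i$ is really a property of \emph{every} hole-filling satisfying $\mathcal{B}(T_i)$ with finite bounds, not merely of those a particular oracle returns, since the oracle for $T_i$ need not return the very $\Gamma$ produced for $T_{i+1}$. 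Both are consequences of how $\mathcal{B}$ is constructed and of the compute-bounds argument underlying Lemma~\ref{lem:compute-confluency}, but they are exactly the places where the proof must be stated carefully.
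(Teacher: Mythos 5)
Your proof is correct and takes essentially the same approach as the paper's: the paper likewise observes that bounds directives only add assertions whose sole effect is a possible transition to an error state, so any non-erroring execution of a completion of $T_{i+1}$ exactly matches that of a confluent completion of $T_i$, which the inductive hypothesis covers. Your elaboration via $\mathcal{B}(\kwd{assert} e) = e$ and the transfer of the hole-filling $\Gamma$ (noting the hole set is unchanged and the new query merely conjoins $e$) simply makes explicit the matching completion that the paper's terse proof leaves implicit.
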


\begin{proof}[Proof of Lemma~\ref{lem:schedule-bounds}]
The directives in this phase only mutate programs by adding assertions to them, the only side effect of which is to transition to an error state.
Thus in any non-erroring execution of any $P_{i}\in\kwd{BI}(S_i)$, there is some $P_{i}\in\kwd{BI}(T_i)$ whose behavior exactly matches $P_{i+1}$.
\end{proof}

\end{document}
\endinput